\title{The Flower Calculus}
\author{Pablo Donato}{LIX, École polytechnique, France}{pablo.donato@polytechnique.edu}{https://orcid.org/0000-0001-7883-6754}{}
\authorrunning{P.~Donato}
\keywords{deep inference, graphical calculi, existential graphs, intuitionistic logic, Kripke semantics, cut-elimination}
\definecolor{Dark Ruby Red}{HTML}{7c1b1e}
\definecolor{Dark Blue Sapphire}{HTML}{004c5c}
\definecolor{Dark Gamboge}{HTML}{be7c00}
\tikzstyle{neg text}=[fill=none, draw=none, text=nfg, tikzit fill={rgb,255: red,191; green,191; blue,191}, tikzit draw={rgb,255: red,191; green,191; blue,191}]
\tikzstyle{source}=[fill=hypo, draw=none, shape=circle, tikzit fill={rgb,255: red,0; green,123; blue,255}, minimum size=4pt, inner sep=0pt]
\tikzstyle{target}=[fill=conc, draw=none, shape=circle, tikzit fill={rgb,255: red,236; green,26; blue,26}, minimum size=4pt, inner sep=0pt]
\tikzstyle{westhook}=[fill=white, draw=black, shape=diamond, tikzit shape=circle, anchor=west, inner sep=1pt]
\tikzstyle{westpred}=[fill=none, draw=none, shape=circle, anchor=west]
\tikzstyle{binder}=[fill=black, draw=black, shape=circle, inner sep=0pt, minimum size=3pt]
\tikzstyle{scroll anchor}=[fill={rgb,255: red,255; green,111; blue,0}, draw={rgb,255: red,255; green,111; blue,0}, shape=circle, inner sep=0pt, minimum size=3pt]
\tikzstyle{southbox}=[fill=none, draw=black, shape=rectangle, anchor=south]
\tikzstyle{northbox}=[fill=none, draw=black, shape=rectangle, anchor=north]
\tikzstyle{eastbox}=[fill=none, draw=black, shape=rectangle, anchor=east]
\tikzstyle{westbox}=[fill=none, draw=black, shape=rectangle, anchor=west]
\tikzstyle{box}=[fill=none, draw=black, shape=rectangle]
\tikzstyle{pos area}=[-, fill=pbg, draw=black, tikzit fill=white, tikzit draw=black]
\tikzstyle{neg area}=[-, fill=nbg, draw=black, tikzit fill={rgb,255: red,191; green,191; blue,191}, tikzit draw=black]
\tikzstyle{pistil}=[-, tikzit fill={rgb,255: red,255; green,243; blue,68}, tikzit draw=black, fill=pistil, draw=black]
\tikzstyle{justif}=[draw={rgb,255: red,0; green,210; blue,0}, -{Computer Modern Rightarrow[length=1mm,width=2mm,line width=0.25mm]}]
\tikzstyle{teridentity}=[-, fill=black]
\tikzstyle{wire}=[-]
\tikzstyle{black arrow}=[->]
\newcommand{\tikzscale}[3]{
  \scalebox{#1}{
    \begingroup
      \tikzset{every picture/.style={scale=#2,baseline=(current bounding box.center)}}
      #3
    \endgroup}}
\renewcommand{\tikzfig}[3]{
  \tikzscale{#1}{#2}{\InputIfFileExists{./#3.tex}{}{{\color{red}\colorbox{pink}{missingfile : #3}}}}}
\newcommand{\stkfig}[2]{\tikzfig{#1}{0.5}{#2}}
\DeclareSymbolFont{xsymbols}{LS1}{stixscr}{m}{n}
\DeclareSymbolFont{xsymbols3}{LS1}{stixbb}{m}{n}
\DeclareSymbolFont{xsymbols4}{LS1}{stixbb}{m}{it}
\DeclareSymbolFont{xlargesymbols}{LS2}{stixex}{m}{n}
\DeclareSymbolFont{xarrows3}{LS2}{stixtt}{m}{n}
\DeclareSymbolFont{xintegrals}{LS2}{stixcal}{m}{n}
\DeclareMathSymbol{\csup}{\mathrel}{xsymbols4}{"9E}
\DeclareMathSymbol{\Coloneq}{\mathrel}{xsymbols3}{"C4}
\DeclareMathSymbol{\mdwhtcircle}{\mathord}{xarrows3}{"7D}
\DeclareMathSymbol{\smalltriangleright}{\mathord}{xsymbols}{"D0}
\DeclareMathSymbol{\coloneq}{\mathrel}{xintegrals}{"85}
\DeclareMathDelimiter{\lBrack}{\mathopen}{xlargesymbols}{"E0}{xlargesymbols}{"06}
\DeclareMathDelimiter{\rBrack}{\mathclose}{xlargesymbols}{"E1}{xlargesymbols}{"07}
\DeclareMathAlphabet{\mathcall}{OMS}{cmsy}{m}{n}
\definecolor{pos}{HTML}{C80000}
\definecolor{neg}{HTML}{0000C8}
\definecolor{hypo}{HTML}{007BFF}
\definecolor{conc}{HTML}{DC3545}
\definecolor{pbg}{HTML}{FFFFFF}
\definecolor{nbg}{HTML}{DFDFDE}
\definecolor{pfg}{HTML}{000000}
\definecolor{nfg}{HTML}{000000}
\definecolor{pistil}{HTML}{FFF344}
\definecolor{hyp}{HTML}{FFEBFF}
\definecolor{hypnum}{HTML}{980098}
\newcommand{\vrefl}[1]{\text{\raisebox{\depth}{\scalebox{-1}[-1]{$#1$}}}}
\NewDocumentCommand{\rnm}{mO{}tr}
  {{\small$
    \IfBooleanT{#3}{\overline}{
      #1
    }^{#2}$}}
\NewDocumentCommand{\R}{oO{}trtd}{
  \expandafter\prftree\expanded{
    \IfBooleanT{#4}{[d]}
    \IfValueT{#1}{[r]{
      \unexpanded{\rnm{#1}}
      [#2]
      \IfBooleanT{#3}{r}
    }}
  }
}
\newcommand{\rsf}[1]{\text{\small$\mathsf{#1}$}}
\newcommand{\nats}{\mathbb{N}}
\DeclareRobustCommand{\sys}[1]{$\mathsf{#1}$}
\newcommand{\da}{\downarrow}
\newcommand{\ua}{\uparrow}
\newcommand{\sep}{\;\!;\!\;}
\newcommand{\defeq}{\triangleq}
\newcommand{\deq}{\coloneq}
\knowledgenewrobustcmd\step[1]{\mathrel{\cmdkl{\rightarrow_{#1}}}}
\knowledgenewrobustcmd\steps[1]{\mathrel{\cmdkl{\rightarrow^{\ast}_{#1}}}}
\newrobustcmd\notstep[1]{\withkl{\kl[\step]}{\mathrel{\cmdkl{\nrightarrow_{#1}}}}}
\newrobustcmd\notsteps[1]{\withkl{\kl[\steps]}{\mathrel{\cmdkl{\nrightarrow^{\ast}_{#1}}}}}
\knowledgenewrobustcmd\nsteps[2]{\mathrel{\cmdkl{\rightarrow^{#1}_{#2}}}}
\knowledgenewrobustcmd\lstep{\mathrel{\cmdkl{\rightharpoonup_{\Nature}}}}
\newcommand{\xstep}[1]{\xrightarrow{#1}}
\knowledgenewrobustcmd\bv{\cmdkl{\mathsf{bv}}}
\knowledgenewrobustcmd\fv{\cmdkl{\mathsf{fv}}}
\renewcommand{\emptyset}{\varnothing}
\newcommand{\subst}[3]{#1 \{ #2 / #3 \}}
\newcommand{\compr}[2]{\left\{#1 ~\middle|~ #2\right\}}
\knowledgenewrobustcmd\fset[3]{\cmdkl{\{}#3\cmdkl{\}_{#1}^{#2}}}
\knowledgenewrobustcmd\hole{\cmdkl{\square}}
\knowledgenewrobustcmd\cfill[2]{#1\cmdkl{\{}#2\cmdkl{\}}}
\knowledgenewrobustcmd{\inv}{\cmdkl{\mathsf{inv}}}
\knowledgenewrobustcmd{\binv}{\cmdkl{\mathsf{inv}}}
\newcommand\ctx[1]{{\hat{#1}}}
\newcommand{\cphi}{\ctx{\phi}}
\newcommand{\cpsi}{\ctx{\psi}}
\newcommand{\cxi}{\ctx{\xi}}
\newcommand{\cPhi}{\ctx{\Phi}}
\newcommand{\cPsi}{\ctx{\Psi}}
\newcommand{\cXi}{\ctx{\Xi}}
\newcommand{\cPhiP}{\ctx{\Phi}^{+}}
\newcommand{\cXiP}{\ctx{\Xi}^{+}}
\newcommand{\cPhiN}{\ctx{\Phi}^{-}}
\newcommand{\cXiN}{\ctx{\Xi}^{-}}
\newcommand{\cg}{\ctx{g}}
\newcommand{\ch}{\ctx{h}}
\newcommand{\ck}{\ctx{k}}
\newcommand{\cG}{\ctx{G}}
\newcommand{\cH}{\ctx{H}}
\newcommand{\cK}{\ctx{K}}
\newcommand{\cKP}{\ctx{K}^{+}}
\newcommand{\cKN}{\ctx{K}^{-}}
\newcommand{\back}{\smalltriangleright}
\newcommand{\forw}{\mdwhtcircle}
\newcommand{\HOne}{{\color{hypnum}\ding{172}}}
\newcommand{\HTwo}{{\color{hypnum}\ding{173}}}
\newcommand{\HThree}{{\color{hypnum}\ding{174}}}
\newcommand{\HFour}{{\color{hypnum}\ding{175}}}
\newcommand{\HFive}{{\color{hypnum}\ding{176}}}
\newcommand{\Hyp}[1]{\colorbox{hyp}{#1}}
\knowledgenewrobustcmd\True{\cmdkl{\top}}
\knowledgenewrobustcmd\False{\cmdkl{\bot}}
\knowledgenewrobustcmd\Neg{\mathop{\cmdkl{\neg}}}
\knowledgenewrobustcmd\Conj{\mathbin{\cmdkl{\land}}}
\knowledgenewrobustcmd\Disj{\mathbin{\cmdkl{\lor}}}
\knowledgenewrobustcmd\Impl{\mathbin{\cmdkl{\supset}}}
\knowledgenewrobustcmd\SA{\cmdkl{\textsc{\small SA}}}
\newcommand*\pcut[1]{\tikz[baseline=(char.base)]{
            \node[shape=circle,draw,inner sep=2pt] (char) {$#1$};}}
\knowledgenewrobustcmd\bnodes{\cmdkl{\boldsymbol{\mathrm{N}}_{\boldsymbol{\beta}}}}
\knowledgenewrobustcmd\bgardens{\cmdkl{\boldsymbol{\Gamma}_{\boldsymbol{\beta}}}}
\knowledgenewrobustcmd\bgraphs{\cmdkl{\boldsymbol{\mathrm{G}}_{\boldsymbol{\beta}}}}
\knowledgenewrobustcmd\bcut[1]{\cmdkl{[}#1\cmdkl{]}}
\knowledgenewrobustcmd\bgarden[2]{#1 \mathbin{\cmdkl{\cdot}} #2}
\knowledgenewrobustcmd\entails[1]{\mathrel{\cmdkl{\vdash_{#1}}}}
\newrobustcmd\nentails[1]{\withkl{\kl[\entails]}{\mathrel{\cmdkl{\nvdash_{#1}}}}}
\knowledgenewrobustcmd\Nature{\cmdkl{\text{\ding{96}}}}
\knowledgenewrobustcmd\Culture{\cmdkl{\text{\ding{34}}}}
\knowledgenewrobustcmd\flower[2]{#1 \mathrel{\cmdkl{\csup}} #2}
\knowledgenewrobustcmd\garden[2]{#1 \mathbin{\cmdkl{\cdot}} #2}
\knowledgenewrobustcmd\flowers{\cmdkl{\mathbb{F}}}
\knowledgenewrobustcmd\gardens{\cmdkl{\mathbb{G}}}
\knowledgenewrobustcmd\interp[1]{\cmdkl{\lBrack}#1\cmdkl{\rBrack}}
\newcommand{\toform}[1]{\lfloor #1 \rfloor}
\knowledgenewrobustcmd\chyp[2]{#1 \mathrel{\cmdkl{\succ}} #2}
\newcommand{\pissheet}[1]{\colorbox{pistil}{$#1$}}
\newcommand{\bx}{\mathbf{x}}
\newcommand{\by}{\mathbf{y}}
\newcommand{\bz}{\mathbf{z}}
\knowledgenewrobustcmd\fdepth[1]{\cmdkl{|}#1\cmdkl{|}}
\knowledgenewrobustcmd\cdepth[1]{\cmdkl{|}#1\cmdkl{|}}
\knowledgenewrobustcmd\vars{\cmdkl{\mathscr{V}}}
\knowledgenewrobustcmd\psymbs{\cmdkl{\mathscr{P}}}
\knowledgenewrobustcmd\arity{\cmdkl{\mathsf{ar}}}
\knowledgenewrobustcmd\dom{\cmdkl{\mathsf{supp}}}
\knowledgenewrobustcmd\upd[1]{\mathbin{\cmdkl{|_{#1}}}}
\newcommand{\update}[3]{#1 \upd{#2} #3}
\newcommand{\restr}[2]{#1\!\upd{#2}\!}
\newcommand{\idsubst}{\mathsf{1}}
\newcommand{\tvec}[1]{\vec{\mathbf{#1}}}
\knowledgenewrobustcmd\rewolF[1]{\cmdkl{\text{\ding{95}}}(#1)}
\knowledgenewrobustcmd\kentails[1]{\mathrel{\cmdkl{\vDash_{#1}}}}
\newrobustcmd\notkentails[1]{\withkl{\kl[\kentails]}{\mathrel{\cmdkl{\nvDash_{#1}}}}}
\knowledgenewrobustcmd\kequiv[1]{\mathrel{\cmdkl{\vrefl{{\vDash}} {\vDash}_{#1}}}}
\knowledgenewrobustcmd\evaluation[1]{\cmdkl{$#1$-evaluation}}
\knowledgenewrobustcmd\consistent[1]{\cmdkl{$#1$-consistent}}
\knowledgenewrobustcmd\complete[1]{\cmdkl{$#1$-complete}}
\newrobustcmd\consistency[1]{\withkl{\kl[\consistent]}{\cmdkl{$#1$-consistency}}}
\newrobustcmd\completeness[1]{\withkl{\kl[\complete]}{\cmdkl{$#1$-completeness}}}
\knowledgenewrobustcmd\eforces[3]{#1\mathrel{\cmdkl{\Vdash}}#2\,\cmdkl{[}#3\cmdkl{]}}
\newrobustcmd\neforces[3]{\withkl{\kl[\eforces]}{#1\mathrel{\cmdkl{\nVdash}}#2\,\cmdkl{[}#3\cmdkl{]}}}
\knowledgenewrobustcmd\access{\mathrel{\cmdkl{\leq}}}
\newrobustcmd\revaccess{\withkl{\kl[\access]}{\mathrel{\cmdkl{\geq}}}}
\knowledgenewrobustcmd\completion[1]{\cmdkl{\mathsf{Com}}(#1)}
\knowledgenewrobustcmd\ncompletion[2]{\cmdkl{\mathsf{Com}^{#2}}(#1)}
\newcommand{\tT}{\mathcall{T}}
\newcommand{\tU}{\mathcall{U}}
\begin{document}

\maketitle

\begin{abstract}
We introduce the flower calculus, a deep inference proof system for
intuitionistic first-order logic inspired by Peirce's existential graphs. It
works as a rewriting system over inductive objects called ``flowers'', that
enjoy both a graphical interpretation as topological diagrams, and a textual
presentation as nested sequents akin to coherent formulas. Importantly, the
calculus dispenses completely with the traditional notion of symbolic
connective, operating solely on nested flowers containing atomic predicates. We
prove both the soundness of the full calculus and the completeness of an
analytic fragment with respect to Kripke semantics. This provides to our
knowledge the first analyticity result for a proof system based on existential
graphs, adapting semantic cut-elimination techniques to a deep inference
setting. Furthermore, the kernel of rules targetted by completeness is fully
invertible, a desirable property for both automated and interactive proof
search.

\end{abstract}

\section{Introduction}



\subparagraph*{Graphical proof building}



\AP
Proof assistants --- also called \intro{interactive theorem provers}
(\reintro{ITPs}) --- provide a set of tools to ease the process of formalizing
mathematical developments. This includes languages to specify definitions and
statements conveniently, but also interfaces to build proofs interactively
without having to fill in all the details. The dominant paradigm for these
interfaces is that of \emph{tactic languages} \cite{doi:10.1098/rsta.1984.0067}:
the user is exposed with a set of \intro{goals} that remain to be proved,
constituting the \intro{proof state}, and modifies these goals through textual
commands, called \intro{tactics}, until there is no goal left. This is currently
what is implemented in mainstream proof assistants such as Coq
\cite{the_coq_development_team_2022_7313584} and Lean
\cite{10.1007/978-3-030-79876-5_37}.


\AP
In recent years, there have been several efforts to replace or complement
textual tactic languages with \intro{graphical user interfaces} (\reintro{GUIs})
\cite{ritchie_design_1988,PbP,langbacka-tkwinhol-1995,Chaudhuri2013,edukera,linker_tactical_2017,zhan-design-2019,ayers_graphical_2021}.
The hope is to make proof assistants more intuitive and accessible to beginners
and non-specialists, but also, to some extent, more productive and ergonomic
even for experts. 

\AP
The initial motivation for this work was to design a proof calculus well-suited
to \emph{direct manipulation} in such a graphical setting. The idea is that the
user should be able to interact directly with the graphical representation of
the \kl{proof state}, using a pointing device such as a mouse or fingers on a
touch screen. In previous work \cite{10.1145/3497775.3503692}, we proposed a way
to synthesize complex logical inferences through \emph{drag-and-drop} actions
between two formulas of the current goal/sequent, based on the \intro{subformula linking}
(\reintro{SFL}) methodology \cite{Chaudhuri2013,DBLP:conf/cade/Chaudhuri21}.

\subparagraph*{Diagrammatic reasoning}

\AP
In this work, we show that (single-conclusion) sequents and symbolic formulas
built from binary connectives and unary quantifiers are not mandatory for
representing the \kl{proof state}. Other authors have defended the idea of using
\emph{diagrams} as a more user-friendly frontend for \kl{ITPs}. In particular,
Linker et al.\ showed how to integrate tactic-based automation in an \kl{ITP}
based on \emph{spider diagrams} \cite{linker_tactical_2017}, which are
equivalent in expressive power to classical monadic \intro{first-order logic}
(\reintro{FOL}) \cite{howse_spider_2005}.

\AP
We introduce a new data structure for \kl{goals} inspired by an earlier
invention in the history of diagrammatic logic: the \intro{existential graphs}
(\reintro{EGs}) of C. S. Peirce \cite{Roberts+1973}. We noticed that our structure
could be drawn and manipulated metaphorically in the form of nested
\emph{flowers}, and thus chose to name \intro{flower calculus} the proof system
for full intuitionistic \kl{FOL} that we built around it. Our focus in this paper
will be to introduce the \kl{flower calculus} to readers unfamiliar with
\kl{EGs}, and to study its fundamental properties through the lens of modern
\emph{structural proof theory}.

\subparagraph*{Implementation}

We have formalized in Coq a bidirectional simulation between the \kl{flower
calculus} and cut-free sequent calculus, yielding a soundness theorem and a
\emph{weak} completeness theorem for an \kl{analytic} fragment of the \kl{flower
calculus} \cite{flowers-metatheory}. In this paper, we follow a \emph{semantic}
rather than syntactic approach, avoiding translations to and from symbolic
formulas to obtain a stronger completeness result.

\AP
While currently at an early stage, we are also developing the \intro{Flower
Prover}, a prototype of direct-manipulation GUI for \kl{ITPs} based on the
\kl{flower calculus} \cite{flower-prover}. The interested reader can try a
publicly available version of the prototype
online\footnote{\url{https://www.lix.polytechnique.fr/Labo/Pablo.DONATO/flowerprover/}}.
We leave a detailed account of the \kl{Flower Prover} and its connection to the
\kl{flower calculus} for future work.

\subparagraph*{Outline}

The article is organized as follows: in Section~\ref{sec:EGs} we give a brief
overview of the original diagrammatic syntax of \kl{EGs} used by Peirce in his
system \kl{Alpha} for classical propositional logic. In
Section~\ref{sec:Flowers} we retrace the origin of an intuitionistic variant of
\kl{EGs} first introduced by Oostra in \cite{oostra_graficos_2010}, that
directly inspired our flower metaphor. In Section~\ref{sec:Quantifiers} we
illustrate quickly the original mechanism of \kl{lines of identity} used by
Peirce to express first-order quantifiers in his \kl{Beta} system, and show how
to recast it in a more traditional binder-based syntax. In
Section~\ref{sec:Syntax} we introduce our inductive syntax for \kl{flowers}, and
in Section~\ref{sec:Calculus} we give the full set of inference rules of the
\kl{flower calculus} as well as our notion of proof. In
Section~\ref{sec:Semantics} we give a direct Kripke semantics to \kl{flowers},
and in Section~\ref{sec:Completeness} we show that a restricted fragment of
\kl{analytic} and invertible rules is complete with respect to the semantics.
Finally we conclude in Section~\ref{sec:Conclusion} by a comparison with some
related works.

\begin{note*}
  The proof of soundness of the \kl{flower calculus} is given in Appendix
  \ref{app:Soundness}. Contrary to the completeness proof, it is mostly routine
  work that does not require much insight. Detailed proofs for the deduction and
  completeness theorems are given respectively in Appendix
  \ref{app:Proofs-deduction} and Appendix \ref{app:Proofs-completeness}. Readers
  already familiar with \kl{EGs} can find a detailed comparison of the rules of
  the \kl{flower calculus} with Peirce's \kl{illative transformations} in
  Appendix \ref{app:Comparison}.
\end{note*}

\newpage

\section{Existential graphs}\label{sec:EGs}

\AP
Peirce designed in total three systems of \kl{EGs}, which he called respectively
\intro{Alpha}, \intro{Beta} and \intro{Gamma}. They were invented
chronologically in that order, which also captures their relationship in terms
of complexity: \kl{Alpha} is the foundation on which the other systems are
built, and can today be understood as a diagrammatic calculus for classical
\emph{propositional} logic. As we will see in Section~\ref{sec:Quantifiers},
\kl{Beta} corresponds to a variable-free representation of \emph{first-order}
logic without function symbols. The last system \kl{Gamma} is more
experimental, with various unfinished features that have been interpreted as
attempts to capture \emph{modal} \cite{zeman_graphical_1964} and
\emph{higher-order} logics.

\subparagraph*{Sheet of Assertion}

\AP
The most fundamental concept of \kl{Alpha} is the \intro{sheet of assertion},
denoted by $\intro*\SA$ thereafter. It is the space where statements are scribed
by the reasoner, typically a sheet of paper, a blackboard, or a computer
display. As its name indicates, scribing a statement on $\SA$ amounts to
\emph{asserting its truth}. Thus naturally, the empty $\SA$ where nothing is
scribed will denote \intro{vacuous truth}, traditionally signified by the symbol
$\intro*\True$.

\subparagraph*{Juxtaposition}

\AP
As we know from natural deduction, asserting the truth of the
\intro{conjunction} $a \intro*\Conj b$ of two propositions $a$ and $b$, amounts
to asserting \emph{both} the truth of $a$ and the truth of $b$. In \kl{Alpha},
there is no need to introduce the symbolic connective $\Conj$, since one can
just write both $a$ and $b$ at distinct locations on $\SA$:
\begin{mathpar}
  a~~~b
\end{mathpar}
More generally, one might consider any two portions $G$ and $H$ of $\SA$, and
interpret their \intro{juxtaposition} $G~H$ as signifying that we assert the
truth of their \kl{conjunction}.

\subparagraph*{Cuts}

\AP
Asserting the truth of the \intro{negation} $\intro*\Neg a$ of a proposition
$a$, amounts to \emph{denying} the truth of $a$. This is done in \kl{Alpha} by
\emph{enclosing} $a$ in a closed curve like so:
\begin{mathpar}
  \pcut{a}
\end{mathpar}
Peirce called such curves \intro{cuts}\footnote{Not to be confused with the name
given to instances of the \emph{cut rule} in sequent calculus.}, because they
ought to be seen as literal cuts in the paper sheet that embodies $\SA$. Note
that they do not need to be circles: all that matters is that $a$ is in a
separate area from the rest of $\SA$. This is precisely the content of the
\emph{Jordan curve theorem} in topology, and thus we can take \kl{cuts} to be
arbitrary Jordan curves. This entails in particular that \kl{cuts} cannot
intersect each other, but can be freely nested. Then as for \kl{juxtaposition},
one can replace the proposition $a$ in the interior of the \kl{cut} by any
\intro{graph} $G$ --- i.e. any portion of $\SA$ --- as long as the \kl{cut} does
not intersect other \kl{cuts} in $G$.

\subparagraph*{Relationship with formulas}

With just these two \emph{icons}, \kl{juxtaposition} and \kl{cuts}, one can
therefore assert the truth of any proposition made up of \kl{conjunctions} and
\kl{negations} and built from atomic propositions. Importantly, the only symbols
needed for doing so are letters $a, b, c\ldots$ denoting atomic propositions,
that is ``pure'' symbols that do not have any logical meaning associated to
them.

\AP
Now, it is well-known that $\{\Conj,\Neg\}$ is \emph{functionally complete},
meaning that any boolean truth function can be expressed as the composition of
\kl{conjunctions} and \kl{negations}. In particular, the symbolic definitions of
\intro{falsehood} $\intro*\False \defeq \Neg\True$, classical
\intro{disjunction} $A \intro*\Disj B \defeq \Neg(\Neg A \Conj \Neg B)$ and
classical \intro{implication} $A \intro*\Impl B \defeq \Neg (A \Conj \Neg B)$
can be expressed by the following three \kl{graphs}:
\begin{mathpar}
  \pcut{\phantom{A}}
  \and
  \pcut{\pcut{A}~~~\pcut{B}}
  \and
  \pcut{A~~~\pcut{B}}
\end{mathpar}
Thus one can easily encode any propositional formula into a classically
equivalent \kl{graph}. Conversely, one can translate any \kl{graph} into a
classically equivalent formula, as has been shown for instance in
\cite{10.7551/mitpress/3633.001.0001}. In fact, there are usually many possible
formula readings of a given \kl{graph}. One reason is that \kl{juxtaposition} of
\kl{graphs} is a \emph{variadic} operation, as opposed to \kl{conjunction} of
formulas which is \emph{dyadic}: thus formulas that only differ up to
\emph{associativity} are associated to the same \kl{graph}. Also, thanks to the
topological nature of $\SA$, \kl{juxtaposition} is naturally \emph{commutative}:
the locations of two juxtaposed \kl{graphs} do not matter, as long as they live in
the same area delimited by a \kl{cut}. The combination of these properties is
called the \intro{isotropy} of $\SA$ in \cite{minghui_graphical_2019}, and is
captured in traditional proof theory through the use of \emph{(multi)sets} for
modelling contexts in sequents.


\subparagraph*{Illative transformations}

\AP
In order to have a proof system, one needs a collection of \emph{inference
rules} for deducing true statements from other true statements. In \kl{Alpha},
inference rules are implemented by what Peirce called \intro{illative
transformations} on \kl{graphs}. In modern terminology, they correspond to
\emph{rewriting} rules that can be applied to any \kl{subgraph}. By measuring
the depth of a \kl{subgraph} as the number of \kl{cuts} in which it is enclosed, we
thus have that the rules of \kl{Alpha} are applicable on \kl{subgraphs} of arbitrary
depth. This makes \kl{Alpha} deserving of the title of \emph{deep inference}
system.

\begin{figure}
  \begin{mathpar}
    \pcut{\pcut{a}~~~\pcut{\pcut{a}}}
    ~\xstep{\kl{Iteration}}~
    \pcut{\pcut{a}~~~\pcut{\phantom{\pcut{a}}}}
    ~\xstep{\kl{Insertion}}~
    \pcut{\phantom{\pcut{a}}~~~\pcut{\phantom{\pcut{a}}}}
    ~\xstep{\kl{Double{-}cut}}~
  \end{mathpar}
  \caption{Proof of the law of excluded middle in \kl{Alpha}}
  \label{fig:LEM}
\end{figure}

\AP
Figure \ref{fig:LEM} shows a proof of the law of excluded middle $a \Disj \Neg
a$ in \kl{Alpha}. The first step consists in applying the \kl{illative
transformation} of \intro{Iteration} to erase the \kl{subgraph} $\pcut{a}$. More
generally, \kl{Iteration} allows to erase any \kl{subgraph} $G$ as long as $G$
already occurs ``higher'' in $\SA$, i.e. in an area that encloses the erased
occurrence of $G$. The second step of \intro{Insertion} allows to erase the
other occurrence of $\pcut{a}$ because it is scribed in a \emph{negative} area,
i.e. an area enclosed in an \emph{odd} number of \kl{cuts} --- 1 in this
case\footnote{It might be quite confusing that we call ``Insertion'' a
transformation that \emph{erases} information. This is because we use Peirce's
original terminology, despite the fact that we adopt a \emph{backward} reading
of rules where the conclusion that we want to prove is reduced to a sufficient
premiss.}. The last step of \intro{Double{-}cut} allows to \emph{collapse} the
two remaining \kl{cuts}, because there is nothing but empty space in between
them. This leaves us with the empty $\SA$, having thus reduced the initial goal
to trivial truth.

\section{Flowers}\label{sec:Flowers}

\subparagraph*{The scroll}

\AP
In \cite[pp.~533--535]{peirce_prolegomena_1906}, Peirce explains that he did not
immediately come up with the idea of \kl{juxtaposition} and \kl{cuts} as
diagrammatizations of \kl{conjunction} and \kl{negation}. Instead, they arose as
the natural development of a more primitive icon that he called the
\intro{scroll}. Figure \ref{fig:scroll} shows Peirce's drawing of the
\kl{scroll} as it appears in \cite[Fig.~5]{peirce_prolegomena_1906}. He defines
its intended meaning as that of a ``conditional de inesse'', which corresponds
to the material \kl{implication} of classical logic. Then the \kl{graph} of Figure
\ref{fig:scroll} is interpreted as the formula $(A \Conj B) \Impl (C \Conj D)$.
This agrees with the encoding of \kl{implication} given in
Section~\ref{sec:EGs}, if one sees the outer boundary enclosing the antecedent
$A~B$ and the inner boundary enclosing the consequent $C~D$ as nested \kl{cuts}.

It is no coincidence that Peirce based his most fundamental icon on
\kl{implication}: according to Lewis \cite[p.~79]{Lewis1920-LEWASO-4}, he was the one
who introduced the ``illative relation'' of \kl{implication} into symbolic logic in
the first place, by giving it a distinguished symbol and studying extensively
the algebraic laws that govern it (e.g.\ Peirce's law $((A \Impl B) \Impl A)
\Impl A$).

\begin{figure}
  \captionsetup[subfigure]{justification=centering}
  \centering
  \begin{subfigure}[b]{0.22\textwidth}
    \centering
    \includegraphics[width=0.85\textwidth]{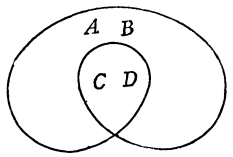}
    \caption{Peirce's \kl{scroll}}
    \label{fig:scroll}
  \end{subfigure}
  \begin{subfigure}[b]{0.24\textwidth}
    \centering
    \tikzfig{1}{0.4}{five-loops}
    \caption{Oostra's \kl{curl}}
    \label{fig:five-loops}
  \end{subfigure}
  \begin{subfigure}[b]{0.26\textwidth}
    \centering
    \tikzfig{1}{0.35}{scroll-inside-out}
    \caption{Inside-out \kl{curl}}
    \label{fig:scroll-inside-out}
  \end{subfigure}
  \begin{subfigure}[b]{0.22\textwidth}
    \centering
    \tikzfig{1}{1}{five-flower}
    \caption{\kl(informal){Flower}}
    \label{fig:five-flower}
  \end{subfigure}
  \caption{From \kl{scrolls} to \kl(informal){flowers}}
\end{figure}

\subparagraph*{The $n$-ary scroll}

\AP In order to interpret the \kl{scroll} as an \emph{intuitionistic}
\kl{implication}, Oostra proposed in \cite{oostra_graficos_2010} to reify the
\kl{scroll} as a primitive icon of \kl{EGs}, distinguished from the nesting of
two \kl{cuts}. In fact he went further, by generalizing both the \kl{cut} and
the \kl{scroll} into an $n$-ary construction called the \intro{curl}, where $n$
is the number of inner boundaries, called \emph{loops}.
Figure~\ref{fig:five-loops} shows an example of \kl{curl} with five loops, where
the unique intersection points between inner and outer boundaries are
highlighted in \emph{orange}\footnote{We also shade the negative area delimited
by the outer boundary in \emph{gray}.}. In \cite{minghui_graphical_2019}, the
\kl{curl} is simply called \emph{$n$-ary \kl{scroll}}, the outer boundary
\intro{outloop}, and the inner boundaries \intro{inloops}. Then \kl{cuts} and
\kl{scrolls} are indeed special cases of $n$-ary \kl{scrolls}, respectively with
$n = 0$ and $n = 1$.

Like the unary \kl{scroll}, the $n$-ary \kl{scroll} is to be read as an
\kl{implication} whose antecedent is the content of the \kl{outloop}, and consequent
the content of the \kl{inloops}. The generalization consists in taking the
\emph{\kl{disjunction}} of the contents of all \kl{inloops}: this reflects nicely the
etymological meaning of the word ``disjunction'', since the \kl{inloops} enclose
\emph{disjoint} areas of the \kl{outloop} to which they are attached. Then the
$5$-ary \kl{scroll} of Figure \ref{fig:five-loops} can be read as the formula $a
\Impl (b \Disj c \Disj d \Disj e \Disj f)$; and the $0$-ary \kl{scroll} obtained
by removing all \kl{inloops} from the latter as $a \Impl \False$, since a $0$-ary
\kl{disjunction} is naturally evaluated to its neutral element $\False$. This
coincides with the intuitionistic reading of \kl{negation} $\Neg A \defeq A \Impl
\False$.

\subparagraph*{Continuity}

With this interpretation of the $n$-ary \kl{scroll}, the \kl{Alpha} encodings of
\kl{disjunction} and \kl{implication} as nested \kl{cuts} given in
Section~\ref{sec:EGs} are no longer valid, because they are not
intuitionistically equivalent to the associated binary and unary \kl{scrolls}.
This is illustrated in Figure \ref{fig:eg-disj-imp}, where the closeness in
meaning is reflected iconically (but not symbolically) in the fact that the
\kl{graphs} only differ in the \emph{continuity} (or lack thereof) between \kl{inloops}
and their \kl{outloop}.

\begin{remark}
This might be related to other manifestations of the notion of continuity in the
semantics of intuitionistic logic, such as the well-known Stone-Tarski
interpretation of formulas as topological spaces \cite{stone_topological_1938},
and the interpretation of proofs as continuous maps in the \emph{denotational
semantics} of Dana Scott\footnote{Before the advent of Oostra's intuitionistic
\kl{EGs}, Zalamea gave a detailed analysis of Peirce's philosophy of the
\emph{continuum}, how it relates to modern developments in mathematics, and how
it is embodied in \kl{EGs}~\cite{zalamea_peirces_2003}. Actually according to
Oostra \cite[p.~162]{oostra_advances_2022}, ``the possibility of developing
intuitionistic \kl{existential graphs} was first suggested by Zalamea in the
1990s \cite{zalamea_ieg_1,zalamea_ieg_2}''.} \cite{10.5555/218742.218744}.
\end{remark}

\begin{figure}
  \input{eg-disj-imp.tex}
  \caption{Continuity, \kl{disjunction} and \kl{implication} in intuitionistic \kl{EGs}}
  \label{fig:eg-disj-imp}
\end{figure}

\subparagraph*{Blooming}

\AP
In terms of ergonomy, the $n$-ary \kl{scroll} has one notable flaw, also shared
with the classical \kl{cut}-based syntax: it quickly induces heavy nestings of
curves in the plane, making even relatively simple \kl{graphs} hard to read for an
untrained eye.
Our solution is to turn \kl{inloops} \emph{inside-out}, as illustrated in Figure
\ref{fig:scroll-inside-out}. In this way, we effectively divide the amount of
curve-nesting in \kl{scrolls} by two. And as an added bonus, the new icon is
reminiscent of a \intro(informal){flower}, as if it had bloomed from its
curled bud; or as if the pistol cylinder from Figure \ref{fig:five-loops} had
transformed into a \emph{pistil}, and its bullet chambers into \emph{petals}.

From that point onwards, we chose to fully embrace the flower metaphor: first in
our drawing style as witnessed in Figure \ref{fig:five-flower}, but also in our
syntactic terminology, to be introduced in the next pages. Negative (resp.
positive) \kl{outloops} are now drawn as \emph{yellow} (resp. \emph{white})
pistils for a slightly more colorful experience, and \kl{inloops} as transparent
petals, i.e. of the same color as the area on which they are scribed.

\section{Gardens}\label{sec:Quantifiers}

\subparagraph*{Lines of identity}

\AP To express first-order quantification, Peirce introduced in \kl{Beta} the
icon of \intro{lines of identity} (\reintro{LoIs}). In short, the usual binders
and variables of predicate calculus are replaced by \emph{lines} that connect
the occurrences of bound variables in predicate arguments to their binding
point. For instance, the formulas $\exists x. P(x) \Conj Q(x)$ and $\forall x.
R(x) \Impl S(x)$ can be represented in \kl{Beta} by the \kl{graphs} of Figure
\ref{fig:loi}.

The kind of quantification is determined by the location of the binding point,
which is taken to be the \emph{outermost} point in the line: if it is in a
\emph{positive} area as in the upper \kl{graph}, then the quantifier is
\emph{existential}; otherwise if it is in a \emph{negative} area as in the lower
\kl{graph}, the quantifier is \emph{universal}. This is justified by De Morgan's
laws: the lower \kl{graph} can also be read as the classically equivalent formula
$\Neg \exists x. R(x) \Conj \Neg S(x)$.

\subparagraph*{Intuitionistic quantification}

In intuitionistic logic however, De Morgan's laws do not hold anymore. Thus in
the \kl{flower calculus} we need a different way to interpret \kl{LoIs} as
quantifiers. Our key insight is to adopt a \emph{polarity-invariant} viewpoint:
a \kl{LoI} now has \emph{existential} (resp. \emph{universal}) force when its
outermost point is located in a \emph{petal} (resp. \emph{pistil}). In
particular, this implies that \kl{LoIs} cannot occur at the top-level of $\SA$
anymore, but only inside \kl(informal){flowers}. Thus the two previous \kl{Beta}
\kl{graphs} are transformed into the single-petal \kl(informal){flowers} of Figure
\ref{fig:loi-flower}.

\subparagraph*{Variables}

Quine experimented with a notation similar to \kl{LoIs}, but deemed it ``too
cumbersome for practical use'' \cite[p.~125]{Roberts+1973}. While his lines
connected locations inside symbolic formulas written in linear notation, it is
true that having a line for each occurrence of bound variable can quickly lead
to unreadable diagrams ridden with overlapping lines.
This is not a problem in the context of Peirce's work, because his aim was ``to
separate [relational] reasoning into its smallest steps, [...] not to facilitate
reasoning, but to facilitate the study of reasoning''
\cite[p.~111]{Roberts+1973}; and recent formalizations of the algebra of \kl{LoIs} in
category theory support the pertinence of Peirce's approach
\cite{pietarinen_compositional_2020,bonchi_diagrammatic_2024}.

\begin{figure}
  \captionsetup[subfigure]{justification=centering}
  \centering
  \begin{subfigure}{0.32\textwidth}
    \centering
    \tikzfig{1}{0.4}{loi-ex}
    \\[1em]
    \tikzfig{1}{0.4}{loi-fa}
    \caption{\kl{EGs} with \kl{LoIs}}
    \label{fig:loi}
  \end{subfigure}
  \begin{subfigure}{0.33\textwidth}
    \centering
    \tikzfig{1}{1}{loi-ex-flower}
    \\[1em]
    \tikzfig{1}{1}{loi-fa-flower}
    \caption{Flowers with \kl{LoIs}}
    \label{fig:loi-flower}
  \end{subfigure}
  \begin{subfigure}{0.33\textwidth}
    \centering
    \tikzfig{1}{1}{sprinkler-ex}
    \\[1em]
    \tikzfig{1}{1}{sprinkler-fa}
    \caption{Flowers with variables}
    \label{fig:sprinkler-flower}
  \end{subfigure}
  \caption{From \kl{LoIs} to variables}
\end{figure}

However, keeping in mind our goal of laying the basis for a calculus well-suited
to practical reasoning in \kl{ITPs}, we chose to replace \kl{LoIs} by a more
traditional syntax based on binders and variables. The idea is to substitute
every \kl{LoI} with a variable \reintro{binder} scribed in the area of its
outermost point, so that the two \kl(informal){flowers} of Figure
\ref{fig:loi-flower} transform into those of Figure \ref{fig:sprinkler-flower}.
Areas delimited by pistils and petals now comprise both \kl(informal){flowers}
and binders, which can be seen metaphorically as \reintro{sprinklers} that
irrigate the leaves (atomic predicates) of flowers through invisible \kl{LoIs},
imagined as underground hoses. Hence we call these areas \reintro{gardens}.

\newpage

\section{Syntax}\label{sec:Syntax}

We are now going to distill the syntactic essence of \kl(informal){flowers} into
an inductive, (multi)set-based data structure. This will allow for a more
compact textual notation, that is better suited to proof-theoretical study. We
previously illustrated how \kl(informal){flowers} allow to represent purely
relational statements without function symbols. Since functions are just
deterministic relations, one can in principle formalize any first-order theory
in this syntax\footnote{Conversely, every relation can be faithfully encoded as
its characteristic function, which is the basis for the formalization of
mathematics in \emph{type theories}.}.

\begin{definition}
\AP
  A \intro{first-order signature} is a pair $\Sigma = ( \intro*\psymbs,
  \intro*\arity )$, where $\psymbs$ is the countable set of \intro{predicate
  symbols} of $\Sigma$, and $\arity : \psymbs \to \nats$ gives an \intro{arity}
  to each symbol.
\end{definition}

\AP
In the following, we fix a countable set of variables $\intro*\vars$
and a first-order signature $\Sigma$.

\begin{definition}
  \label{def:flowers}
  The sets of \intro{flowers} $\intro*\flowers$ and \intro{gardens}
  $\intro*\gardens$ are defined by mutual induction:
  \begin{description}
    \item[Atom] If $p \in \psymbs$ and $\tvec{x} \in \vars^{\arity(p)}$, then
    $p(\tvec{x}) \in \flowers$;
    \itemAP[Garden] If $\bx \subset \vars$ is a finite set and $\Phi \subset
    \flowers$ a finite multiset, then $\intro*\garden{\bx}{\Phi} \in \gardens$;
    \itemAP[Flower] If $\gamma \in \gardens$ and $\Delta \subset \gardens$ is a
    finite multiset, then $\intro*\flower{\gamma}{\Delta} \in \flowers$.
  \end{description}
  Similarly to \intro{nested sequents}, the syntax of \kl{flowers} $\phi, \psi$
  and \kl{gardens} $\gamma, \delta$ can be expressed succinctly with the
  following grammar:
  \begin{mathpar}
    \phi, \psi \Coloneq p(x_1, \ldots, x_n) \mid \flower{\gamma}{\delta_1 \sep \ldots \sep \delta_n} \and
    \gamma, \delta \Coloneq \garden{x_1, \ldots, x_n}{\phi_1, \ldots, \phi_n}
  \end{mathpar}
\end{definition}

\begin{figure}
  \captionsetup[subfigure]{justification=centering}
  \centering
  \begin{subfigure}{0.35\textwidth}
    \centering
    \begin{tabular}{|c|c|}
      \hline
      \bfseries Kind & \bfseries Letters \\
      \hline
      Variables ($\vars$) & $x, y, z$ \\
      \kl{Flowers} ($\flowers$) & $\phi, \psi, \xi$ \\
      \kl{Gardens} ($\gardens$) & $\gamma, \delta$ \\
      \kl{Sprinklers} & $\bx, \by, \bz$ \\
      Variable vectors & $\tvec{x}, \tvec{y}, \tvec{z}$ \\
      \kl{Substitutions} & $\sigma, \tau$ \\
      \kl{Bouquets} & $\Phi, \Psi, \Xi$ \\
      \kl{Corollas} & $\Gamma, \Delta$ \\
      \kl{Contexts} & $\cPhi, \cPsi, \cXi$ \\
      \kl{Theories} & $\mathcall{T}, \mathcall{U}$ \\
      \hline
    \end{tabular}
    \caption{Conventions for meta-variables}
    \label{tab:letters}
  \end{subfigure}
  \begin{subfigure}{0.64\textwidth}
    \centering
    \stkfig{0.85}{parsed-flower}
    \caption{Interpreting flowers}
    \label{fig:parsed-flower}
  \end{subfigure}
  \caption{Notations}
\end{figure}

\AP Building on our botanical metaphor, any finite set $\bx \subset \vars$ of
variables is called a \intro{sprinkler}, finite multiset $\Phi \subset \flowers$
of \kl{flowers} a \intro{bouquet}, and finite multiset $\Gamma \subset \gardens$
of \kl{gardens} a \intro{corolla}. Following the grammar presentation, we will
often write \kl{gardens} as $\garden{x_1, \ldots, x_n}{\phi_1, \ldots, \phi_m}$,
where the $x_i$ are called \intro{binders}; and non-atomic \kl{flowers} as
$\flower{\gamma}{\delta_1 \sep \ldots \sep \delta_n}$, where $\gamma$ is the
\intro{pistil} and the $\delta_i$ are the \intro{petals}. We write
$\intro*\fset{i}{n}{E_i}$ to denote a finite (multi)set of size $n$ with
elements $E_i$ indexed by $1 \leq i \leq n$. We also omit writing the empty
(multi)set, accounting for it with blank space as is done in sequent notation;
in particular, $\garden{{}}{{}}$ stands for the empty \kl{garden}
$\garden{\emptyset}{\emptyset}$, $\flower{\gamma}{{}}$ for the \kl{flower} with
no \kl{petals} $\flower{\gamma}{\emptyset}$, and
$\flower{\gamma}{\garden{{}}{{}}}$ for the \kl{flower} with one empty
\kl{petal}.

Note that the order of precedence of operators is $\mathop{,} < \garden{{}}{{}}
< \mathop{;} < \mathop{\flower{}{}}$: this is illustrated in Figure
\ref{fig:parsed-flower}, where a \kl[flower]{flower expression} is parsed into
the corresponding \kl[flower](informal){flower drawing}, and then translated as
a formula. Also to improve readability, we will most of the time omit the \kl{garden}
dot `$\cdot$' when the \kl{sprinkler} is empty, writing $\Phi$ instead of
$\garden{{}}{\Phi}$.

\begin{remark}
  In some places the choice of letter for meta-variables will be important to
  disambiguate the kind of syntactic object we denote. Table \ref{tab:letters}
  summarizes our chosen notational conventions in this respect.
\end{remark}

\AP
We now proceed with routine definitions for handling variables.

\begin{definition}
  \label{def:fv}
  The sets of \intro{free variables} $\intro*\fv(-)$ and \intro{bound variables}
  $\intro*\bv(-)$ of a \kl{flower}/\kl{bouquet}/\kl{garden} are defined
  recursively by:
  \begin{mathpar}
    \fv(p(\tvec{x})) = \tvec{x} \and
    \fv(\Phi) = \bigcup_{\phi \in \Phi}{\fv(\phi)} \and
    \fv(\garden{\bx}{\Phi}) = \fv(\Phi) \setminus \bx \and
    \fv(\flower{\garden{\bx}{\Phi}}{\Delta}) = \fv(\garden{\bx}{\Phi}) \cup \bigcup_{\garden{\by}{\Psi} \in \Delta}{\fv(\garden{\bx, \by}{\Psi})}
  \end{mathpar}
%
%
  \begin{mathpar}
    \bv(p(\tvec{x})) = \emptyset \quad\,\,
    \bv(\Phi) = \bigcup_{\phi \in \Phi}{\bv(\phi)} \quad\,\,
    \bv(\garden{\bx}{\Phi}) = \bx \cup \bv(\Phi) \quad\,\,
    \bv(\flower{\gamma}{\Delta}) = \bv(\gamma) \cup \bigcup_{\delta \in \Delta}{\bv(\delta)}
  \end{mathpar}
\end{definition}

\AP
To avoid reasoning about $\alpha$-equivalence, we adopt in this work the
so-called \intro{Barendregt convention} that all variable \kl{binders} are distinct,
both among themselves and from free variables. Formally, we assume that for any
\kl{bouquet} $\Phi$ the two following conditions hold:
\begin{enumerate}
  \item computing $\bv(\Phi)$ as a multiset gives the same result as computing
  it as a set;
  \item $\bv(\Phi) \cap \fv(\Phi) = \emptyset$.
\end{enumerate}

\AP
To define substitutions, we introduce a general notion of \emph{function
update}, which will be useful for the semantic evaluation of \kl{flowers} in
Section \ref{sec:Semantics}.

\begin{definition}
  \label{def:update}
  Let $A, B$ be two sets, $f, g : A \to B$ two functions and $R \subseteq A$
  some subset of their domain. The \intro{update} of $f$ on $R$ with $g$ is the
  function defined by:
  \begin{align*}
  (f \intro*\upd{R} g)(x) =
  \begin{cases}
    g(x) &\text{if $x \in R$} \\
    f(x) &\text{otherwise}
  \end{cases}
  \end{align*}
  $\update{-}{-}{-}$ is left-associative, that is
  $\update{f}{R}{\update{g}{S}{h}} = \update{(\update{f}{R}{g})}{S}{h}$. Also
  if $f$ or $g$ is the identity function $\idsubst$ we omit writing it, i.e.
  $\update{f}{R}{{}} = \update{f}{R}{\idsubst}$ and $\update{{}}{R}{g} =
  \update{\idsubst}{R}{g}$.
\end{definition}

\begin{definition}
\AP
  A \intro{substitution} is a function $\sigma : \vars \to \vars$ with a finite
  \intro{support} $\intro*\dom(\sigma) = \compr{x}{\sigma(x) \not= x}$. We write
  $\sigma : \bx $ to denote a \kl{substitution} $\sigma$ whose support is $\bx$.
  The domain of \kl{substitutions} is extended to \kl{flowers}, \kl{bouquets}
  and \kl{gardens} mutually recursively by:
  \begin{mathpar}
    \sigma(p(x_1, \ldots, x_n)) = p(\sigma(x_1), \ldots, \sigma(x_n))
    \and
    \sigma(\phi_1, \ldots, \phi_n) = \sigma(\phi_1), \ldots, \sigma(\phi_n)
    \and
    \sigma(\garden{\mathbf{x}}{\Phi}) =
      \garden{\mathbf{x}}{\restr{\sigma}{\mathbf{x}}(\Phi)}
    \and
    \sigma(\flower{\garden{\mathbf{x}}{\Phi}}{\delta_1 \sep \ldots \sep
      \delta_n}) =
      \flower{\sigma(\garden{\mathbf{x}}{\Phi})}{\restr{\sigma}{\mathbf{x}}(\delta_1)
      \sep \ldots \sep \restr{\sigma}{\mathbf{x}}(\delta_n)}
  \end{mathpar}
\AP
  We say that a \kl{substitution} $\sigma : \bx$ is \intro{capture-avoiding} in a
  \kl{bouquet} $\Phi$ if $\sigma(\bx) \cap \bv(\Phi) = \emptyset$.
\end{definition}

\section{Calculus}\label{sec:Calculus}


\AP
Equipped with an inductive syntax, we can now express formally the inference
rules of the \kl{flower calculus}. First we need a notion of \emph{context} to
apply rules at arbitrarily deep locations:

\begin{definition}[Context]\label{def:contexts}
  \intro{Contexts} $\cPhi$ are defined inductively by the following grammar:
  \begin{mathpar}
    \cPhi, \cPsi, \cXi \Coloneq \Psi, \cphi
    \and
    \cphi, \cpsi, \cxi \Coloneq \hole \mid \flower{\garden{\bx}{\cPhi}}{\Delta} \mid \flower{\gamma}{\garden{\bx}{\cPhi} \sep \Delta}
  \end{mathpar}
  Informally, a \kl{context} can be seen as a \kl{bouquet} with exactly one
  occurrence of a special \kl{flower} $\intro*\hole$ called its \emph{hole}. The
  \intro{filling} of a \kl{context} $\cPhi$ with a \kl{bouquet} $\Psi$ (resp.
  \kl{context} $\cPsi$) is the \kl{bouquet} $\intro*\cfill{\cPhi}{\Psi}$ (resp.
  \kl{context} $\cfill{\cPhi}{\cPsi}$) equal to $\cPhi$ where $\hole$ has been
  substituted with $\Psi$ (resp. $\cPsi$).
\end{definition}

\begin{definition}[Polarity]
\AP
  The number of \emph{inversions} $\intro*\inv(\cPhi)$ of a \kl{context} $\cPhi$ is:
  \begin{mathpar}
    \inv(\hole) = 0
    \quad\,\,
    \inv(\Psi, \cphi) = \inv(\cphi)
    \quad\,\,
    \inv(\flower{\garden{\bx}{\cPhi}}{\Delta}) = 1 + \inv(\cPhi)
    \quad\,\,
    \inv(\flower{\gamma}{\garden{\bx}{\cPhi} \sep \Delta}) = \inv(\cPhi)
  \end{mathpar}
  We say that a \kl{context} $\cPhi$ is \intro{positive} if $\inv(\cPhi)$ is
  even, and \intro{negative} otherwise. We denote \kl{positive} and \kl{negative}
  \kl{contexts} respectively by $\cPhiP$ and $\cPhiN$.
\end{definition}


\AP
In order to formulate the equivalent of the \kl{Iteration} rule of \kl{EGs} for
\kl{flowers}, we introduce a \emph{pollination} relation that captures the
availability of a \kl{flower} in a given \kl{context}:

\begin{definition}[Pollination]\label{def:pollination}
  We say that a \kl{flower} $\phi$ can be \intro{pollinated} in a \kl{context}
  $\cPhi$, written $\intro*\chyp{\phi}{\cPhi}$, when there exists a \kl{bouquet}
  $\Psi$ with $\phi \in \Psi$ and \kl{contexts} $\cXi$ and $\cXi_0$ s.t. either:
  \begin{description}
    \itemAP[\intro{Cross-pollination}] $\cPhi = \cfill{\cXi}{\Psi,
    \cXi_0}$;
    \itemAP[\intro{Self-pollination}] $\cPhi =
    \cfill{\cXi}{\flower{\garden{\bx}{\Psi}}{\garden{\by}{\cXi_0}
    \sep \Delta}}$ for some $\bx, \by, \Delta$.
  \end{description}
  A \kl{bouquet} $\Phi$ can be \reintro{pollinated} in $\cPhi$, written
  $\reintro*\chyp{\Phi}{\cPhi}$, if $\chyp{\phi}{\cPhi}$ for all $\phi \in
  \Phi$.
\end{definition}

\begin{figure}
  \captionsetup[subfigure]{justification=centering}
  \centering
  \begin{subfigure}{0.3\textwidth}
    \centering
    \tikzfig{0.9}{0.5}{cross-pollination}
    \caption{\kl{Cross-pollination}}
  \end{subfigure}
  \begin{subfigure}{0.3\textwidth}
    \centering
    \tikzfig{0.9}{0.5}{self-pollination}
    \caption{\kl{Self-pollination}}
  \end{subfigure}
  \caption{\kl{Pollination} in \kl{flowers}}
  \label{fig:pollination}
\end{figure}

Figure \ref{fig:pollination} illustrates the meaning of \kl{pollination} as a
relation of \emph{justification} between locations: the blue dot marks the
location of the justifying/\kl{pollinating} occurrence of $\phi$, and the red
dots all the areas that it justifies/\kl{pollinates}, and thus where $\phi$ is
available for use. We distinguish two cases of \kl{cross-pollination} and
\kl{self-pollination}, as botanists do when describing the reproduction of
\kl{flowers}. This distinction does not exist in classical \kl{EGs}, because
\kl{pistils} and \kl{petals} are both identified as instances of
\kl{cuts}\footnote{The same phenomenon is at work in \kl{SFL}:
\kl{cross-pollination} and \kl{self-pollination} can be seen as generalizing the
\emph{forward} and \emph{backward} interaction connectives $\forw$ and $\back$
of intuitionistic \kl{SFL}
\cite{DBLP:conf/cade/Chaudhuri21,10.1145/3497775.3503692}, while the original
formulation of \kl{SFL} for classical linear logic had only one interaction
connective $\ast$ \cite{Chaudhuri2013}. Through the Curry-Howard-Lambek
correspondence, this is also reminiscent of the adjunction between products
($\forw$) and exponentials ($\back$) in \emph{cartesian closed categories}, as
opposed to the natural isomorphism $(-)^*$ of \emph{$*$-autonomous categories}. }.

\begin{remark}
  Incidentally, the \kl{pollination} relation also explains the \emph{scope} of
  variables. Indeed, one can interpet red dots in Figure \ref{fig:pollination}
  as the allowed \emph{usage} points for the variable \emph{bound} at the linked
  blue dot. This hints at a possible \emph{type-theoretic} variant of the
  \kl{flower calculus} where variables are also used for \kl{higher-order}
  individuals, including \kl{flowers}.
\end{remark}

\subparagraph*{Proofs}

\begin{figure}
  \input{flower-calculus.tex}
  \caption{Rules of the \kl{flower calculus}}
  \label{fig:flower-calculus}
\end{figure}


The inference rules of the \kl{flower calculus} are presented in Figure
\ref{fig:flower-calculus}. Read from top to bottom, they correspond to
traditional inference rules deducing a necessary conclusion from a valid
premiss. But we will prefer their backward, \emph{bottom-up} reading: then they
can be seen as \emph{rewriting} rules that reduce a goal to a sufficient
premiss, just like in our illustration of the \kl{illative transformations} of
\kl{EGs} in Figure \ref{fig:LEM}. Also, all rules manipulate \emph{bouquets}:
this is seen more clearly in the \emph{graphical} presentation of the rules in
appendix (Figures \ref{fig:natural-graphical} and \ref{fig:cultural-graphical}).

\AP
We partition the rules into two sets: the \intro{natural} rules denoted by
$\Nature$ that apply in arbitrary \kl{contexts}, and the \intro{cultural} rules
denoted by $\Culture$ that apply exclusively in \kl{positive} or \kl{negative}
\kl{contexts}. In particular, every $\Nature$-rule is both \intro{analytic}
(i.e. every atom in the premiss already appears in the conclusion) and
\emph{invertible} (Lemma~\ref{lem:flowers-local-soundness}); on the contrary,
all $\Culture$-rules are \emph{non-invertible}, and they will be shown to be
\emph{admissible} in Section \ref{sec:Completeness}.

\begin{definition}[Derivation]\label{def:derivation}
\AP
  Given a set of rules $\mathsf{R}$, we write $\Phi \intro*\step{\mathsf{R}}
  \Psi$ to indicate a rewrite \emph{step} in $\mathsf{R}$, that is an instance
  of some $\mathsf{r} \in \mathsf{R}$ with $\Psi$ as premiss and $\Phi$ as
  conclusion. We just write $\Phi \reintro*\step{} \Psi$ to mean $\Phi
  \step{\Nature\cup\Culture} \Psi$. A \intro{derivation} $\Phi
  \intro*\nsteps{n}{\mathsf{R}} \Psi$ is a sequence of rewrite steps $\Phi_0
  \step{\mathsf{R}} \Phi_1 \ldots \step{\mathsf{R}} \Phi_n$ with $\Phi_0 =
  \Phi$, $\Phi_n = \Psi$ and $n \geq 0$. Generally the length $n$ of the
  \kl{derivation} does not matter, and we just write $\Phi
  \intro*\steps{\mathsf{R}} \Psi$. Finally, \kl{natural} \kl{derivations} are
  closed under arbitrary \kl{contexts}: for every \kl{context} $\cXi$, $\Phi
  \step{\Nature} \Psi$ implies $\cfill{\cXi}{\Phi} \step{\Nature}
  \cfill{\cXi}{\Psi}$. We write $\Phi \intro*\lstep \Psi$ to denote a
  \intro{shallow} \kl{natural} step, i.e. an instance of a $\Nature$-rule in the
  empty \kl{context} $\hole$.
\end{definition}

\begin{definition}[Proof]\label{def:flowers-proof}
\AP
  A \intro{proof} of a \kl{bouquet} $\Phi$ is a \kl{derivation} $\Phi \steps{} \emptyset$.
\end{definition}

In Peircean terms, the empty \kl{bouquet} is the blank $\SA$. Then proving a
\kl{bouquet} amounts to erasing it completely from $\SA$, thus reducing it to
\kl[vacuous truth]{trivial truth} as in Figure \ref{fig:LEM}. Figure
\ref{fig:flowers-proof-example} in appendix shows an example of
$\Nature$-\kl{proof} in the \kl{flower calculus}, both in textual and graphical
syntax. Note that we used a non-duplicating version of the rules \kl{ipis} and
\kl{ipet}, in order to save some space in the graphical presentation.


If we want to reason about \emph{relative} truth, i.e. $\Phi$ is true under the
assumption that $\Psi$ is, we can simply rely on the existence of a \kl{derivation}
$\Phi \steps{} \Psi$ in the full \kl{flower calculus}. This will be justified by
the soundness of all rules (Theorem~\ref{thm:flowers-soundness}) as well as a
\emph{strong} completeness result
(Corollary~\ref{thm:flowers-strong-completeness}), that relies on the following
strong deduction theorem:

\begin{theorem}[Strong deduction]\label{thm:flowers-strong-deduction}
  $\Phi \steps{} \Psi$ if and only if $\flower{\Psi}{\Phi} \steps{} \emptyset$.
\end{theorem}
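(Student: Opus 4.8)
The statement is a deduction theorem: reading $\flower{\Psi}{\Phi}$ as the scroll ``$\Psi$ implies $\Phi$'', the forward implication is a $\supset$-introduction and the backward one a detachment (cut). The plan is to treat the two directions separately, since they have very different character.

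For the forward direction $\Phi \steps{} \Psi \Rightarrow \flower{\Psi}{\Phi}\steps{}\emptyset$, I would argue by induction on the length of the \kl{derivation} $\Phi\steps{}\Psi$. The guiding observation is that the \kl{petal} of $\flower{\Psi}{\cdot}$ is a \kl{positive} \kl{context}: the \kl{filling} $\flower{\Psi}{\garden{}{\hole}}$ has $\inv = 0$. Since inversions add under context composition, $\inv(\cfill{\cXiP}{\cXi}) = \inv(\cXiP)+\inv(\cXi)$, embedding any single rewrite step into this positive \kl{petal} preserves the polarity side conditions of the \kl{cultural} rules (and \kl{natural} steps lift to every \kl{context} by definition). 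Hence the whole \kl{derivation} lifts, giving $\flower{\Psi}{\Phi}\steps{}\flower{\Psi}{\Psi}$. It then remains to prove the \emph{identity flower} $\flower{\Psi}{\Psi}\steps{}\emptyset$: every \kl{flower} in the \kl{petal} is justified by its twin in the \kl{pistil} through \kl{self-pollination}, so \kl{poll{\da}} empties the \kl{petal}, and \kl{epet} then erases the resulting \kl{flower} with an empty \kl{petal} down to $\emptyset$. The one point needing care is the \kl{Barendregt convention}: the \kl{pistil} and \kl{petal} copies of $\Psi$ must be made to agree variable-wise, which is where the instantiation rule \kl{ipet} comes in to align \kl{binders} before \kl{pollinating}.

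The backward direction $\flower{\Psi}{\Phi}\steps{}\emptyset \Rightarrow \Phi\steps{}\Psi$ is the genuine obstacle, as it amounts to a detachment/cut and cannot be obtained by naive rule-chasing. Indeed, starting from the goal $\Phi$ one can only \emph{add} the provable scroll as a \kl{positive} conjunct via \kl{grow}, while the \kl{pistil} $\Psi$ sits in a \kl{negative} \kl{context} from which it can neither escape to the top level nor be reintroduced soundly — the step turning $\flower{\True}{\Phi}$ into $\flower{\Psi}{\Phi}$ is invalid, since $\Psi \Impl \Phi$ does not entail $\Phi$. The real content is therefore an \emph{admissibility of cut}: I would isolate, as a separate lemma, that the proof of $\flower{\Psi}{\Phi}$ can be consumed to rewrite the goal $\Phi$ into $\Psi$, proved by induction on that proof while generalising over the \kl{context} in which the scroll is used. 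The delicate cases are exactly the \kl{cultural} rules, which are non-invertible, together with the instantiation rules \kl{ipis}/\kl{apis}, whose \kl{capture-avoiding} \kl{substitutions} must be threaded through the extracted \kl{derivation}. A lighter but less self-contained route, if one admits results of Section~\ref{sec:Completeness}, is semantic: soundness turns $\flower{\Psi}{\Phi}\steps{}\emptyset$ into the entailment $\Psi\kentails{}\Phi$, and completeness of the \kl{natural} fragment returns a \kl{natural} \kl{derivation} $\Phi\steps{\Nature}\Psi$, a fortiori a \kl{derivation} $\Phi\steps{}\Psi$; I expect the appendix to favour the purely syntactic cut-admissibility argument so as to keep the dependency order with completeness clean.
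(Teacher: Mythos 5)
Your forward direction is essentially the paper's: the induction on the length of the derivation is packaged there as a ``positive closure'' lemma (proved exactly by your observation that inversions add under context composition), lifting $\Phi \steps{} \Psi$ to $\flower{\Psi}{\Phi} \steps{} \flower{\Psi}{\Psi}$, after which \kl{poll{\da}} (self-pollination of the petal copy by the pistil copy) and \kl{epet} finish. The detour through \kl{ipet} to ``align binders'' is not needed: the two copies of $\Psi$ are the same multiset, and the pollination relation only asks for membership of the petal flowers in the pistil bouquet.

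The backward direction is where your assessment goes wrong. You declare it ``cannot be obtained by naive rule-chasing'' and must be either a cut-admissibility induction on the given proof or a semantic detour; in fact the paper derives it by a short explicit rewrite sequence, with no induction on the proof of $\flower{\Psi}{\Phi}$ and no semantics. The move you miss is that one never needs to insert $\Psi$ into the (negative) pistil at all: since the \emph{target} of the derivation is $\Psi$, one may \kl{grow} both the provable scroll and $\Psi$ itself as top-level, hence \kl{positive}, conjuncts of the goal. Concretely: $\Phi \step{\kl{epis}} \flower{}{\Phi} \step{\kl{grow}} (\flower{\Psi}{\Phi}),(\flower{}{\Phi})$; then \kl{poll{\ua}} iterates the grown scroll into the empty pistil; the hypothesis $\flower{\Psi}{\Phi} \steps{} \emptyset$ together with positive closure erases the top-level copy, leaving $\flower{(\flower{\Psi}{\Phi})}{\Phi}$; a second \kl{grow} adds $\Psi$ at the top level, \kl{poll{\da}} deiterates the $\Psi$ now sitting in the inner pistil, \kl{srep} turns $\flower{(\flower{}{\Phi})}{\Phi}$ into $\flower{}{(\flower{\Phi}{\Phi})}$, and one more \kl{poll{\da}} followed by two \kl{epet} steps collapse everything to $\Psi$. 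This is the classical existential-graphs derivation of modus ponens from iteration and insertion, available precisely because the full calculus contains the non-analytic \kl{cultural} rules. Your route (a) would amount to proving syntactic cut elimination, a far harder statement for which you give no details; your route (b) is workable in principle but, as you yourself suspect, inverts the paper's dependency order, since strong completeness is itself derived \emph{from} strong deduction.
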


Contrary to full derivability, \kl{natural} derivability $\Phi \steps{\Nature}
\Psi$ is too weak to satisfy a strong deduction theorem. This is a consequence
of the fact that $\Nature$-rules are \emph{invertible}, and thus can only relate
\kl{equivalent} \kl{bouquets}. Indeed, as soon as $\flower{\Psi}{\Phi}$ is
$\Nature$-provable but the converse $\flower{\Phi}{\Psi}$ is not, it follows
from the completeness of $\Nature$-rules that $\Phi$ and $\Psi$ are not
\kl{equivalent}: thus $\Phi \notsteps{\Nature} \Psi$, contradicting the strong
deduction statement.

\AP A trivial way to circumvent this is to define directly the relation of
\emph{hypothetical provability} $\Psi \entails{} \Phi$ as $\flower{\Psi}{\Phi}
\steps{} \emptyset$. This is closer to what one would find in sequent calculus,
where hypothetical proofs are closed derivations of hypothetical sequents, not
open derivations. The difference is that sequents capture only the
\intro{first-order}\footnote{As opposed to \intro{higher-order}, in the sense of
having negatively nested \kl{implications}.} implicative structure of logic,
while \kl{flowers} capture the full structure of intuitionistic \kl{FOL}. This
allows for a nice generalization of the notion of hypothetical provability,
which will be useful in our completeness proof:

\begin{definition}
\AP
  We say that $\Phi$ is \intro{hypothetically provable} from $\Psi$ in a
  fragment $\mathsf{R}$ of rules, written $\Psi \intro*\entails{\mathsf{R}}
  \Phi$, if $\cfill{\cXi}{\Phi} \steps{\mathsf{R}} \cfill{\cXi}{}$ for every
  \kl{context} $\cXi$ such that $\chyp{\Psi}{\cXi}$. We write $\Psi
  \reintro*\entails{} \Phi$ to denote \kl{hypothetical provability} in the full
  \kl{flower calculus}.
\end{definition}


\begin{theorem}[Deduction]\label{thm:flowers-deduction}
  $\Psi \entails{\Nature} \Phi$ if and only if
  $\entails{\Nature} \flower{\Psi}{\Phi}$.
\end{theorem}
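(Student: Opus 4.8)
The plan is to prove the two implications separately, in each case reducing everything to the single \kl{flower} $\flower{\Psi}{\Phi}$ whose \kl{pistil} $\Psi$ \kl{pollinates} its \kl{petal} $\Phi$ by \kl{self-pollination}. As a preliminary remark, since the hypotheses on the right-hand side are empty, $\chyp{\emptyset}{\cXi}$ holds vacuously for every \kl{context} $\cXi$; combined with the closure of \kl{natural} \kl{derivations} under \kl{contexts}, this shows that $\entails{\Nature} \flower{\Psi}{\Phi}$ is equivalent to the single \kl{proof} $\flower{\Psi}{\Phi} \steps{\Nature} \emptyset$ (instantiate at $\cXi = \hole$ for one direction, close under $\cXi$ for the other). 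So it suffices to relate $\Psi \entails{\Nature} \Phi$ with $\flower{\Psi}{\Phi} \steps{\Nature} \emptyset$.

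For the forward direction I would instantiate the hypothesis $\Psi \entails{\Nature} \Phi$ at the canonical \kl{context} $\cXi \deq \flower{\Psi}{\garden{{}}{\hole}}$, whose \kl{hole} sits in the \kl{petal} of a \kl{flower} with \kl{pistil} $\Psi$. One checks immediately by \kl{self-pollination} that $\chyp{\Psi}{\cXi}$, so the hypothesis gives $\flower{\Psi}{\Phi} = \cfill{\cXi}{\Phi} \steps{\Nature} \cfill{\cXi}{} = \flower{\Psi}{\garden{{}}{{}}}$. The latter \kl{flower} has a single empty \kl{petal}, so one application of \kl{epet} erases it, yielding $\flower{\Psi}{\Phi} \steps{\Nature} \emptyset$ as required.

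For the backward direction, fix any \kl{context} $\cXi$ with $\chyp{\Psi}{\cXi}$; the goal is $\cfill{\cXi}{\Phi} \steps{\Nature} \cfill{\cXi}{}$. I would first grow an implication around $\Phi$ at the \kl{hole} with \kl{epis}, getting $\cfill{\cXi}{\Phi} \step{} \cfill{\cXi}{\flower{\garden{{}}{{}}}{\garden{{}}{\Phi}}}$; then, relocating the \kl{hole} inside the freshly created empty \kl{pistil}, I would use \kl{poll{\ua}} to pollinate $\Psi$ there, obtaining $\cfill{\cXi}{\flower{\garden{{}}{{}}}{\garden{{}}{\Phi}}} \step{} \cfill{\cXi}{\flower{\Psi}{\Phi}}$. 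Finally, closing the \kl{proof} $\flower{\Psi}{\Phi} \steps{\Nature} \emptyset$ under $\cXi$ gives $\cfill{\cXi}{\flower{\Psi}{\Phi}} \steps{\Nature} \cfill{\cXi}{}$; concatenating the three segments proves the claim, and since $\cXi$ was arbitrary, $\Psi \entails{\Nature} \Phi$.

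The only genuine bookkeeping — and the step I expect to demand the most care — is the \kl{pollination} step via \kl{poll{\ua}}: I must check that $\chyp{\Psi}{\cdot}$ is preserved when the \kl{hole} is pushed deeper into the new \kl{pistil}. This is a monotonicity property of the \kl{pollination} relation under \kl{context} composition: from a witness $\cXi = \cfill{\cG}{\Psi, \cG_0}$ (\kl{cross-pollination}) or $\cXi = \cfill{\cG}{\flower{\garden{\bx}{\Psi}}{\garden{\by}{\cG_0} \sep \Delta}}$ (\kl{self-pollination}), one obtains a witness for the deeper \kl{context} simply by replacing $\cG_0$ with the composite $\cfill{\cG_0}{\cK}$, where $\cK$ is the extra \kl{context} descending into the \kl{pistil}; this is again a valid witness because \kl{filling} composes associatively. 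Everything else reduces to the closure of \kl{natural} \kl{derivations} under \kl{contexts} and to the three rules \kl{epis}, \kl{epet} and \kl{poll{\ua}}.
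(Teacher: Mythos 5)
Your proof is correct and follows essentially the same route as the paper's: the forward direction instantiates the hypothesis at a context placing $\Phi$ in the petal of a flower with pistil $\Psi$ (justified by \kl{self-pollination}) and finishes with \kl{epet}, while the backward direction prepends \kl{epis} followed by \kl{poll{\ua}}. The only cosmetic differences are that you factor both directions through the top-level derivation $\flower{\Psi}{\Phi} \steps{\Nature} \emptyset$ plus contextual closure, and that you make explicit the monotonicity of \kl{pollination} under context composition, which the paper's proof uses implicitly in its \kl{poll{\ua}} step.
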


\section{Semantics}\label{sec:Semantics}

We now give a semantics to \kl{flowers} in Kripke structures. We recall the
standard definitions:

\begin{definition}
\AP
  A \intro{first-order structure} is a pair $( M, \intro*\interp{{\cdot}} )$ where
  $M$ is a non-empty set called the \intro{domain}, and $\interp{{\cdot}}$ is a
  map called the \intro{interpretation} that associates to each predicate symbol
  $p \in \psymbs$ a relation $\interp{p} \subseteq M^{\arity(p)}$.
\end{definition}

\begin{definition}
\AP
  A \intro{Kripke structure} is a triplet $\mathcal{K} = ( W, \intro*\access,
  (M_w)_{w \in W} )$, where $W$ is the set of \intro{worlds}, $\access$ is a
  pre-order on $W$ called \intro{accessibility}, and $(M_w)_{w \in W}$ is a
  family of \kl{first-order structures} indexed by $W$. Furthermore, we require
  the following monotonicity conditions to hold whenever $w \access w'$:
  \begin{enumerate*}
    \item $M_w \subseteq M_{w'}$;
    \item for every $p \in \psymbs$, $\interp{p}_w \subseteq
      \interp{p}_{w'}$.
  \end{enumerate*}
\end{definition}


\begin{definition}
\AP
  Given a \kl{Kripke structure} $\mathcal{K}$ and a \kl{world} $w$ in $\mathcal{K}$,
  a \intro*\evaluation{w} is a function $e : \vars \to M_w$.
  The \kl{interpretation} map of $M_w$ is extended to variables and
  \kl{substitutions} with respect to any \evaluation{w} $e$ as follows:
  \begin{mathpar}
  \interp{x}_e = e(x) \and
  \interp{\sigma}_e(x) = \interp{\sigma(x)}_e
  \end{mathpar}
\end{definition}

The crux of Kripke semantics is the \emph{forcing} relation, that captures the
truth-conditions of statements in \kl{Kripke structures}. While it is usually
defined on formulas, here we adapt the definition to \kl{flowers}, which in our
opinion makes it simpler and more uniform since \kl{flowers} can be seen as
built from essentially one big constructor:

\begin{definition}
\AP
  The \intro{depth} $\intro*\fdepth{{-}}$ of a \kl{flower}/\kl{garden} is defined
  by mutual recursion:
  \begin{mathpar}
    \fdepth{p(\tvec{x})} = 0
    \and
    \fdepth{\garden{\bx}{\Phi}} = \max_{\phi \in \Phi}{\fdepth{\phi}}
    \and
    \fdepth{\flower{\gamma}{\Delta}} = 1 + \max(\fdepth{\gamma}, \max_{\delta \in \Delta}{\fdepth{\delta}})
  \end{mathpar}
\end{definition}

\begin{definition}\label{def:forcing} Given some \kl{Kripke structure}
\AP
  $\mathcal{K}$, the \intro{forcing} relation $\intro*\eforces{w}{\phi}{e}$
  between a \kl{world} $w$, a \kl{flower} $\phi$ and a \evaluation{w} $e$ is defined
  by induction on $\fdepth{\phi}$ as follows:
  \begin{description}
    \item[Atom] $\eforces{w}{p(\tvec{x})}{e}$ iff $\interp{\tvec{x}}_e
    \in \interp{p}_w$;
    \item[Flower]
    $\eforces{w}{\flower{\garden{\bx}{\Phi}}{\fset{i}{n}{\garden{\bx_i}{\Phi_i}}}}{e}$
    iff for every $w' \revaccess w$ and every \evaluation{w'} $e'$, if
    $\eforces{w'}{\Phi}{\update{e}{\bx}{e'}} $ then there is some $1 \leq i \leq
    n$ and \evaluation{w'} $e''$ such that
    $\eforces{w'}{\Phi_i}{\update{\update{e}{\bx}{e'}}{\bx_i}{e''}}$.
    \item[Bouquet] $\eforces{w}{\Phi}{e}$ iff $\eforces{w}{\phi}{e}$
    for every $\phi \in \Phi$.
  \end{description}
\end{definition}

Lastly, we define the notion of \emph{semantic entailment} $\Phi \kentails{}
\Psi$ on \kl{bouquets}, mirroring the syntactic entailment $\Phi \entails{}
\Psi$ of the last section:

\begin{definition}
\AP
  Let $\mathcal{K}$ be a \kl{Kripke structure}, and $\Phi, \Psi$ some
  \kl{bouquets}. We say that $\Phi$ \intro{semantically entails} $\Psi$ in
  $\mathcal{K}$, written $\Phi \intro*\kentails{\mathcal{K}} \Psi$, when
  $\eforces{w}{\Phi}{e}$ implies $\eforces{w}{\Psi}{e}$ for every \kl{world} $w \in
  W$ and \evaluation{w} $e$. This entailment is \intro{valid} if it holds for any
  \kl{Kripke structure} $\mathcal{K}$, and in that case we simply write $\Phi
  \kentails{} \Psi$. We say that $\Phi$ is \intro{semantically equivalent} to
  $\Psi$, written $\Phi \intro*\kequiv{} \Psi$, when $\Phi \kentails{} \Psi$ and
  $\Psi \kentails{} \Phi$.
\end{definition}

\section{Completeness}\label{sec:Completeness}



We now outline a direct completeness proof for the \kl{natural} fragment
$\Nature$ of the \kl{flower calculus}: every true \kl{flower} $\phi$ is
naturally provable, i.e. $\kentails{} \phi$ implies $\entails{\Nature} \phi$. Since
this fragment is \kl{analytic}, we cannot reuse most completeness proofs from
the literature, because they usually rely on a non-\kl{analytic} principle like
the cut rule of sequent calculus. Our insight was to adapt techniques from the
\emph{semantic cut-elimination} proof given by Hermant in
\cite{hutchison_semantic_2005}, which is nonetheless relatively close to the
original completeness proof of Gödel. A novelty of our proof is that it
dispenses completely with the need for \emph{Henkin witnesses}.


\AP
First we need to generalize our notions of syntactic and semantic entailment to
possibly \emph{infinite} sets of \kl{flowers}, so-called \emph{theories}:

\begin{definition}\label{def:theory}
  Any set $\tT \subseteq \flowers$ of \kl{flowers} is called a \intro{theory}.
  In particular, a \kl{bouquet} can be regarded as a finite \kl{theory}, by
  forgetting the number of repetitions of its elements. We say that a
  \kl{bouquet} $\Phi$ is \reintro{provable from} a \kl{theory} $\tT$, written
  $\tT \reintro*\entails{} \Phi$, if there exists a \kl{bouquet} $\Psi \subseteq
  \tT$ such that $\Psi \entails{} \Phi$. Given a \kl{Kripke structure}
  $\mathcal{K}$, a \kl{world} $w$ in $\mathcal{K}$ and a \evaluation{w} $e$, we
  say that $\tT$ is \reintro{forced} by $w$ under $e$, written
  $\reintro*\eforces{w}{\tT}{e}$, if $\eforces{w}{\phi}{e}$ for all $\phi \in
  \tT$. Then $\Phi$ is a \reintro{consequence} of $\tT$, written $\tT
  \reintro*\kentails{\mathcal{K}} \Phi$, if $\eforces{w}{\tT}{e}$ implies
  $\eforces{w}{\Phi}{e}$ for every \kl{world} $w$ in $\mathcal{K}$ and
  \evaluation{w} $e$.
\end{definition}


\begin{definition}
\AP
  A \kl{theory} $\tT$ is said to be \intro*\consistent{\psi} when $\tT
  \nentails{\Nature} \psi$, and \intro*\complete{\psi} when for all $\phi \in
  \flowers$, either $\tT, \phi \entails{\Nature} \psi$ or $\phi \in \tT$.
\end{definition}

Intuitively, a theory $\tT$ is \consistent{\psi} when one cannot deduce $\psi$
from it, and \complete{\psi} when it \emph{decides} any formula $\phi$
relatively to $\psi$. This is better understood by considering the special case
where $\psi = (\flower{{}}{{}})$ is the \emph{absurd} flower: then consistency
means that one cannot derive any contradiction from $\tT$; and completeness that
$\tT$ either \emph{refutes} $\phi$ syntactically with a proof of $\flower{\Phi,
\phi}{(\flower{{}}{{}})}$ for some $\Phi \subseteq \tT$, or already \emph{validates}
it ``semantically'', i.e. without the need for a proof since $\phi \in \tT$.


The next two propositions constitute the central argument that allows the
completeness proof to go through despite the \kl{analyticity} of
$\Nature$-rules. They are a direct adaptation of \cite[Proposition
7]{hutchison_semantic_2005}, which Hermant identifies as ``an important property
of any $A$-consistent, $A$-complete theory, [...] that it enjoys some form of
the subformula property''.

Roughly, the first proposition captures the intuitionistic truth-conditions that
make a flower \emph{valid} (i.e. true in every model) by modelling them on
material implication, just like Peirce would do with his \kl{scroll} (see
Section \ref{sec:Flowers}): $\phi$ is true if the content $\Phi_i$ of one of its
petals (consequents) is, or if the content $\Phi$ of its pistil (antecedent) is
not.

\begin{proposition}[Analytic truth]\label{prop:inv-elem-flower}
  
  Let $\psi \in \flowers$, $\tT$ some \consistent{\psi} and \complete{\psi}
  \kl{theory}, and $\phi = \flower{\garden{\bx}{\Phi}}{\Delta}$ with $\Delta =
  \fset{i}{n}{\delta_i} = \fset{i}{n}{\garden{\bx_i}{\Phi_i}}$ such that $\phi
  \in \tT$. Then for every \kl{substitution} $\sigma : \bx$, either $\sigma(\Phi_i)
  \subseteq \tT$ for some $1 \leq i \leq n$, or $\tT \nentails{\Nature}
  \sigma(\Phi)$.
\end{proposition}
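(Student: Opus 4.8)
The plan is to argue by contradiction, establishing the contrapositive of the stated disjunction. Fixing the \kl{substitution} $\sigma : \bx$, I assume that $\tT \entails{\Nature} \sigma(\Phi)$ (negating the second alternative) and that $\sigma(\Phi_i) \not\subseteq \tT$ for every $1 \leq i \leq n$ (negating the first), and I aim to derive $\tT \entails{\Nature} \psi$, which contradicts the assumption that $\tT$ is \consistent{\psi}. Hence, whenever $\tT \entails{\Nature} \sigma(\Phi)$, some $\sigma(\Phi_i)$ must already be a subset of $\tT$, as required.

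First I turn the failures $\sigma(\Phi_i) \not\subseteq \tT$ into provability of $\psi$. For each $i$ I pick a \kl{flower} $\chi_i \in \sigma(\Phi_i)$ with $\chi_i \notin \tT$. Since $\tT$ is \complete{\psi}, the condition $\chi_i \notin \tT$ forces $\tT, \chi_i \entails{\Nature} \psi$, and because \kl{hypothetical provability} is monotone in its \kl{theory} (the witnessing finite subset survives the enlargement), this upgrades to $\tT, \sigma(\Phi_i) \entails{\Nature} \psi$ for every $i$. Next I bring the hypothesis $\phi = \flower{\garden{\bx}{\Phi}}{\Delta} \in \tT$ into play: working in a \kl{context} $\cXi$ in which the relevant finite part of $\tT$ (namely $\phi$ together with a subset witnessing $\tT \entails{\Nature} \sigma(\Phi)$) is pollinable, I copy $\phi$ by \kl{pollination} and instantiate its pistil with $\sigma$ via the rule \kl{ipis} — legitimate by the \kl{Barendregt convention}, which keeps $\sigma$ \kl{capture-avoiding} — yielding the instance $\flower{\garden{{}}{\sigma(\Phi)}}{\sigma(\Delta)}$ with $\sigma(\Delta) = \fset{i}{n}{\garden{\bx_i}{\sigma(\Phi_i)}}$.

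The decisive step is to \emph{fire} this instance. Because $\tT \entails{\Nature} \sigma(\Phi)$, the content of its pistil is itself derivable from the surrounding hypotheses, so the $\Nature$-rules let me consume the pistil and expose the \kl{bouquet} of \kl{petals} $\sigma(\Delta)$, after which a case analysis over the petals is performed. In the $i$-th branch the \kl{flowers} $\sigma(\Phi_i)$ become available inside the \kl{petal}, whose \kl{binders} $\bx_i$ play the role of fresh eigenvariables: by the \kl{Barendregt convention} they are absent from $\psi$ and from the rest of $\tT$, which is exactly why the existential elimination needs no Henkin witnesses, and the derivation $\tT, \sigma(\Phi_i) \entails{\Nature} \psi$ from the previous paragraph closes that branch. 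Collecting the branches gives $\tT \entails{\Nature} \psi$, the sought contradiction.

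I expect this firing-and-case-analysis step to be the main obstacle, since it must be realized entirely within the \kl{analytic} fragment $\Nature$. Concretely, one has to discharge the \kl{pollination} side-conditions that make $\sigma(\Phi)$ and each hypothesis $\sigma(\Phi_i)$ available at the correct \kl{depth} inside the instantiated \kl{flower}, and to package the petal split together with the eigenvariable bookkeeping for the $\bx_i$ as an admissible derivation pattern — a task for which I would lean on the deduction theorem (Theorem~\ref{thm:flowers-deduction}) to convert the branch hypotheses into provable \kl{flowers} that can be pollinated where they are needed.
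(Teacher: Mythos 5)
Your outer argument coincides with the paper's: negate both disjuncts, use \kl[\complete]{$\psi$-completeness} to turn each witness $\sigma(\phi_i)\notin\tT$ into $\tT,\sigma(\phi_i)\entails{\Nature}\psi$, and contradict \kl[\consistent]{$\psi$-consistency} by exhibiting a $\Nature$-derivation of $\psi$ from a finite part of $\tT$ together with $\phi$. But the content of the proposition is precisely the ``firing-and-case-analysis step'' that you defer as ``the main obstacle'', and your sketch of it is missing the one structural idea that makes it go through. You cannot ``consume the pistil and expose the petals'' of the instantiated copy of $\phi$ at top level: the only case-splitting rule in $\Nature$ is \kl{srep}, and it requires the disjunctive \kl{flower} to sit (with an \emph{empty} pistil) inside the pistil of a host flower whose petals then absorb the $\gamma_i$. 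The paper's derivation therefore begins by applying \kl{epis} to wrap the goal $\psi$ as $\flower{\garden{{}}{{}}}{\garden{{}}{\psi}}$, iterates $\phi$ into that pistil with \kl{poll{\ua}}, duplicates and instantiates with \kl{ipis}, erases $\sigma(\Phi)$ from the instance's pistil using the assumed derivation $\Psi\entails{\Nature}\sigma(\Phi)$ (legitimate because $\Psi\subseteq\tT$ stays \kl{pollinated} at that deeper location), and only then applies \kl{srep}, producing the bouquet $\fset{i}{n}{\flower{\garden{\bx_i}{\sigma(\Phi_i)}}{\garden{{}}{\psi}}}$ inside the host petal. Each branch then closes because $\sigma(\phi_i)$ is available in the petal containing $\psi$ by \kl{self-pollination}, so the derivation witnessing $\Psi_i'\cup\sigma(\phi_i)\entails{\Nature}\psi$ erases $\psi$, and $n+1$ applications of \kl{epet} finish. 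Without the initial \kl{epis} wrapping there is no pistil to iterate $\phi$ into and no petal for \kl{srep} to distribute over, so your plan as written cannot be executed.

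Two smaller points. First, your proposed reliance on the deduction theorem (Theorem~\ref{thm:flowers-deduction}) to ``convert the branch hypotheses into provable flowers'' is a detour the paper does not need and that does not obviously help: the branch hypotheses are already available by \kl{pollination} (cross-pollination for the part drawn from $\tT$, self-pollination for $\sigma(\phi_i)$ sitting in the branch pistil), which is exactly what the definition of $\entails{\Nature}$ lets you exploit. Second, the ``eigenvariable bookkeeping'' for the $\bx_i$ is a non-issue: the binders simply remain in the pistils $\garden{\bx_i}{\sigma(\Phi_i)}$ of the branch flowers, and no existential-elimination or freshness argument beyond the \kl{Barendregt convention} already in force is required.
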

\begin{proof}
  Suppose the contrary, i.e. there is a \kl{substitution} $\sigma$ such that
  $\tT \entails{\Nature} \sigma(\Phi)$ and for all $1 \leq i \leq n$, there is
  some \Hyp{$\phi_i \in \Phi_i$}~{\HOne} such that $\sigma(\phi_i) \not\in \tT$.
  Thus by \kl[\complete]{$\psi$-completeness} of $\tT$, we get $\tT,
  \sigma(\phi_i) \entails{\Nature} \psi$. So there are $\Psi \subseteq \tT$ and
  $\Psi_i \subseteq \tT \cup \sigma(\phi_i)$ such that \Hyp{$\Psi
  \entails{\Nature} \sigma(\Phi)$}~{\HTwo} and \Hyp{$\Psi_i \entails{\Nature}
  \psi$}~{\HThree}. Now it cannot be the case that $\Psi_i \subseteq \tT$,
  otherwise by weakening and \kl[\consistent]{$\psi$-consistency} of $\tT$ we
  would have $\Psi_i \nentails{\Nature} \psi$. So there must exist $\Psi_i'
  \subseteq \tT$ such that \Hyp{$\Psi_i = \Psi_i' \cup
  \sigma(\phi_i)$}~{\HFour}. Again by weakening and
  \kl[\consistent]{$\psi$-consistency} of $\tT$, we get $\Psi, \bigcup_{i =
  1}^{n}{\Psi_i'}, \phi \nentails{\Nature} \psi$. Now we derive a contradiction
  by showing $\Psi, \bigcup_{i = 1}^{n}{\Psi_i'}, \phi \entails{\Nature} \psi$.
  Let $\cXi$ be a \kl{context} such that \Hyp{$\chyp{\Psi, \bigcup_{i =
  1}^{n}{\Psi_i'}, \phi}{\cXi}$}~{\HFive}. Then $\cfill{\cXi}{\psi}
  \steps{\Nature} \cfill{\cXi}{}$ with the following \kl{derivation}:
  $$
  \begin{array}{rlll}
    \cfill{\cXi}{\psi}
    &\step{\kl{epis}} &\cfill{\cXi}{\flower{\garden{{}}{{}}}{\garden{{}}{\psi}}} & \\
    &\step{\kl{poll{\ua}}} &\cfill{\cXi}{\flower{\garden{{}}{\phi}}{\garden{{}}{\psi}}} &\text{(\HFive)} \\
    &\step{\kl{ipis}} &\cfill{\cXi}{\flower{\garden{{}}{(\flower{\garden{{}}{\sigma(\Phi)}}{\sigma(\Delta)}), \phi}}{\garden{{}}{\psi}}} & \\
    &\step{\kl{poll{\da}}} &\cfill{\cXi}{\flower{\garden{{}}{(\flower{\garden{}{{}}}{\sigma(\Delta)}), \phi}}{\garden{{}}{\psi}}} &\text{(\HTwo, \HFive)} \\
    &\step{\kl{srep}} &\cfill{\cXi}{\flower{\garden{{}}{\phi}}{\garden{{}}{\fset{i}{n}{\flower{\sigma(\delta_i)}{\garden{{}}{\psi}}}}}} & \\
    &= &\cfill{\cXi}{\flower{\garden{{}}{\phi}}{\garden{{}}{\fset{i}{n}{\flower{\garden{\bx_i}{\sigma(\Phi_i)}}{\garden{{}}{\psi}}}}}} & \\
    &\nsteps{n}{\kl{poll{\da}}} &\cfill{\cXi}{\flower{\garden{{}}{\phi}}{\garden{{}}{\fset{i}{n}{\flower{\garden{\bx_i}{\sigma(\Phi_i)}}{\garden{{}}{{}}}}}}} &\text{(\HOne, \HThree, \HFour, \HFive)} \\
    &\nsteps{n}{\kl{epet}} &\cfill{\cXi}{\flower{\garden{{}}{\phi}}{\garden{{}}{{}}}} & \\
    &\step{\kl{epet}} &\cfill{\cXi}{} &
  \end{array}
  $$
\end{proof}

Dually, the second proposition captures the grounds on which a flower can be
deemed \emph{invalid} (i.e. false in at least one model): $\phi$ is not true if
assuming that its pistil $\Phi$ is true is not sufficient to conclude that one
of its petals $\Phi_i$ is.

\begin{proposition}[Analytic refutation]\label{prop:inv-deriv-flower}
  
  Let $\psi \in \flowers$, $\tT$ some \consistent{\psi} and \complete{\psi}
  \kl{theory}, and $\phi = \flower{\garden{\bx}{\Phi}}{\Delta}$ with $\Delta =
  \fset{i}{n}{\delta_i} = \fset{i}{n}{\garden{\bx_i}{\Phi_i}}$ such that $\tT
  \nentails{\Nature} \phi$. Then for every $1 \leq i \leq n$ and \kl{substitution}
  $\sigma : \bx_i$, there is some $\phi_i \in \Phi_i$ such that $\tT, \Phi
  \nentails{\Nature} \sigma(\phi_i)$.
\end{proposition}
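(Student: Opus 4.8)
The plan is to prove the contrapositive. Fix an index $1 \leq i \leq n$ and a substitution $\sigma : \bx_i$, and suppose that, contrary to the desired conclusion, $\tT, \Phi \entails{\Nature} \sigma(\phi_i)$ holds for \emph{every} $\phi_i \in \Phi_i$. I will show that this forces $\tT \entails{\Nature} \phi$, contradicting the hypothesis $\tT \nentails{\Nature} \phi$; hence at least one $\phi_i \in \Phi_i$ must satisfy $\tT, \Phi \nentails{\Nature} \sigma(\phi_i)$, which is the claim.

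First I would collect witnesses. For each $\phi_i \in \Phi_i$, unfolding $\tT, \Phi \entails{\Nature} \sigma(\phi_i)$ yields a finite bouquet $\Psi_{\phi_i} \subseteq \tT \cup \Phi$ with $\Psi_{\phi_i} \entails{\Nature} \sigma(\phi_i)$. Let $\Psi_0 \subseteq \tT$ be the union of the theory parts $\Psi_{\phi_i} \cap \tT$ taken over all $\phi_i \in \Phi_i$; this is finite, and each $\Psi_{\phi_i} \subseteq \Psi_0 \cup \Phi$. It then suffices to establish $\Psi_0 \entails{\Nature} \phi$, that is $\cfill{\cXi}{\phi} \steps{\Nature} \cfill{\cXi}{}$ for every context $\cXi$ with $\chyp{\Psi_0}{\cXi}$.

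So I fix such a context $\cXi$ and build the derivation in three stages. \emph{(i)} Apply $\mathsf{ipet}$ to the petal $\delta_i = \garden{\bx_i}{\Phi_i}$, instantiating all of its binders with $\sigma$ (taking an empty residual binder set, and invoking the Barendregt convention so that $\sigma$ is capture-avoiding in $\Phi_i$); this rewrites $\cfill{\cXi}{\flower{\garden{\bx}{\Phi}}{\Delta}}$ to $\cfill{\cXi}{\flower{\garden{\bx}{\Phi}}{\garden{{}}{\sigma(\Phi_i)} \sep \Delta}}$, adding a fresh petal whose content is exactly $\sigma(\Phi_i)$. \emph{(ii)} Erase the flowers of $\sigma(\Phi_i)$ from this fresh petal one at a time: its interior is a context in which the pistil bouquet $\Phi$ is available by self-pollination and the outer fragment $\Psi_0$ remains available by cross-pollination, so each $\Psi_{\phi_i} \subseteq \Psi_0 \cup \Phi$ can be pollinated there, and $\Psi_{\phi_i} \entails{\Nature} \sigma(\phi_i)$ lets us delete every $\sigma(\phi_i)$, leaving the empty petal $\garden{{}}{{}}$. \emph{(iii)} Apply $\mathsf{epet}$ to collapse the whole flower, reaching $\cfill{\cXi}{}$.

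The main obstacle is the pollination bookkeeping in stage (ii): one must verify that pushing the hole from $\cXi$ down through the flower into the newly created petal preserves the pollinability of the outer theory fragment $\Psi_0$ (cross-pollination) while simultaneously making the pistil bouquet $\Phi$ available (self-pollination), so that the combined witness $\Psi_{\phi_i}$ genuinely pollinates at the erasure site. This rests on a monotonicity property of $\chyp{{-}}{{-}}$ under descent into positive petal positions, which I would isolate as an auxiliary lemma. Once it is in place, the three-stage derivation yields $\Psi_0 \entails{\Nature} \phi$ and hence $\tT \entails{\Nature} \phi$, the sought contradiction. I note that the argument uses only the hypothesis $\tT \nentails{\Nature} \phi$; the $\psi$-consistency and $\psi$-completeness of $\tT$ are not needed here, being what drives the companion Analytic truth proposition instead.
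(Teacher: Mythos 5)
Your proposal is correct and follows essentially the same route as the paper: negate the conclusion, extract a finite witness bouquet from $\tT$ (your $\Psi_0$, the paper's $\Psi$ together with $\Phi_0 \subseteq \Phi$), and refute $\tT \nentails{\Nature} \phi$ via the three-step derivation \kl{ipet} (instantiating $\delta_i$ with $\sigma$), then erasure of $\sigma(\Phi_i)$ by \kl{poll{\da}} inside the fresh petal (where $\Phi$ is available by \kl{self-pollination} and $\Psi_0$ by \kl{cross-pollination}), then \kl{epet}. The only differences are presentational --- you erase the flowers of $\sigma(\Phi_i)$ one at a time where the paper treats the bouquet at once, and you explicitly isolate the monotonicity of $\chyp{{-}}{{-}}$ under context extension, which the paper uses silently --- and your observation that $\psi$-consistency and $\psi$-completeness are not actually needed here matches the paper's proof, which likewise never invokes them.
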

\begin{proof}
  Suppose the contrary, i.e. there are some $1 \leq i \leq n$ and $\sigma :
  \bx_i$ such that $\tT, \Phi \entails{\Nature} \sigma(\Phi_i)$.
  Therefore there must exist $\Psi \subseteq \tT$ and \Hyp{$\Phi_0 \subseteq
  \Phi$}~{\HOne} such that \Hyp{$\Psi, \Phi_0 \entails{\Nature}
  \sigma(\Phi_i)$}~{\HTwo}. By hypothesis, for every $\Phi' \subseteq \tT$ there
  is a \kl{context} $\cXi$ such that $\chyp{\Phi'}{\cXi}$ and $\cfill{\cXi}{\phi}
  \notsteps{\Nature} \cfill{\cXi}{}$. We now derive a contradiction
  by showing $\cfill{\cXi}{\phi} \steps{\Nature} \cfill{\cXi}{}$ for
  all $\cXi$ such that \Hyp{$\chyp{\Psi}{\cXi}$}~{\HThree}:
  $$
  \begin{array}{rlll}
    \cfill{\cXi}{\phi}
    &\step{\kl{ipet}} &\cfill{\cXi}{\flower{\garden{\bx}{\Phi}}{\garden{{}}{\sigma(\Phi_i)} \sep \Delta}} & \\
    &\step{\kl{poll{\da}}} &\cfill{\cXi}{\flower{\garden{\bx}{\Phi}}{\garden{{}}{{}} \sep \Delta}} &\text{(\HOne, \HTwo, \HThree)} \\
    &\step{\kl{epet}} &\cfill{\cXi}{} &
  \end{array}
  $$
\end{proof}


Next, we define the so-called \emph{universal Kripke structure} $\rewolF{\psi}$
relative to a \kl{flower} $\psi$:

\begin{definition}\label{def:kcanon}
\AP
  Let $\psi \in \flowers$. The \intro{universal Kripke structure}
  $\intro*\rewolF{\psi}$ has:
  \begin{itemize}
    \item The set of \consistent{\psi} and \complete{\psi} \kl{theories} as its \kl{worlds};
    \item Set inclusion $\subseteq$ as its \kl{accessibility} relation;
    \item For each \kl{world} $\tT$, a \kl{first-order structure} whose \kl{domain}
      is the set of variables $\vars$, and whose \kl{interpretation} map is
      given by $\interp{p}_\tT = \compr{\tvec{x}}{p(\tvec{x}) \in \tT}$.
  \end{itemize}
  One can easily check that the monotonicity conditions of \kl{Kripke
  structures} hold for $\rewolF{\psi}$.
\end{definition}

We are now equipped to formulate the main \emph{adequacy} lemma, which relates
\kl{forcing} in $\rewolF{\psi}$ to \kl[\consistent]{$\psi$-consistency} and
\kl[\complete]{$\psi$-completeness}:

\begin{lemma}[Adequacy]\label{lem:adequacy}

  Let $\phi, \psi \in \flowers$, $\tT$ a \consistent{\psi} and \complete{\psi}
  \kl{theory}, and $\sigma$ a \kl{substitution}.
  Then
  \begin{enumerate*}
    \item $\sigma(\phi) \in \tT$ implies $\eforces{\tT}{\phi}{\sigma}$, and
    \item $\tT \nentails{\Nature} \sigma(\phi)$ implies $\neforces{\tT}{\phi}{\sigma}$
  \end{enumerate*}.
\end{lemma}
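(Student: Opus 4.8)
The plan is to prove the two statements simultaneously by mutual induction on the \kl{depth} $\fdepth{\phi}$, using that the pistil bouquet $\Phi$ and each petal bouquet $\Phi_i$ of a non-atomic \kl{flower} $\phi = \flower{\garden{\bx}{\Phi}}{\fset{i}{n}{\garden{\bx_i}{\Phi_i}}}$ have strictly smaller depth. The simultaneous formulation is essential: unfolding the \kl{forcing} clause for $\phi$ crosses the pistil/petal boundary, and I will discharge \kl{forcing} of the pistil through statement~(2) of the induction hypothesis and \kl{forcing} of a petal through statement~(1), so the two directions genuinely feed into each other. Throughout I would invoke the \kl{Barendregt convention} to keep $\bx$ and the $\bx_i$ fresh and untouched by $\sigma$, turning all the substitution-composition bookkeeping below into routine rewriting.

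For the base case $\phi = p(\tvec{x})$ both parts are immediate from the shape of $\rewolF{\psi}$. For (1), $\sigma(\phi) = p(\sigma(\tvec{x})) \in \tT$ unfolds, via $\interp{p}_\tT = \compr{\tvec{y}}{p(\tvec{y}) \in \tT}$ and $\interp{\tvec{x}}_\sigma = \sigma(\tvec{x})$, to exactly $\eforces{\tT}{p(\tvec{x})}{\sigma}$. For (2), I would argue by contraposition: since $p(\sigma(\tvec{x})) \in \tT$ already yields $\tT \entails{\Nature} p(\sigma(\tvec{x}))$ by reflexivity of provability (a single \kl{poll{\da}} step erasing a \kl{pollinated} atom), the hypothesis $\tT \nentails{\Nature} \sigma(\phi)$ forces $p(\sigma(\tvec{x})) \notin \tT$, which is $\neforces{\tT}{p(\tvec{x})}{\sigma}$.

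For the inductive step of (1), assume $\sigma(\phi) \in \tT$ and fix a future \kl{world} $\tU \supseteq \tT$ together with a $\tU$-evaluation $e'$ such that $\eforces{\tU}{\Phi}{\update{\sigma}{\bx}{e'}}$. Applying the contrapositive of induction hypothesis~(2) to each (smaller) \kl{flower} of $\Phi$ gives $\tU \entails{\Nature} \tau(\sigma(\Phi))$, where $\tau = \update{\idsubst}{\bx}{e'}$; Proposition~\ref{prop:inv-elem-flower} (Analytic truth), applied to $\sigma(\phi) \in \tU$ with the substitution $\tau$, then rules out its second disjunct and delivers $\tau(\sigma(\Phi_i)) \subseteq \tU$ for some $i$. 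Induction hypothesis~(1) applied to the \kl{flowers} of the smaller $\Phi_i$, with the evaluation $\update{\update{\sigma}{\bx}{e'}}{\bx_i}{\idsubst}$ (taking the fresh $\bx_i$ as their own witnesses), produces $\eforces{\tU}{\Phi_i}{\ldots}$, which is precisely the witness the \kl{forcing} clause demands. The only technical point is the identification of $(\update{\sigma}{\bx}{e'})$ acting on $\Phi$ and $\Phi_i$ with $\tau \circ \sigma$, which the freshness conventions make transparent.

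The inductive step of (2) is the delicate direction, and I expect the counter-world construction to be the main obstacle. Given $\tT \nentails{\Nature} \sigma(\phi)$, I would build a future \kl{world} $\tU \supseteq \tT$ that \kl[forced]{forces} the pistil while refuting every petal: by induction hypothesis~(1) it suffices to arrange $\sigma(\Phi) \subseteq \tU$ for the pistil, and to refute the petals I would feed Proposition~\ref{prop:inv-deriv-flower} (Analytic refutation) — which returns, for each petal index $i$ and each choice of witnesses $e''$, a \kl{flower} of $\Phi_i$ that $\tT$ together with $\sigma(\Phi)$ fails to prove after the relevant instantiation — into the contrapositive of induction hypothesis~(2), so that each petal bouquet is left unforced at $\tU$. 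The crux is producing $\tU$ as a genuine \kl{world} of $\rewolF{\psi}$, i.e.\ a \consistent{\psi}, \complete{\psi} \kl{theory}: this means extending $\tT$ by $\sigma(\Phi)$ while preserving $\psi$-consistency and then saturating by a Lindenbaum-style completion, and checking that both \kl{forcing} of the pistil and non-\kl{forcing} of the petals persist to the larger, saturated \kl{theory}. Reconciling the syntactic non-provability statements handed back by Analytic refutation with membership in the completed \kl{theory} — and in particular ensuring the extension does not accidentally prove $\psi$ — is exactly where the \kl{analyticity} of the $\Nature$-fragment, packaged in Propositions~\ref{prop:inv-elem-flower} and~\ref{prop:inv-deriv-flower}, does the work that a cut rule would otherwise perform.
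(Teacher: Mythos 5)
Your overall strategy --- a single induction on $\fdepth{\phi}$ proving the two statements simultaneously, the base case read off from the definition of $\rewolF{\psi}$ together with reflexivity of $\entails{\Nature}$, case~(1) discharged by Proposition~\ref{prop:inv-elem-flower} feeding both halves of the induction hypothesis, and case~(2) by Proposition~\ref{prop:inv-deriv-flower} followed by a completion --- is exactly the paper's. The base case and the inductive step of~(1) are correct as you describe them (the paper phrases~(1) via the disjunction of Analytic truth rather than via the contrapositive of~(2), but the logical content is identical).

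The gap is in your counter-world construction for case~(2). You insist that $\tU$ be a \consistent{\psi} and \complete{\psi} \kl{theory}, i.e.\ that the completion of $\tT \cup \restr{\sigma}{\bx}(\Phi)$ be taken relative to $\psi$, and you locate the difficulty in ``ensuring the extension does not accidentally prove $\psi$''. That is the wrong target, and the argument breaks there twice over. First, nothing guarantees that $\tT \cup \restr{\sigma}{\bx}(\Phi)$ is \consistent{\psi}: the hypothesis only gives $\tT \nentails{\Nature} \sigma(\phi)$, not $\tT, \restr{\sigma}{\bx}(\Phi) \nentails{\Nature} \psi$. Second, even if it were, a completion performed relative to $\psi$ gives no control over provability of the petal instances, so the non-provability facts handed back by Proposition~\ref{prop:inv-deriv-flower} need not survive saturation, and you could no longer invoke induction hypothesis~(2) to leave the petals unforced at $\tU$. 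The paper instead completes relative to the \emph{instantiated petal element} itself: Proposition~\ref{prop:inv-deriv-flower} yields some $\phi_i \in \Phi_i$ with $\tT, \restr{\sigma}{\bx}(\Phi) \nentails{\Nature} \chi$ for $\chi$ the relevant instance of $\phi_i$, which is exactly the $\chi$-consistency of $\tT \cup \restr{\sigma}{\bx}(\Phi)$; Lemma~\ref{lem:completion-consistent-complete} then makes $\tU = \completion{\tT \cup \restr{\sigma}{\bx}(\Phi)}$ \consistent{\chi} and \complete{\chi}, which is precisely the form in which induction hypothesis~(2) applies (the lemma is stated for an \emph{arbitrary} $\psi$ for exactly this reason). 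The accessible \kl{world} one moves to is therefore consistent and complete with respect to a different flower than the one at $\tT$; this change of target from world to world is the mechanism that replaces Henkin witnesses, and your plan, by trying to remain inside the $\psi$-indexed collection of theories, forfeits it.
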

\begin{proof}
  The proof goes by induction on $\fdepth{\phi}$. We only give an informal
  sketch, see Appendix \ref{app:Proofs-completeness} for the detailed proof.
  There are just two cases to consider:
  \begin{description}
    \item[Base case] $\phi = p(\tvec{x})$. The first statement is trivial. The
    second statement is immediate from reflexivity and weakening lemmas on the
    hypothetical provability relation ${\entails{}}$.
    \item[Recursive case] $\phi = \flower{\gamma}{\Delta}$. The first statement
    follows from Proposition \ref{prop:inv-elem-flower}. The second statement
    follows from Proposition \ref{prop:inv-deriv-flower}, as well as the
    existence and properties of the completion procedure.
  \end{description}
\end{proof}

We get the completeness theorem as a near-direct consequence:

\begin{theorem}[Completeness]\label{thm:flowers-completeness}
  $\Phi \kentails{} \Psi$ implies $\Phi \entails{\Nature} \Psi$.
\end{theorem}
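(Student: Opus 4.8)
The plan is to prove the contrapositive: assuming $\Phi \nentails{\Nature} \Psi$, I would exhibit a \kl{Kripke structure} and a \kl{world} at which $\Phi$ is \kl{forced} but $\Psi$ is not, thereby refuting $\Phi \kentails{} \Psi$. The structure will be an instance of the \kl{universal Kripke structure} $\rewolF{\psi}$, and the whole argument rests on the Adequacy lemma (Lemma~\ref{lem:adequacy}), which is precisely the bridge between membership (resp.\ non-provability) in a \consistent{\psi} and \complete{\psi} \kl{theory} and \kl{forcing} (resp.\ its failure) in $\rewolF{\psi}$.

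First I would reduce the \kl{bouquet} goal to a single \kl{flower}. Since \kl{forcing} a \kl{bouquet} means forcing each of its \kl{flowers}, and dually one can erase the elements of a \kl{bouquet} one at a time --- using that \kl{pollination} of $\Phi$ is preserved when sibling \kl{flowers} are added to the \kl{context} --- \kl{hypothetical provability} of a \kl{bouquet} reduces to that of each element. Hence $\Phi \nentails{\Nature} \Psi$ yields some $\psi \in \Psi$ with $\Phi \nentails{\Nature} \psi$, and I take this $\psi$ as the parameter of $\rewolF{\psi}$.

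Next I set up the countermodel. Viewing the finite \kl{bouquet} $\Phi$ as a \kl{theory}, the relation $\Phi \nentails{\Nature} \psi$ says exactly that $\Phi$ is \consistent{\psi}. A Lindenbaum-style completion then extends $\Phi$ to a \kl{theory} $\tT$ that is both \consistent{\psi} and \complete{\psi}; by construction $\tT$ is a \kl{world} of $\rewolF{\psi}$ with $\Phi \subseteq \tT$. I then apply Adequacy at $\tT$ with the identity \kl{substitution} $\idsubst$: part~(1) gives $\eforces{\tT}{\phi}{\idsubst}$ for each $\phi \in \Phi$ (as $\idsubst(\phi) = \phi \in \tT$), hence $\eforces{\tT}{\Phi}{\idsubst}$; and part~(2), applied to $\tT \nentails{\Nature} \psi$, gives $\neforces{\tT}{\psi}{\idsubst}$, hence $\neforces{\tT}{\Psi}{\idsubst}$. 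So $\tT$ \kl{forces} $\Phi$ but not $\Psi$, giving $\Phi \notkentails{} \Psi$ and contradicting the hypothesis.

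I expect the main obstacle to lie not in this final assembly --- which is near-mechanical once Adequacy is in hand --- but in securing the completion $\tT$: I need a Lindenbaum argument producing a \consistent{\psi} and \complete{\psi} \kl{theory} extending $\Phi$. The distinctive point, highlighted by the \emph{Henkin-free} design, is that quantification is absorbed into the \kl{flower} structure and discharged through \kl{substitutions}, so no witness constants need be added; the real combinatorial work is the \emph{subformula property} captured by Propositions~\ref{prop:inv-elem-flower} and~\ref{prop:inv-deriv-flower}, which is what lets the induction on \kl{depth} inside Adequacy succeed without any cut rule.
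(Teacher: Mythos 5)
Your proposal is correct and follows essentially the same route as the paper's proof: reduce to a single flower $\psi\in\Psi$, observe that $\Phi$ is then a \consistent{\psi} theory, pass to its Lindenbaum completion $\completion{\Phi}$ (a world of $\rewolF{\psi}$), and apply both halves of Adequacy (Lemma~\ref{lem:adequacy}) with the identity substitution to obtain a world forcing $\Phi$ but not $\psi$. The only difference is cosmetic --- you make explicit the bouquet-to-flower reduction that the paper leaves implicit.
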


Combined with strong deduction (Theorem \ref{thm:flowers-strong-deduction}),
this also yields a strong completeness theorem for the full \kl{flower
calculus}\footnote{Actually it already works for the fragment $\Nature \cup
\{\kl{grow}\}$, thanks to the proof of the strong deduction theorem (see
Appendix \ref{app:Proofs-deduction}).}:

\begin{corollary}[Strong completeness]\label{thm:flowers-strong-completeness}
  $\Phi \kentails{} \Psi$ implies $\Psi \steps{} \Phi$.
\end{corollary}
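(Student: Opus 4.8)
The plan is to read the corollary off from the Completeness theorem together with the two deduction theorems, threading the direction of derivations carefully. Starting from $\Phi \kentails{} \Psi$, the first move is to apply Completeness (Theorem~\ref{thm:flowers-completeness}) verbatim, obtaining the hypothetical provability $\Phi \entails{\Nature} \Psi$.

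Next I would convert this hypothetical statement into a closed proof of a single flower. By the Deduction theorem (Theorem~\ref{thm:flowers-deduction}), instantiated with $\Phi$ as the hypothesis and $\Psi$ as the conclusion, $\Phi \entails{\Nature} \Psi$ is equivalent to $\entails{\Nature} \flower{\Phi}{\Psi}$, i.e. hypothetical provability of the flower $\flower{\Phi}{\Psi}$ from no hypotheses. Unfolding the definition of hypothetical provability with the empty hypothesis --- for which $\chyp{\emptyset}{\cXi}$ holds in every context $\cXi$ --- and instantiating the empty context $\cXi = \hole$, this yields $\cfill{\hole}{\flower{\Phi}{\Psi}} \steps{\Nature} \cfill{\hole}{}$, that is an honest $\Nature$-proof $\flower{\Phi}{\Psi} \steps{\Nature} \emptyset$. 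Since $\Nature \subseteq \Nature \cup \Culture$, this is a fortiori a proof $\flower{\Phi}{\Psi} \steps{} \emptyset$ in the full calculus.

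Finally I would apply the Strong deduction theorem (Theorem~\ref{thm:flowers-strong-deduction}) in its right-to-left direction, taking care of the built-in name swap: reading its equivalence $A \steps{} B$ iff $\flower{B}{A} \steps{} \emptyset$ with $A$ being our $\Psi$ and $B$ being our $\Phi$ gives $\Psi \steps{} \Phi$ iff $\flower{\Phi}{\Psi} \steps{} \emptyset$. The previous step furnishes the right-hand side, so we conclude $\Psi \steps{} \Phi$, as required.

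There is no deep obstacle here --- this is the announced ``near-direct consequence'' --- but the points that require attention are exactly the bookkeeping ones: (i) matching the reversal between the semantic entailment $\Phi \kentails{} \Psi$ and the syntactic reduction $\Psi \steps{} \Phi$ against the name swap inside Strong deduction; and (ii) justifying that the hypothesis-free provability $\entails{\Nature} \flower{\Phi}{\Psi}$ specializes, via the empty context, to a genuine closed proof which then counts as a full-calculus proof. An equivalent route, bypassing Theorem~\ref{thm:flowers-deduction}, would first observe the semantic equivalence $\Phi \kentails{} \Psi$ iff $\kentails{} \flower{\Phi}{\Psi}$ --- the extra ``for all accessible worlds'' quantifier in the \kl{forcing} clause of a flower being absorbed by quantifying over all worlds --- and then apply Completeness to the pair $(\emptyset, \flower{\Phi}{\Psi})$ before finishing with Strong deduction exactly as above.
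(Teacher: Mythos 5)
Your proposal is correct and follows exactly the route the paper intends: Completeness gives $\Phi \entails{\Nature} \Psi$, the Deduction theorem (instantiated at the empty hypothesis and empty context) turns this into a closed proof $\flower{\Phi}{\Psi} \steps{} \emptyset$, and the right-to-left direction of Strong deduction yields $\Psi \steps{} \Phi$. The bookkeeping points you flag (the name swap in Strong deduction and the specialization to the empty context) are handled correctly, so there is nothing to add.
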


Finally, the composition of the soundness, completeness and deduction theorems
(\ref{thm:flowers-soundness}, \ref{thm:flowers-completeness} and
\ref{thm:flowers-deduction}) gives the admissibility of $\Culture$-rules, and
thus the \kl{analyticity} of the \kl{flower calculus}:

\begin{corollary}[Cult-elimination]
  If $\Phi \entails{} \Psi$ then $\Phi \entails{\Nature} \Psi$.
\end{corollary}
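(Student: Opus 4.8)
The corollary asks only for the nontrivial inclusion, since $\Phi \entails{\Nature} \Psi$ trivially gives $\Phi \entails{} \Psi$ (as $\Nature \subseteq \Nature \cup \Culture$). The plan is to route the full syntactic entailment through the Kripke semantics and back down into the \kl{natural} fragment: from $\Phi \entails{} \Psi$ I will extract a closed \kl{proof}, feed it to soundness to recover the \kl{semantic entailment} $\Phi \kentails{} \Psi$, and then invoke completeness (Theorem~\ref{thm:flowers-completeness}) to land in $\Phi \entails{\Nature} \Psi$. In short, I read the suggested chain literally, with the deduction theorem playing the role of the bridge that turns a hypothetical statement into the provability of a single \kl{flower}.

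First I would turn the hypothesis $\Phi \entails{} \Psi$ into a closed proof of the implication \kl{flower} $\flower{\Phi}{\Psi}$. Instantiating the definition of \kl{hypothetical provability} at the canonical \kl{self-pollination} \kl{context} $\cK \defeq \flower{\Phi}{\hole}$ works, because $\chyp{\Phi}{\cK}$: every $\phi \in \Phi$ sits in the \kl{pistil} and is therefore available in the \kl{petal} holding the \kl{hole}. This yields $\cfill{\cK}{\Psi} = \flower{\Phi}{\Psi} \steps{} \cfill{\cK}{} = \flower{\Phi}{\garden{{}}{{}}}$, and one application of \kl{epet} to the empty \kl{petal} gives $\flower{\Phi}{\garden{{}}{{}}} \steps{} \emptyset$. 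Composing, $\flower{\Phi}{\Psi} \steps{} \emptyset$ is a \kl{proof} in the sense of Definition~\ref{def:flowers-proof}; this is exactly the syntactic half of the (strong) deduction theorem (Theorem~\ref{thm:flowers-strong-deduction}).

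Next I would apply soundness (Theorem~\ref{thm:flowers-soundness}) to this proof: every provable \kl{bouquet} is \kl{valid}, so $\kentails{} \flower{\Phi}{\Psi}$. Unfolding the \textbf{Flower} clause of the \kl{forcing} relation (Definition~\ref{def:forcing}) for $\flower{\garden{{}}{\Phi}}{\garden{{}}{\Psi}}$ shows that a \kl{world} $w$ forces this \kl{flower} under $e$ precisely when every $w' \revaccess w$ forcing $\Phi$ also forces $\Psi$; \kl{validity} at all worlds is thus verbatim the \kl{semantic entailment} $\Phi \kentails{} \Psi$. Feeding this into the completeness theorem (Theorem~\ref{thm:flowers-completeness}) delivers $\Phi \entails{\Nature} \Psi$, closing the argument; phrased at the level of entailment, this is precisely the \emph{admissibility} of the \kl{cultural} rules.

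I do not expect a genuine obstacle: all the real difficulty has already been absorbed into the completeness theorem, the delicate part of the paper. The only points demanding care are bookkeeping ones — verifying the \kl{pollination} side condition for $\cK$, keeping the backward (goal-reducing) reading of $\steps{}$ straight so that the directions of $\entails{}$ and $\kentails{}$ are correctly aligned, and recognising that the \textbf{Flower} forcing clause for $\flower{\Phi}{\Psi}$ is literally the definition of $\Phi \kentails{} \Psi$. The conceptual content is simply that the semantics acts as a common yardstick: the full calculus is no stronger than truth, and truth is already captured by the \kl{analytic} $\Nature$-fragment.
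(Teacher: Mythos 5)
Your proof is correct and follows essentially the route the paper intends, namely composing the deduction‑style passage to $\flower{\Phi}{\Psi}$, soundness, and completeness. The only cosmetic difference is that the paper discharges the implication \kl{flower} at the end via the deduction theorem (Theorem~\ref{thm:flowers-deduction}) after applying completeness to $\kentails{}\flower{\Phi}{\Psi}$, whereas you unfold the forcing clause to get $\Phi \kentails{} \Psi$ first and apply completeness directly — an equivalent and equally valid bookkeeping choice.
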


\section{Related works}\label{sec:Conclusion}

\subparagraph*{Intuitionistic \kl{EGs}}

We have already mentioned the seminal work of Oostra, who introduced in
\cite{oostra_graficos_2010} an intuitionistic version of \kl{Alpha}. In
\cite{oostra_graficos_2011} he describes its natural extension with \kl{LoIs} to get
an intuitionistic version of \kl{Beta}, and in
\cite{10.1007/978-3-030-86062-2_16} he gives formal soundness and completeness
proofs for intuitionistic \kl{Alpha}, based on a linear notation for \kl{graphs}.
Ma and Pietarinen have developed in \cite{minghui_graphical_2019} their own
system of intuitionistic \kl{EGs} for propositional logic, with a different set
of inference rules than Oostra's. They give a more systematic proof theory,
including deduction, soundness and completeness theorems with respect to Heyting
algebras.

Our work brings several new contributions on top of those:
\begin{description}
  \item[Variadicity] Our multiset-based definition of \kl{flowers} captures
  faithfully the \emph{variadic} nature of \kl{juxtaposition} and $n$-ary
  \kl{scrolls} in the diagrammatic syntax. In contrast, previous formalizations
  rely on a restricted inductive syntax which only captures \kl{graphs} that are
  isomorphic to formulas built with binary connectives.

  \item[Intuitionistic binders] While replacing \kl{LoIs} with \kl{binders}
  and variables has already been done by Sowa in the context of classical \kl{EGs}
  \cite{sowa_peirces_2011}, it seems like we are the first to adapt the idea to
  the intuitionistic setting.

  \item[Analyticity] To our knowledge, we are the first to give a Kripke
  semantics to a syntax based on \kl{EGs}, and to use this to obtain an
  \kl{analyticity} result\footnote{Ma and Pietarinen claim in
  \cite{ma_proof_2017} that \kl{Alpha} is analytic because it can simulate the
  cut rule of sequent calculus. This is a misinterpretation, since this supports
  precisely the \emph{contrary}: the ability to simulate the cut rule with a
  constant number of rules implies the \emph{non-analyticity} of one the rules
  involved (namely, Peirce's \textsf{Deletion} rule). Still, the notion of
  analyticity is not yet fully understood in deep inference systems, as
  discussed in \cite{bruscoli_analyticity_2019}.}.
  
  \item[Invertibility] The \kl{natural} fragment of the flower calculus appears
  to be the first proof system based on \kl{EGs} where all rules are
  \emph{invertible}.
\end{description}

\subparagraph*{Deep inference}

While the deep inference literature is most furnished with systems for classical
logic, a few works tackle intuitionistic logics: the seminal work of Tiu, who
proposed a \emph{calculus of structures} for intuitionistic \kl{FOL}
\cite{tiu_local_2006}, was followed by computational interpretations of the
implicative fragment in Guenot's thesis \cite{guenot_nested_2013}. There are
also \kl{nested sequent} systems for (propositional) full intuitionistic linear
logic \cite{clouston-annotation-free-2013}, standard and constant-domain
intuitionistic \kl{FOL} \cite{fitting-nested-2014}, and intuitionistic modal
logics
\cite{Chaudhuri2016ModularFP,kuznets_maehara-style_2019,lyon_nested_2021}. The
\kl{flower calculus} is closer to Guenot's \kl{nested sequent} calculi for
implicative logic which also function as rewriting systems, but extends them
to full intuitionistic \kl{FOL}.

\subparagraph*{Labelled sequent calculi}

For a long time, it was believed that there could not be fully invertible proof
systems for intuitionistic logics, even in the propositional case. While this
might be true in standard Gentzen formalisms, recent works have shown that it is
possible in the context of \emph{labelled sequent calculi}: first with Lyon's
\sys{G3IntQ} calculus for \kl{FOL} \cite[Section~3.3]{lyon_refining_2021}, and then
with the calculus \sys{labIS4_\leq} of Girlando et al.\ for the modal logic S4
\cite{girlando_intuitionistic_2023}. In these systems, invertibility is made
possible by the addition of \emph{semantic} information to sequents, in the form
of so-called \emph{labels} and \emph{relational atoms} that respectively encode
the \kl{worlds} and \kl{accessibility} relations of \kl{Kripke structures}. The
\kl{flower calculus} follows instead a purely \emph{syntactic} approach, by
relying on deep inference to retrieve what would normally be semantic
information from the \kl{context} $\cXi$ in the pollination rules \kl{poll{\ua}}
and \kl{poll{\da}}.

\subparagraph*{Categorical \kl{EGs}}

\AP
Since the seminal work of Brady and Trimble in 2000 on the formalization of
\kl{EGs} in category theory \cite{brady_categorical_2000,brady_string_nodate},
there have been various efforts to find rich categorical axiomatizations of
\kl{Beta}. The first approach --- initiated in \cite{brady_string_nodate} --- is
based on \emph{string diagrams}, and has recently enabled strong connections
with \emph{Frobenius algebras} and
\emph{bicategories or relations}
\cite{mellies_bifibrational_2016,pietarinen_compositional_2020,bonchi_diagrammatic_2024}.
A second approach makes use of the concept of \emph{generic figure}
\cite{caterina_new_2020}, introduced by Reyes as a basic building block for
\intro{topos theory} \cite{generic_figures}. We do not know however of any
attempt to uncover the categorical structures underlying intuitionistic
\kl{EGs}. The \kl{flower calculus} might be an interesting candidate, in that
the invertibility of the \kl{natural} fragment could enable a purely
\emph{equational} approach.

\subparagraph*{Coherent logic}

We noticed a formal connection between \kl{flowers} and \intro{coherent logic},
a subset of the formulas of \kl{FOL} discovered by Skolem in 1920
\cite{skolem_geometric} that is capable of expressing many mathematical
theories, and has close connections to \kl{topos theory}
\cite[Section~D3.3]{Johnstone2002-rm}. Indeed, the interpretation
$\toform{\flower{\garden{\bx}{\Phi}}{\Delta}}$ of a generic \kl{flower} is given
by the following formula, which has exactly the shape of a coherent formula as
described e.g.\ in \cite{bezem_automating_2005}:
\begin{mathpar}
  \forall \bx. \left(\bigwedge_{\phi \in \Phi} \toform{\phi} \Impl \bigvee_{\garden{\by}{\Psi} \in \Delta}\exists \by. \bigwedge_{\psi \in \Psi} \toform{\psi}\right)
\end{mathpar}
The only difference is that \kl{flowers} can be \emph{nested}, while coherent
formulas (also called coherent \emph{sequents}) are \kl{first-order}, in the
sense that $\phi$ and $\psi$ must be atoms. Coherent formulas appear in the
theory of \emph{focusing} in sequent calculi \cite{marin_axioms_2022}, and they
lend themselves to simple proof search procedures that allow for
\emph{explainable proof automation} in \kl{ITPs}
\cite{bezem_automating_2005,narboux_larus}. A \kl{higher-order} variant of
coherent formulas that is almost isomorphic to \kl{flowers} has also been used
to construct an intuitionistic version of the \emph{arithmetical hierarchy}, as
well as a fully \emph{non-invertible} proof system for propositional
intuitionistic logic \cite{brock-nannestad_intuitionistic_2019}.

\subparagraph*{Graph calculi}

In the last twenty years, Veloso et al. have studied a series of so-called
\intro{graph calculi}, where first-order relations are represented by graphs in
the sense of graph theory, and inference rules as graph transformations. The
first \kl{graph calculus} was introduced informally by Curtis and Lowe in 1996
\cite{curtis_proofs_1996}, as a graphical notation supposedly capturing both
relational calculus, and the sequential calculus of Karger and Hoare
\cite{von_karger_sequential_1995}. Veloso et al. gave sound and complete syntax
and semantics to the calculus in \cite{de_freitas_graph_2009}, showing that it
captures \emph{positive} first-order logic on \emph{binary} relations. They then
extended their formalism to support relational complementation (negation)
\cite{goel_calculus_2010} as well as various modal
\cite{veloso_graph_2017,veloso_graph_2015} and dynamic logics
\cite{veloso_pdl_2014}.

Graph calculi only handle \emph{binary} relations and classical logic, while
\kl{EGs} and the flower calculus support relations of arbitrary arity and
intuitionistic logic. We conjecture that the relationship between graph calculi
and \kl{EGs} is similar to that between \emph{commutative diagrams} and
\emph{string diagrams} in category theory: the former represent
relations/morphisms as edges between individuals/objects, while the latter
represent them dually as points related by lines. \kl{EGs} could then be
understood as a \emph{hypergraph} generalization of graph calculi, where
\kl{lines of identity} are hyperedges connecting multiple predicate vertices.

\subparagraph*{Development calculi}

Through their backward reading, the rules of the \kl{flower calculus} can be
understood as primitive \emph{tactics} for building proofs interactively. In
\cite[Chapter~3]{ayers_thesis}, Ayers calls such systems \emph{development
calculi}. In particular, he presents his own development calculus inspired by
McBride's OLEG system \cite{mcbride_dependently_2000} and Ganesalingam \&
Gowers's prover \cite{ganesalingam_fully_2017} called the \texttt{Box} calculus,
where goals are represented by a so-called \texttt{Box} data structure very
similar to flowers. In particular, \texttt{Box}es have so-called
\emph{disjunctive pairs} to reduce backtracking, that correspond to the
\kl{petals} of \kl{flowers}. The main difference is that the \texttt{Box}
calculus is based on dependent type theory instead of \kl{FOL}: this allows to store
the partial proof terms inside of the \texttt{Box}es themselves, while this
information is lost during the construction of \kl{flowers}. However, there is
no completeness nor analyticity result for the \texttt{Box} calculus. It would
be interesting to investigate further connections, in order to develop a
dependently-typed version of the \kl{flower calculus}.


\bibliography{main}

\begin{figure}
  \input{natural-graphical.tex}
  \caption{Graphical presentation of \kl{natural} rules $\Nature$}
  \label{fig:natural-graphical}
\end{figure}

\begin{figure}
  \input{cultural-graphical.tex}
  \caption{Graphical presentation of \kl{cultural} rules $\Culture$}
  \label{fig:cultural-graphical}
\end{figure}

\begin{figure}
  \captionsetup[subfigure]{justification=centering}
  \centering
  \begin{subfigure}{\textwidth}
    \centering
    \input{proof-example-textual.tex}
    \caption{Textual presentation}
  \end{subfigure}
  \par\bigskip
  \begin{subfigure}{\textwidth}
    \input{proof-example-graphical.tex}
    \caption{Graphical presentation}
  \end{subfigure}
  \caption{A \kl{natural} proof in the flower calculus}
  \label{fig:flowers-proof-example}
\end{figure}

\newpage
\appendix

\section{Comparison with \kl{EGs}}\label{app:Comparison}

In this section, we give a detailed comparison of the rules of the \kl{flower
calculus} with the \kl{illative transformations} of Peirce's system \kl{Beta}.
To ease the presentation, we introduce an inductive syntax for the graphs of
\kl{Beta} based on our notion of \kl{garden}.

\begin{definition}[$\beta$-graph]\label{def:beta-graph}
  
  The sets of \intro{$\beta$-nodes} $\intro*\bnodes$, \intro{$\beta$-gardens}
  $\intro*\bgardens$ and \intro{$\beta$-graphs} $\intro*\bgraphs$ are defined
  mutually inductively:
  \begin{description}
    \item[Atom] If $p \in \psymbs$ and $\tvec{x} \in \vars^{\arity(p)}$, then
    $p(\tvec{x}) \in \bnodes$;
    \item[Graph] If $G \subset \bnodes$ is a finite multiset, then $G
    \in \bgraphs$.
    \itemAP[Garden] If $\bx \subset \vars$ is a finite set and $G
    \subset \bnodes$ a finite multiset, then $\intro*\bgarden{\bx}{G} \in
    \bgardens$;
    \itemAP[Cut] If $\gamma \in \bgardens$, then $\intro*\bcut{\gamma} \in
    \bnodes$.
  \end{description}
\end{definition}

\begin{remark}
  Note that a \kl{$\beta$-graph} is defined as a multiset of \kl{$\beta$-nodes},
  just like a \kl{bouquet} is a multiset of \kl{flowers}. Then like in the
  \kl{flower calculus}, and unlike in Peirce's original system, \kl{binders}
  (\kl{LoIs}) cannot appear at the top-level of $\SA$.
\end{remark}

\begin{example}
  The lower graph in Figure~\ref{fig:loi} can be written in textual notation as
  the expression $\bcut{\bgarden{x}{R(x),\bcut{\bgarden{{}}{P(x)}}}}$.
\end{example}

\AP
We also adapt the notion of \kl{context} to the \kl{Beta} setting:

\begin{definition}[$\beta$-context]\label{def:beta-contexts}
  \intro{$\beta$-contexts} $\cG$ are defined inductively by the following
  grammar:
  \begin{mathpar}
    \cG, \cH, \cK \Coloneq G, \cg
    \and
    \cg, \ch, \ck \Coloneq \hole \mid [\cG]
  \end{mathpar}
\end{definition}

\begin{definition}[Polarity]
\AP The number of \emph{inversions} $\intro*\binv(\cG)$ of a
  \kl{$\beta$-context} $\cG$ is:
  \begin{mathpar}
    \binv(\hole) = 0
    \quad\,\,
    \binv(G, \cg) = \binv(\cg)
    \quad\,\,
    \binv([\cG]) = 1 + \binv(\cG)
  \end{mathpar}
\end{definition}

The definitions of \kl{free variables}, \kl{bound variables} and
\kl{substitutions} can also be adapted straightforwardly.

The set of rules of \kl{Beta} is given in Figure~\ref{fig:beta}. \kl{Iter},
\kl{Deit}, \kl{Ins} and \kl{Del} correspond respectively to the Iteration,
Deiteration, Insertion and Deletion principles of \sys{Alpha}, while
\kl{Dcut{\da}} and \kl{Dcut{\ua}} capture the Double-cut principle. As for
\kl{Unif{\da}} and \kl{Unif{\ua}}, they correspond roughly to the principles of
Insertion and Deletion applied to \kl{LoIs}. Indeed in their bottom-up reading,
\kl{Unif{\ua}} and \kl{Unif{\da}} can be understood as capturing respectively
the operations of \emph{unification} and \emph{anti-unification} on two
variables:
\begin{description}
  \item[\textbf{Unification}] substituting $z$ for $y$ and removing the
\kl{binder} for $y$ is equivalent in purpose to adding a new \kl{LoI} between
the outermost point of the \kl{LoI} associated to $y$, and some point of the
\kl{LoI} associated to $z$ in the same area (which is assumed to exist by
well-scopedness);
  \item[\textbf{Anti-unification}] substituting $y$ for $z$ and adding a
  \kl{binder} for $y$ is equivalent in purpose to severing the \kl{LoI}
  associated to $z$ at the location where the \kl{binder} for $y$ is introduced.
\end{description}
There is no need to formulate an equivalent of the Iteration and Deiteration
principles for \kl{LoIs}. Indeed, their purpose is to manage \emph{locally} the
\emph{extension} of a line. With \kl{binders}, the notion of extension is
replaced with that of \emph{scope}, which is handled \emph{globally} and
automatically in the definition of \kl{substitutions}.

\begin{figure}
  \captionsetup[subfigure]{justification=centering}
  \centering
  \begin{subfigure}[b]{0.69\textwidth}
    \input{beta.tex}
    \caption{Rules of \kl{Beta}}
    \label{fig:beta}
  \end{subfigure}
  \begin{subfigure}[b]{0.3\textwidth}
    \centering
    \begin{mathpar}
      \R[\intro{epis{\da}}]
        {\Phi}
        {\flower{\garden{{}}{{}}}{\garden{{}}{\Phi}}}
    \end{mathpar}
    \caption{Converse of \kl{epis} rule}
    \label{fig:flowers-episda}
  \end{subfigure}
  \caption{Comparing rules}
\end{figure}

Let us now review the rules of the \kl{flower calculus} in more detail, starting
with the fragment that is a direct adaptation of the rules of \kl{Beta}:

\begin{description}
  \item[Blank Antecedant (\kl{epis})]

    It allows to enclose any \kl{bouquet} in a \kl{petal} attached to an \textsf{(e)}mpty
    \textsf{(pis)}til. This is a weaker, intuitionistic version of
    \kl{Dcut{\ua}}, that was already identified by Peirce as the ``collapsing of
    a scroll's walls' \cite[p.~534]{peirce_prolegomena_1906}, and is called the
    rule of Blank Antecedant in \cite{minghui_graphical_2019}. The converse rule
    that would correspond to \kl{Dcut{\da}} --- rule \kl{epis{\da}} in
    Figure~\ref{fig:flowers-episda} --- is actually shown to be
    \emph{admissible} by our completeness theorem\footnote{Interestingly,
    although an equivalent of \kl{epis{\da}} is included in
    \cite{minghui_graphical_2019}, there is no mention of it by Peirce in
    \cite{peirce_prolegomena_1906}. It may be a sign that Peirce already
    intuited its admissibility, or at least considered this direction of the
    transformation unworthy of attention.}.

  \item[(De)iteration (\kl{poll{\da}}, \kl{poll{\ua}})]
    The \textsf{(poll)}ination rules \kl{poll{\da}} and \kl{poll{\ua}}
    correspond respectively to \kl{Iter} and \kl{Deit}, but reformulated with
    the \kl{pollination} relation (Definition~\ref{def:pollination}). In fact in
    their textual presentation (Figure~\ref{fig:flower-calculus}), they are more
    general than (de)iteration rules, because $\chyp{\Phi}{\cXi}$ allows the
    \kl{pollinating} \kl{bouquet} $\Phi$ to be \emph{scattered} in the \kl{context}
    $\cXi$, i.e. its \kl{flowers} need not be located in the same area. On the
    contrary in their graphical presentation
    (Figure~\ref{fig:natural-graphical}), they are less general since only one
    \kl{flower} can be \kl{pollinated} at a time, rather than an entire \kl{bouquet} of
    \kl{flowers} residing in the same area. But it is easy to see that all these
    variants are equivalent in deductive power, since the pollination of a
    \kl{bouquet} (however scattered) can always be simulated by the successive
    pollinations of each of its \kl{flowers}.

  \item[Insertion/Deletion (\kl{grow}, \kl{crop}, \kl{pull}, \kl{glue})]
    They correspond to \kl{Ins} and \kl{Del}, but have doubled in number to
    account for the syntactic distinction between \kl{pistils} and \kl{petals}. More
    precisely, rules \kl{grow} and \kl{crop} allow to insert and delete entire
    \kl{flowers}, while rules \kl{pull} and \kl{glue} deal with \kl{petals}. As for
    pollination rules, manipulating single \kl{flowers}/\kl{petals} (graphical version) or
    entire \kl{bouquets}/\kl{corollas} (textual version) does not change the deductive
    power of the rules.
    
  \item[Unification (\kl{ipis}, \kl{ipet}, \kl{apis}, \kl{apet})]
    Rules \kl{ipis} and \kl{ipet} allow to \textsf{(i)}nstantiate a \kl{sprinkler}
    located respectively in a \textsf{(pis)}til ($\forall$) and a
    \textsf{(pet)}al ($\exists$) with an arbitrary substitution, while rules
    \kl{apis} and \kl{apet} do the opposite operation of \textsf{(a)}bstracting
    a set of variables by introducing a \kl{sprinkler}. They correspond respectively
    to a generalization of \kl{Unif{\ua}} and \kl{Unif{\da}}, where the variable
    substitution $\subst{}{z}{y}$ becomes an arbitrary substitution $\sigma$.
    Once again, we have twice the amount of rules to account for the
    \kl{pistil}/\kl{petal} distinction, which is not surprising since in the \kl{LoI}
    syntax of \sys{EGs}, they are special cases of Insertion/Deletion. Note that
    for the instantiation rules \kl{ipis}/\kl{ipet} to be invertible, we
    duplicate the whole \kl{flower}/\kl{petal} where the \kl{sprinkler} occurs, mirroring what
    is done in multi-conclusion sequent calculi.
\end{description}

The last two rules mainly handle the behavior of disjunctive and absurd
statements, i.e. \kl{flowers} with respectively $n \geq 2$ and $n = 0$ \kl{petals}, and
are closer to sequent-style introduction/elimination rules:

\begin{description}
  \item[Disjunction Introduction (\kl{epet})]
    It allows to erase any \kl{flower} with an \textsf{(e)}mpty \textsf{(pet)}al.
    According to Oostra \cite[p.~109]{oostra_advances_2022}, Peirce already
    identified \kl{epet} as a component of his decision procedure for
    \sys{Alpha} (it is simply called ``Operation 1'' in
    \cite{oostra_advances_2022}). This is no coincidence, since we precisely
    came up with this rule when trying to design a decision procedure for
    \kl{flowers}.

  \item[Disjunction/Falsehood Elimination (\kl{srep})]
    It corresponds to a $n$-ary generalization of the left introduction rule for
    \kl{disjunction} in sequent calculus, the $0$-ary case capturing
    \kl{falsehood} elimination (\textit{ex falso quodlibet}) as illustrated in
    the proof of Figure~\ref{fig:flowers-proof-example}. The binary case is also
    used in the intuitionistic \kl{EGs} system of \cite{minghui_graphical_2019}
    together with its converse, which is also shown to be admissible by our
    completeness theorem. The name \kl{srep} is short for
    \textsf{(s)}elf-\textsf{(rep)}roduction, which is more clearly visualized in
    the graphical version of the rule in Figure~\ref{fig:natural-graphical}.
\end{description}

\section{Soundness}\label{app:Soundness}

In this section, we show that every rule of the \kl{flower calculus} is
\emph{sound} with respect to our Kripke semantics for \kl{flowers}, and thus
that $\entails{} \phi$ implies $\kentails{} \phi$ for every $\phi$. We start
with a few trivial facts about the \kl{update} operation of Definition
\ref{def:update}:

\begin{observation}[Associativity]\label{obs:update-assoc}
  $\update{\update{f}{R}{g}}{S}{h} = \update{f}{R \cup S}{(\update{g}{S}{h})}$.
\end{observation}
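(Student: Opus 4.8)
The plan is to prove this equation between functions $A \to B$ pointwise, by fixing an arbitrary $x \in A$ and showing that both sides evaluate to the same element of $B$. By Definition~\ref{def:update}, each update $\update{f}{R}{g}$ is governed by a single membership guard, so evaluating either side at $x$ reduces to unfolding these guards and branching on whether $x$ belongs to the relevant subsets. The natural case split is on whether $x \in S$ and, in the subcase $x \notin S$, on whether $x \in R$.

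First I would unfold the left-hand side $(\update{\update{f}{R}{g}}{S}{h})(x)$: if $x \in S$ it equals $h(x)$; otherwise it equals $(\update{f}{R}{g})(x)$, which is $g(x)$ when $x \in R$ and $f(x)$ when $x \notin R$. This yields three exhaustive cases with values $h(x)$, $g(x)$, $f(x)$ respectively. Next I would unfold the right-hand side $(\update{f}{R \cup S}{(\update{g}{S}{h})})(x)$: if $x \notin R \cup S$ it equals $f(x)$; if $x \in R \cup S$ it equals $(\update{g}{S}{h})(x)$, namely $h(x)$ when $x \in S$ and $g(x)$ when $x \notin S$. Comparing the two breakdowns case by case shows they agree: $x \in S$ gives $h(x)$ on both sides, $x \in R \setminus S$ gives $g(x)$ on both sides, and $x \notin R \cup S$ gives $f(x)$ on both sides.

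There is essentially no obstacle here, since the argument is a routine three-way case analysis with no induction and no appeal to the flower-specific machinery. The only point requiring mild care is the set-theoretic bookkeeping on the right-hand side: in the subcase $x \in R \cup S$ with $x \notin S$ one must observe that $x \in R$, which is exactly what makes this branch line up with the left-hand side's ``$x \notin S$ and $x \in R$'' branch.
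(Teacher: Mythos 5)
Your case analysis is correct and complete; the three branches on both sides match exactly as you describe. The paper states this as an unproved trivial observation, and your routine pointwise unfolding of Definition~\ref{def:update} is precisely the argument it leaves implicit.
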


\begin{observation}[Commutativity]\label{obs:update-comm}
  If $R \cap S = \emptyset$ then $f \upd{R} g \upd{S} h = f \upd{S} h \upd{R}
  g$.
\end{observation}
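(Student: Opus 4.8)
The plan is to establish the identity pointwise. Two functions on $\vars$ (or on whichever common domain $f,g,h$ share) are equal exactly when they agree on every argument, so I would fix an arbitrary $x$ and compare $(\update{\update{f}{R}{g}}{S}{h})(x)$ with $(\update{\update{f}{S}{h}}{R}{g})(x)$, reading both nested updates through the left-associativity convention and unfolding each outermost application via the case-definition of Definition~\ref{def:update}.

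Since $R \cap S = \emptyset$, the argument $x$ falls into exactly one of three cases, and I would treat them in turn. If $x \in R$ then disjointness gives $x \notin S$, so the outer $\upd{S}$ on the left is inert and the left side collapses to $(\update{f}{R}{g})(x) = g(x)$, while the right side returns $g(x)$ immediately because $x \in R$. The case $x \in S$ is symmetric, both sides evaluating to $h(x)$ (here one uses $x \notin R$). Finally, if $x \notin R \cup S$ then no update fires on either side and both reduce to $f(x)$. As the three cases are exhaustive and mutually disjoint, the two functions agree on all of their domain.

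There is no genuine obstacle: the whole content is a three-way case split on a single point, each branch being a one-line unfolding of the defining equation. The disjointness hypothesis is used only to separate the $x \in R$ and $x \in S$ branches — without it the two sides would indeed disagree on $R \cap S$, one returning $g$ and the other $h$. Alternatively one could derive the statement from Observation~\ref{obs:update-assoc}, rewriting both sides as $\update{f}{R \cup S}{k}$ and checking that the two inner functions $\update{g}{S}{h}$ and $\update{h}{R}{g}$ coincide on $R \cup S$ when $R \cap S = \emptyset$; but the direct pointwise argument is the shortest route.
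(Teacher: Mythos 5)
Your pointwise three-way case split is correct, and the disjointness hypothesis is used exactly where it must be (to rule out the conflicting branch); the paper states this as an unproved ``trivial fact,'' and your argument is precisely the intended verification. Nothing to add.
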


\begin{observation}[Agreement]\label{obs:update-agree}
  If $f(x) = g(x)$ for all $x \in R$ then $h \upd{R} f = h \upd{R} g$.
\end{observation}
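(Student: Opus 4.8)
The plan is to prove the equality of the two functions $\update{h}{R}{f}$ and $\update{h}{R}{g}$ by \emph{function extensionality}: it suffices to show that they return the same value on every argument $x$ in their common domain $A$. So I would fix an arbitrary $x \in A$ and unfold Definition~\ref{def:update}, which computes each update by a case split on whether $x \in R$.

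In the first case, suppose $x \in R$. Then by definition $(\update{h}{R}{f})(x) = f(x)$ and $(\update{h}{R}{g})(x) = g(x)$. The hypothesis supplies $f(x) = g(x)$ for every element of $R$, and in particular for this $x$, so the two values coincide. In the second case, suppose $x \notin R$. Then the update leaves $h$ untouched on both sides, giving $(\update{h}{R}{f})(x) = h(x) = (\update{h}{R}{g})(x)$; here the hypothesis is not even needed. Since the two functions agree at $x$ in both cases and $x$ was arbitrary, they are equal.

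There is no genuine obstacle here: the only conceptual content is that an update reads off its second argument ($f$ or $g$) precisely on $R$ --- which is exactly the region where the hypothesis $f\upd{R} = g\upd{R}$ has force --- while outside $R$ the second argument is irrelevant. This is the agreement counterpart of the earlier facts about update (Observations~\ref{obs:update-assoc} and~\ref{obs:update-comm}), and like them it follows immediately from the defining case split, so I would expect to state it in essentially one line.
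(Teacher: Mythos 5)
Your proof is correct and is exactly the argument the paper has in mind: the paper states this as an unproved ``trivial fact'' about Definition~\ref{def:update}, and the pointwise case split on $x \in R$ versus $x \notin R$ is the evident one-line justification. (One cosmetic note: your closing shorthand ``$f\upd{R} = g\upd{R}$'' misuses the paper's notation, where $\restr{f}{R}$ means $\update{f}{R}{\idsubst}$ rather than restriction --- but this occurs only in your commentary, not in the proof itself.)
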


\begin{observation}[Idempotency]\label{obs:update-idempot}
  $f \upd{R} f = f$.
\end{observation}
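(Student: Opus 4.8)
The plan is to prove this by pointwise extensionality: two functions with the same domain are equal exactly when they agree on every argument, so it suffices to fix an arbitrary element $x$ in the domain and show that $(f \upd{R} f)(x) = f(x)$. This reduces the claim to a direct unfolding of Definition~\ref{def:update}, followed by a binary case analysis on whether $x \in R$.

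Concretely, if $x \in R$ then the first branch of the case split defining the update applies, but with the auxiliary function instantiated to $f$ itself (i.e.\ taking $g := f$), so $(f \upd{R} f)(x) = f(x)$. If instead $x \notin R$, then the second branch applies and gives $(f \upd{R} f)(x) = f(x)$ immediately. In both cases the value is $f(x)$, so the two functions coincide everywhere, whence $f \upd{R} f = f$.

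I do not expect any genuine obstacle here: unlike the earlier observations on associativity and commutativity, idempotency requires no bookkeeping about disjointness or nesting of the update regions, and the entire argument is the observation that both branches of the defining case split collapse to $f(x)$ once the updating function is chosen to equal $f$. The only ``step'' worth isolating is the case split itself, which is routine.
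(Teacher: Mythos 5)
Your proof is correct: the pointwise case split on $x \in R$, with both branches collapsing to $f(x)$, is exactly the intended argument. The paper states this as a trivial observation and omits the proof entirely, so there is nothing to compare against beyond noting that your unfolding of Definition~\ref{def:update} is the canonical one.
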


Semantic entailment is obviously a reflexive and transitive relation:

\begin{observation}[Reflexivity]\label{obs:kentails-refl}
  $\Phi \kentails{} \Phi$.
\end{observation}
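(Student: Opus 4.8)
The plan is simply to unfold the definition of valid semantic entailment and observe that reflexivity is inherited from the reflexivity of meta-level implication. Recall that $\Phi \kentails{} \Phi$ holds by definition precisely when $\Phi \kentails{\mathcal{K}} \Phi$ for every \kl{Kripke structure} $\mathcal{K}$. So I would fix an arbitrary such $\mathcal{K} = (W, \access, (M_w)_{w \in W})$; by the definition of $\kentails{\mathcal{K}}$, the goal becomes to show that $\eforces{w}{\Phi}{e}$ implies $\eforces{w}{\Phi}{e}$ for every \kl{world} $w \in W$ and every \evaluation{w} $e$.

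This last implication is an instance of the trivially true schema ``$P$ implies $P$'', and thus holds for every $w$ and $e$. Hence $\Phi \kentails{\mathcal{K}} \Phi$ for the chosen (arbitrary) $\mathcal{K}$, and therefore $\Phi \kentails{} \Phi$. There is no genuine obstacle here: the only content is the bookkeeping of the three nested quantifiers over Kripke structures, worlds, and evaluations, after which the statement collapses to reflexivity of implication. In particular, since the forcing relation $\eforces{w}{\Phi}{e}$ occurs identically on both sides, no induction on the structure of $\Phi$ or on its \kl{depth} $\fdepth{\Phi}$ is required.
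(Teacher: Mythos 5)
Your proof is correct and matches the paper's treatment: the paper states this observation without proof, remarking only that semantic entailment is ``obviously'' reflexive, and your unfolding of the definition down to the tautology ``$P$ implies $P$'' is exactly the trivial argument being elided. You are also right that no induction on $\fdepth{\Phi}$ is needed, since the forcing relation is never analyzed.
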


\begin{observation}[Transitivity]\label{obs:kentails-trans}
  If $\Phi \kentails{} \Psi$ and $\Psi \kentails{} \Xi$, then $\Phi \kentails{} \Xi$.
\end{observation}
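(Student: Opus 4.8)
The plan is to simply unfold the definition of validity from Definition~\ref{def:forcing} and chase the two hypotheses pointwise. Recall that $\Phi \kentails{} \Xi$ means that the entailment $\Phi \kentails{\mathcal{K}} \Xi$ holds in \emph{every} \kl{Kripke structure} $\mathcal{K}$, which in turn unfolds to: for every \kl{world} $w$ and every \evaluation{w} $e$, if $\eforces{w}{\Phi}{e}$ then $\eforces{w}{\Xi}{e}$. So the whole statement is ultimately about the transitivity of material implication, witnessed uniformly across all structures, worlds and evaluations.

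Concretely, I would fix an arbitrary \kl{Kripke structure} $\mathcal{K}$, an arbitrary \kl{world} $w$ in $\mathcal{K}$, and an arbitrary \evaluation{w} $e$, and assume $\eforces{w}{\Phi}{e}$. Applying the first hypothesis $\Phi \kentails{} \Psi$ to this particular $\mathcal{K}$, $w$ and $e$ yields $\eforces{w}{\Psi}{e}$. Applying the second hypothesis $\Psi \kentails{} \Xi$ to the same $\mathcal{K}$, $w$ and $e$ then yields $\eforces{w}{\Xi}{e}$. Since $\mathcal{K}$, $w$ and $e$ were arbitrary, this establishes $\eforces{w}{\Phi}{e} \Rightarrow \eforces{w}{\Xi}{e}$ for all of them, which is precisely $\Phi \kentails{} \Xi$.

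There is no genuine obstacle here: the argument is purely definitional and does not touch the inductive structure of \kl{flowers} or the \kl{forcing} clauses themselves — it only manipulates the outer quantifiers over structures, worlds and evaluations. The only point requiring the slightest care is to keep the \emph{same} $\mathcal{K}$, $w$ and $e$ fixed throughout the chain, so that both hypotheses are instantiated at identical parameters and the two implications compose. This is exactly analogous to the reflexivity of \kentails{} in Observation~\ref{obs:kentails-refl}, and like it will be a one-line routine verification.
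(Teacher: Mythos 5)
Your proof is correct and is exactly the routine definitional argument the paper has in mind when it states this as an ``Observation'' following the remark that semantic entailment is obviously reflexive and transitive (the paper gives no explicit proof). Fixing the same $\mathcal{K}$, $w$ and $e$ throughout and composing the two implications is all that is needed.
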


The two following lemmas will be useful to reason on the \kl{forcing} relation
(Definition \ref{def:forcing}):

\begin{lemma}[Monotonicity]\label{lem:access-mono}
  If $w \access w'$ and $\eforces{w}{\phi}{e}$ then $\eforces{w'}{\phi}{e}$.
\end{lemma}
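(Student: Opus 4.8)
The plan is to prove the statement by induction on the \kl{depth} $\fdepth{\phi}$, with the atom and flower constructors as the two cases; the \kl{bouquet} clause of the \kl{forcing} relation then yields the corresponding statement for \kl{bouquets} for free, by applying the flower case to each member. Throughout I use that $e$ remains a valid \evaluation{w'} whenever $w \access w'$: since the first monotonicity condition of \kl{Kripke structures} gives $M_w \subseteq M_{w'}$, any $e : \vars \to M_w$ is also a map $\vars \to M_{w'}$, and moreover $\interp{\tvec{x}}_e$ does not depend on the \kl{world}.

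In the base case $\phi = p(\tvec{x})$ is an atom, so $\eforces{w}{p(\tvec{x})}{e}$ unfolds to $\interp{\tvec{x}}_e \in \interp{p}_w$. The second monotonicity condition $\interp{p}_w \subseteq \interp{p}_{w'}$ then immediately gives $\interp{\tvec{x}}_e \in \interp{p}_{w'}$, that is $\eforces{w'}{p(\tvec{x})}{e}$.

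For the inductive case $\phi = \flower{\garden{\bx}{\Phi}}{\fset{i}{n}{\garden{\bx_i}{\Phi_i}}}$, I unfold the goal $\eforces{w'}{\phi}{e}$: given any $w'' \revaccess w'$ and any \evaluation{w''} $e'$ with $\eforces{w''}{\Phi}{\update{e}{\bx}{e'}}$, I must exhibit an index $i$ and an \evaluation{w''} $e''$ such that $\eforces{w''}{\Phi_i}{\update{\update{e}{\bx}{e'}}{\bx_i}{e''}}$. Since \kl{accessibility} is a pre-order, from $w \access w'$ and $w' \access w''$ I get $w \access w''$ by transitivity; feeding this same $w''$ and $e'$ into the hypothesis $\eforces{w}{\phi}{e}$ delivers exactly the $i$ and $e''$ required.

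The point worth stressing is that the flower case makes no use of the induction hypothesis at all: the outermost shape of a flower's truth-condition is a universal quantification over accessible \kl{worlds}, so persistence is built in and follows from transitivity of \kl{accessibility} alone. This is the familiar phenomenon whereby implications and universals are automatically monotone in Kripke semantics, and the only genuine structural input is the pair of monotonicity conditions used in the atom case. I therefore do not anticipate any real obstacle; the induction on \kl{depth} is essentially a case distinction, serving only to justify reducing the \kl{bouquet} clause to its constituent \kl{flowers}.
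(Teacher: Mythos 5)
Your proof is correct and matches the paper's approach exactly: the paper's entire proof is ``By a straightforward induction on $\fdepth{\phi}$,'' and your write-up simply supplies the details (monotonicity of $\interp{p}$ and of the domains for the atom case, transitivity of $\access$ for the flower case). Your observation that the flower case never invokes the induction hypothesis --- because the forcing clause for flowers already quantifies over all accessible worlds --- is accurate and explains why the paper can call the induction straightforward.
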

\begin{proof}
  By a straightforward induction on $\fdepth{\phi}$.
\end{proof}

\begin{lemma}[Mirroring]\label{lem:mirroring}
  
  $\eforces{w}{\sigma(\phi)}{e}$ iff
  $\eforces{w}{\phi}{\update{e}{\bx}{\interp{\sigma}_e}}$ for $\sigma : \bx$
  \kl{capture-avoiding} in $\phi$ and $\bx \cap \bv(\phi) = \emptyset$.
\end{lemma}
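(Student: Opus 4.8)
The plan is to prove the equivalence by strong induction on the \kl{depth} $\fdepth{\phi}$, established simultaneously for \kl{flowers} and \kl{bouquets} (reading $\fdepth{\Phi} = \max_{\phi \in \Phi} \fdepth{\phi}$). The one computation driving everything is that for any variable $v$ we have $\interp{\sigma(v)}_e = \interp{v}_{\update{e}{\bx}{\interp{\sigma}_e}}$: if $v \in \bx$ the \kl{update} returns $\interp{\sigma}_e(v) = \interp{\sigma(v)}_e$, while if $v \notin \bx$ then $\sigma(v) = v$ as $\dom(\sigma) = \bx$, and the \kl{update} leaves $e(v)$ alone. Applying this componentwise to $\interp{\sigma(\tvec{x})}_e$ settles the \textbf{Atom} case $\phi = p(\tvec{x})$. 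The \textbf{Bouquet} case is then immediate, since \kl{forcing} of a \kl{bouquet} is the conjunction over its \kl{flowers}, each of strictly smaller \kl{depth}, so the induction hypothesis applies to each.

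For the \textbf{Flower} case I would write $\phi = \flower{\garden{\by}{\Phi}}{\fset{i}{n}{\garden{\by_i}{\Phi_i}}}$ and set $e_0 = \update{e}{\bx}{\interp{\sigma}_e}$. Unfolding \kl{substitution} on \kl{flowers}, $\sigma(\phi)$ has pistil $\garden{\by}{\restr{\sigma}{\by}(\Phi)}$ and $i$-th petal $\garden{\by_i}{(\restr{(\restr{\sigma}{\by})}{\by_i})(\Phi_i)}$. Here the hypothesis $\bx \cap \bv(\phi) = \emptyset$ forces $\bx$ to be disjoint from every binder set $\by, \by_i$, with three consequences I would record: the \kl{support} of each restricted \kl{substitution} is still exactly $\bx$; the \kl{capture-avoiding} condition $\sigma(\bx) \cap \bv(\phi) = \emptyset$ restricts to $\bv(\Phi) \subseteq \bv(\phi)$ and each $\bv(\Phi_i) \subseteq \bv(\phi)$, so the induction hypothesis is applicable to the sub-\kl{bouquets}; and $(\restr{\sigma}{\by})(v) = \sigma(v)$ for $v \in \bx$ (and likewise for the doubly-restricted substitution). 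I then unfold the \kl{forcing} of $\flower{\cdot}{\cdot}$ on both sides over an arbitrary $w' \revaccess w$ and \evaluation{w'} $e'$.

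The heart of the argument is reconciling the two chains of \kl{updates}. Applying the induction hypothesis to the pistil $\Phi$ with $\restr{\sigma}{\by}$ and \evaluation{w'} $\update{e}{\by}{e'}$ turns $\eforces{w'}{\restr{\sigma}{\by}(\Phi)}{\update{e}{\by}{e'}}$ into $\eforces{w'}{\Phi}{\update{(\update{e}{\by}{e'})}{\bx}{\interp{\restr{\sigma}{\by}}_{\update{e}{\by}{e'}}}}$. Since $\bx \cap \by = \emptyset$, Observation~\ref{obs:update-comm} commutes the two \kl{updates}, and it remains to check that the innermost function agrees on $\bx$ with $\interp{\sigma}_e$, so that Observation~\ref{obs:update-agree} collapses the expression to $\update{e_0}{\by}{e'}$, matching the right-hand side. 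For $v \in \bx$ one has $\interp{(\restr{\sigma}{\by})(v)}_{\update{e}{\by}{e'}} = \interp{\sigma(v)}_{\update{e}{\by}{e'}} = \interp{\sigma(v)}_e$, the last step precisely because $\sigma(v) \in \sigma(\bx)$ avoids $\by \subseteq \bv(\phi)$, so the $\by$-\kl{update} does not see it. The petal direction is identical but carries the extra inner \kl{update} on $\by_i$; the same disjointness $\sigma(\bx) \cap (\by \cup \by_i) = \emptyset$ lets the $\bx$-\kl{update} commute all the way in and still agree with $\interp{\sigma}_e$ on $\bx$. Because every reconciliation is an equality of evaluation functions (or an agreement on the relevant variables), the premise and each of the $n$ conclusions of the \kl{forcing} of $\sigma(\phi)$ under $e$ become literally those of $\phi$ under $e_0$, giving both directions of the \emph{iff} at once and closing the induction.

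The main obstacle I expect is purely this bookkeeping of nested \kl{updates}: one must verify that \kl{capture-avoiding} and the Barendregt disjointness are exactly what guarantee (i) the restricted \kl{substitutions} retain \kl{support} $\bx$ so the induction hypothesis applies verbatim, (ii) the $\bx$-\kl{update} commutes past the binder \kl{updates}, and (iii) evaluating $\sigma(v)$ is invariant under those binder \kl{updates}. No new idea beyond the update-algebra Observations is required; the difficulty lies solely in keeping the indices and restrictions straight.
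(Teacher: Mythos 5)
Your proposal is correct and follows essentially the same route as the paper's proof: induction on $\fdepth{\phi}$, the pointwise computation $\interp{\sigma(v)}_e = \interp{v}_{\update{e}{\bx}{\interp{\sigma}_e}}$ for the atom case, and in the flower case the reconciliation of update chains via Observations~\ref{obs:update-comm} and~\ref{obs:update-agree}, with capture-avoidance ($\sigma(\bx)$ disjoint from the binders) guaranteeing that $\interp{\sigma}_{\update{e}{\by}{e'}}$ agrees with $\interp{\sigma}_e$ on $\bx$. Your extra bookkeeping about the restricted substitutions $\restr{\sigma}{\by}$ collapsing to $\sigma$ under the Barendregt disjointness, and your observation that the reconciled antecedents and consequents become literally identical (yielding both directions of the iff at once), make explicit details the paper leaves implicit but do not change the argument.
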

\begin{proof}
  By induction on $\fdepth{\phi}$.
  \begin{description}
    \item[Base case]
    \newcommand{\esigma}{\update{e}{\bx}{\interp{\sigma}_e}} Suppose $\phi =
    p(\tvec{y})$. We show that $\interp{\sigma(\tvec{y})}_e \in \interp{p}_w$
    iff $\interp{\tvec{y}}_{\esigma} \in \interp{p}_w$ by proving that
    $\interp{x}_{\esigma} = \interp{\sigma(x)}_e$ for any variable $x$. Either:
    \begin{itemize}
      \item $x \in \bx$, and $\interp{x}_{\esigma} = \interp{\sigma}_e(x) = \interp{\sigma(x)}_e$; or
      \item $x \not\in \bx$, and $\interp{x}_{\esigma} = e(x) = \interp{x}_e = \interp{\sigma(x)}_e$.
    \end{itemize}

    \item[Recursive case] Suppose $\phi =
    \flower{\garden{\by}{\Phi}}{\fset{i}{n}{\garden{\bz_i}{\Psi_i}}}$. We show
    that
    $\eforces{w}{\flower{\garden{\by}{\sigma(\Phi)}}{\fset{i}{n}{\garden{\bz_i}{\sigma(\Psi_i)}}}}{e}$
    implies
    $\eforces{w}{\flower{\garden{\by}{\Phi}}{\fset{i}{n}{\garden{\bz_i}{\Psi_i}}}}{\update{e}{\bx}{\interp{\sigma}}_e}$,
    the argument working in both directions. Let $w' \revaccess w$ and $e'$ a
    \evaluation{w'} such that
    $\eforces{w'}{\Phi}{\update{e}{\bx}{\update{\interp{\sigma}_e}{\by}{e'}}}$.
    Since $\sigma$ is \kl{capture-avoiding} in $\phi$, we know that $\fv(\sigma(x))
    \cap \by = \emptyset$, and thus $\interp{\sigma}_e(x) =
    \interp{\sigma(x)}_{e} = \interp{\sigma(x)}_{e \upd{\by} e'} =
    \interp{\sigma}_{e \upd{\by} e'}(x)$ for any $x \in \bx$. Hence by
    Observation \ref{obs:update-agree}
    $\eforces{w'}{\Phi}{\update{e}{\bx}{\update{\interp{\sigma}_{\update{e}{\by}{e'}}}{\by}{e'}}}$,
    and since by hypothesis $\bx \cap \by = \emptyset$ we obtain
    $\eforces{w'}{\Phi}{\update{e}{\by}{\update{e'}{\bx}{\interp{\sigma}_{\update{e}{\by}{e'}}}}}$
    by Observation \ref{obs:update-comm}. Then by IH we get
    $\eforces{w'}{\sigma(\Phi)}{\update{e}{\by}{e'}}$, and thus by hypothesis
    $\eforces{w'}{\sigma(\Psi_i)}{\update{e}{\by}{\update{e'}{\bz_i}{e''}}}$ for
    some $1 \leq i \leq n$ and \evaluation{w'} $e''$. Again by IH we get
    $\eforces{w'}{\Psi_i}{\update{e}{\by}{\update{e'}{\bz_i}{\update{e''}{\bx}{\interp{\sigma}_{\update{e}{\by}{\update{e'}{\bz_i}{e''}}}}}}}$,
    and since $\sigma$ is \kl{capture-avoiding} in $\phi$ we have $\fv(\sigma(x))
    \cap \bz_i = \emptyset$ for any $x \in \bx$, and thus
    $\eforces{w'}{\Psi_i}{\update{e}{\by}{\update{e'}{\bz_i}{\update{e''}{\bx}{\interp{\sigma}_{e}}}}}$
    by Observation \ref{obs:update-agree}. Finally by hypothesis $\bx \cap \bz_i
    = \emptyset$, thus we can conclude that
    $\eforces{w'}{\Psi_i}{\update{e}{\bx}{\update{\interp{\sigma}_{e}}{\by}{\update{e'}{\bz_i}{e''}}}}$
    by Observation \ref{obs:update-comm}.
  \end{description}
\end{proof}

The following \emph{functoriality} lemma is at the heart of every deep inference
formalism. It requires an induction principle for \emph{\kl{contexts}}:

\begin{definition}[Depth]
\AP
  The \intro(ctx){depth} $\intro*\cdepth{\cPhi}$ of a \kl{context} $\cPhi$ is
  defined recursively by:
  \begin{mathpar}
    \cdepth{\hole} = 0
    \and
    \cdepth{\Psi, \cphi} = \cdepth{\cphi}
    \and
    \cdepth{\flower{\garden{\bx}{\cPhi}}{\Delta}} =
    \cdepth{\flower{\gamma}{\garden{\bx}{\cPhi} \sep \Delta}} = 1 +
    \cdepth{\cPhi}
\end{mathpar}
\end{definition}

\begin{lemma}[Functoriality]\label{lem:flowers-functoriality}
  
  If $\Phi \kentails{} \Psi$, then for any $\cXi$ either $\cfill{\cXi}{\Phi}
  \kentails{} \cfill{\cXi}{\Psi}$ if $\cXi$ is \kl{positive}, or $\cfill{\cXi}{\Psi}
  \kentails{} \cfill{\cXi}{\Phi}$ if $\cXi$ is \kl{negative}.
\end{lemma}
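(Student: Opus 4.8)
The plan is to proceed by induction on the structure of the \kl{context} $\cXi$ (equivalently, on its \kl(ctx){depth} $\cdepth{\cXi}$), following the grammar of Definition~\ref{def:contexts}. Writing $\cXi = \Xi_0, \cxi$ with $\Xi_0$ a \kl{bouquet} and $\cxi$ a single hole-bearing flower, I note first that $\inv(\Xi_0, \cxi) = \inv(\cxi)$, so the outer \kl{juxtaposition} with $\Xi_0$ is irrelevant to polarity. Moreover \kl{forcing} of a \kl{bouquet} is conjunctive, i.e.\ $\eforces{w}{\Xi_0, \Theta}{e}$ holds exactly when both $\eforces{w}{\Xi_0}{e}$ and $\eforces{w}{\Theta}{e}$; hence any entailment established between the filled flowers lifts through $\Xi_0$ unchanged. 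This reduces the problem to the three productions of $\cxi$.

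In the base case $\cxi = \hole$ the \kl{context} is \kl{positive}, and $\cfill{\cXi}{\Phi} = \Xi_0, \Phi$, $\cfill{\cXi}{\Psi} = \Xi_0, \Psi$; so $\Xi_0, \Phi \kentails{} \Xi_0, \Psi$ follows immediately from conjunctivity and the hypothesis $\Phi \kentails{} \Psi$. The two recursive productions place the hole in a \kl{petal} or in a \kl{pistil}, and the substance of the proof is to verify that the flower clause of Definition~\ref{def:forcing} is \emph{monotone} in each \kl{petal} and \emph{antimonotone} in its \kl{pistil}. For the petal case $\cxi = \flower{\gamma}{\garden{\bx}{\cPhi} \sep \Delta}$ we have $\inv(\cxi) = \inv(\cPhi)$, so $\cxi$ and $\cPhi$ share polarity; assuming both \kl{positive}, the induction hypothesis gives $\cfill{\cPhi}{\Phi} \kentails{} \cfill{\cPhi}{\Psi}$. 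Starting from a \kl{world} $w$ forcing the flower with \kl{petal} $\cfill{\cPhi}{\Phi}$, any $w' \revaccess w$ and \evaluation{w'} $e'$ satisfying the antecedent produce some witnessing disjunct $i$; if $i$ is not the hole-petal the same disjunct serves the target flower, and if it is, the induction hypothesis — instantiated at $w'$ and the relevant updated \evaluation{w'} — upgrades $\cfill{\cPhi}{\Phi}$ to $\cfill{\cPhi}{\Psi}$. The \kl{negative} subcase is symmetric with the entailment reversed.

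The pistil case $\cxi = \flower{\garden{\bx}{\cPhi}}{\Delta}$ is where the inversion bites, since $\inv(\cxi) = 1 + \inv(\cPhi)$ flips polarity: when $\cxi$ is \kl{positive}, $\cPhi$ is \kl{negative}, so the induction hypothesis delivers the \emph{reversed} entailment $\cfill{\cPhi}{\Psi} \kentails{} \cfill{\cPhi}{\Phi}$. To show the flower with \kl{pistil} $\cfill{\cPhi}{\Phi}$ entails the one with \kl{pistil} $\cfill{\cPhi}{\Psi}$, I assume the former is forced at $w$ under $e$ and take $w' \revaccess w$, $e'$ verifying the target antecedent $\eforces{w'}{\cfill{\cPhi}{\Psi}}{\update{e}{\bx}{e'}}$; instantiating the reversed hypothesis at $w'$ and $\update{e}{\bx}{e'}$ yields $\eforces{w'}{\cfill{\cPhi}{\Phi}}{\update{e}{\bx}{e'}}$, which fires the source flower's conclusion, and that conclusion is precisely what the target flower requires. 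The main obstacle, and the step deserving the most care, is exactly this contravariance: one must check that strengthening the antecedent-side filling in the direction dictated by the reversed induction hypothesis lines up with the polarity flip, all while keeping the alternation of quantifiers over accessible \kl{worlds} and evaluations consistent when the universal entailment $\kentails{}$ is instantiated at the specific $(w', \update{e}{\bx}{e'})$. The remaining subcase (and the petal subcases) are routine once this pattern is fixed.
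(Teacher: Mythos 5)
Your proof is correct and follows exactly the route the paper takes: the paper's own proof of Lemma~\ref{lem:flowers-functoriality} is simply ``By induction on $\cdepth{\cXi}$,'' and your case analysis (juxtaposition is conjunctive and polarity-neutral, the petal production preserves polarity and is covariant in the forcing clause, the pistil production flips polarity and is contravariant) is precisely the unfolding of that induction, with the IH correctly instantiated at the accessible world $w'$ and the updated evaluation. Nothing to add.
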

\begin{proof}
  By induction on $\cdepth{\cXi}$.
\end{proof}

\begin{lemma}[Weakening]\label{lem:flowers-weakening}
  $\Phi \kentails{} \emptyset$.
\end{lemma}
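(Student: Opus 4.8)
The plan is to unfold the definition of validity for semantic entailment (Definition of $\kentails{}$) and reduce everything to the \textbf{Bouquet} clause of the \kl{forcing} relation (Definition~\ref{def:forcing}). Recall that $\Phi \kentails{} \emptyset$ holds iff, for every \kl{Kripke structure} $\mathcal{K}$, every \kl{world} $w$ and every \evaluation{w} $e$, we have that $\eforces{w}{\Phi}{e}$ implies $\eforces{w}{\emptyset}{e}$. So it suffices to show that the consequent $\eforces{w}{\emptyset}{e}$ holds unconditionally; the implication then follows regardless of whether the antecedent is satisfied.

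The key observation is that $\eforces{w}{\emptyset}{e}$ is \emph{vacuously} true. Indeed, by the \textbf{Bouquet} clause, $\eforces{w}{\emptyset}{e}$ holds iff $\eforces{w}{\phi}{e}$ for every $\phi \in \emptyset$, and this universal quantification over the empty set is trivially satisfied. Hence $\eforces{w}{\emptyset}{e}$ holds for every \kl{world} $w$ and \evaluation{w} $e$ in every \kl{Kripke structure}, so the entailment $\Phi \kentails{} \emptyset$ is immediate. There is no real obstacle here: the only thing to notice is that the empty \kl{bouquet} is the semantic analogue of \kl{vacuous truth}, exactly mirroring its intended Peircean reading as the blank \kl{sheet of assertion} $\SA$. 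I would record the whole argument in a single sentence appealing to the emptiness of the index set in the \textbf{Bouquet} clause.
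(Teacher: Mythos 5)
Your argument is correct and is exactly the paper's proof, which simply says ``Trivial by Definition~\ref{def:forcing}'': the forcing of the empty \kl{bouquet} is a universal quantification over an empty index set, hence vacuously true, and the entailment follows. You have merely spelled out the same one-line observation in full.
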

\begin{proof}
  Trivial by Definition \ref{def:forcing}.
\end{proof}

\begin{lemma}[Co-weakening]\label{lem:flowers-coweakening}
  $\flower{\gamma}{\Delta} \kentails{} \flower{\gamma}{\Gamma \sep \Delta}$.
\end{lemma}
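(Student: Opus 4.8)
The plan is to unfold validity and forcing and observe that enlarging the corolla of a \kl{flower} only widens the \kl{disjunction} it represents, so satisfaction can only be preserved. Concretely, I would fix an arbitrary \kl{Kripke structure} $\mathcal{K}$, a \kl{world} $w$, and a \evaluation{w} $e$, assume $\eforces{w}{\flower{\gamma}{\Delta}}{e}$, and aim to derive $\eforces{w}{\flower{\gamma}{\Gamma \sep \Delta}}{e}$. Since the claim must hold for every $\mathcal{K}$, discharging this single instance establishes the valid entailment $\kentails{}$.

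Writing $\gamma = \garden{\bx}{\Phi}$ and $\Delta = \fset{i}{n}{\garden{\bx_i}{\Phi_i}}$, I would expand the Flower clause of Definition~\ref{def:forcing}. I would take an arbitrary $w' \revaccess w$ and an arbitrary \evaluation{w'} $e'$ such that $\eforces{w'}{\Phi}{\update{e}{\bx}{e'}}$. Applying the hypothesis $\eforces{w}{\flower{\gamma}{\Delta}}{e}$ to this same $w'$ and $e'$ yields an index $1 \leq i \leq n$ together with a \evaluation{w'} $e''$ witnessing $\eforces{w'}{\Phi_i}{\update{\update{e}{\bx}{e'}}{\bx_i}{e''}}$; that is, the $i$-th \kl{petal} of $\Delta$ is satisfied under the extended evaluation.

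The key — and essentially only — observation is that the \kl{petal} $\garden{\bx_i}{\Phi_i}$ remains a \kl{petal} of the larger \kl{corolla} $\Gamma \sep \Delta$, because $\Delta$ is a sub-multiset of $\Gamma \sep \Delta$. Hence the very same witness pair $(i, e'')$, located now among the petals of $\Gamma \sep \Delta$, satisfies the existential requirement in the Flower clause for $\flower{\gamma}{\Gamma \sep \Delta}$. As $w'$ and $e'$ were arbitrary, this gives $\eforces{w}{\flower{\gamma}{\Gamma \sep \Delta}}{e}$, and the lemma follows.

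I do not expect any real obstacle: the corolla of a \kl{flower} is read as a \kl{disjunction} (cf.\ the interpretation of $n$-ary \kl{scrolls} in Section~\ref{sec:Flowers}), so this lemma is precisely the semantic counterpart of weakening a \kl{disjunction} by adjoining further disjuncts, dual to Lemma~\ref{lem:flowers-weakening}. The only points requiring any care are bookkeeping ones: that the \kl{pistil} $\gamma$, and therefore the antecedent hypothesis $\eforces{w'}{\Phi}{\update{e}{\bx}{e'}}$, is left untouched by the addition of $\Gamma$, and that multiset union genuinely preserves the occurrence of each \kl{petal} of $\Delta$ so that the recovered witness is legitimately available in $\Gamma \sep \Delta$.
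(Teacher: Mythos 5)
Your proposal is correct and matches the paper's own proof: both unfold the Flower clause of the forcing relation, feed the arbitrary $w' \revaccess w$ and $e'$ satisfying the pistil into the hypothesis to obtain a witnessing petal of $\Delta$, and observe that this petal still occurs in the enlarged corolla $\Gamma \sep \Delta$. The bookkeeping points you flag (the pistil being untouched, multiset inclusion of petals) are exactly what the paper leaves implicit in its closing ``and thus we can conclude.''
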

\begin{proof}
  Let $\gamma = \garden{\bx}{\Phi}$, $w$ a \kl{world} in some \kl{Kripke structure}
  $\mathcal{K}$, $w' \revaccess w$, $e$ a \evaluation{w} and $e'$ a \evaluation{w'}
  such that $\eforces{w}{\flower{\gamma}{\Delta}}{e}$ and $\eforces{w'}{\Phi}{e
  \upd{\bx} e'}$. Then by hypothesis there must exist some $\garden{\by}{\Psi}
  \in \Delta$ and \evaluation{w'} $e''$ such that $\eforces{w'}{\Psi}{e
  \upd{\bx} e' \upd{\by} e''}$, and thus we can conclude.
\end{proof}

The less obvious rules in terms of soundness are the \emph{pollination} rules
$\{\kl{poll{\da}},\kl{poll{\ua}}\}$, because of the arbitrary \kl{context} $\cXi$
and reliance on the \kl{pollination} relation.

\begin{lemma}[Cross-pollination]\label{lem:flowers-cross-pollination}
  $\Phi, \cfill{\cXi}{\Phi} \kequiv{} \Phi, \cfill{\cXi}{}$.
\end{lemma}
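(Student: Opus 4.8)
The plan is to establish the two halves of $\kequiv{}$ at once, by proving a single biconditional on the \kl{forcing} relation. The guiding intuition is that the top-level copy of $\Phi$, being \kl{forced} at the evaluation point, remains \kl{forced} at every accessible \kl{world} by Monotonicity (Lemma~\ref{lem:access-mono}); hence the extra copy of $\Phi$ buried in the \kl{hole} of $\cXi$ is redundant, and inserting or deleting it cannot affect the truth value of the whole \kl{bouquet}.

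Concretely, I would first prove the following strengthened claim by induction on $\cdepth{\cXi}$: for every \kl{world} $w$ and \evaluation{w} $e$ such that $\Phi$ is \kl{forced} at every $w' \revaccess w$ under every evaluation agreeing with $e$ on $\fv(\Phi)$, one has $\eforces{w}{\cfill{\cXi}{\Phi}}{e}$ iff $\eforces{w}{\cfill{\cXi}{}}{e}$. In the base case $\cXi = \Psi, \hole$ the two fillings are $\Psi, \Phi$ and $\Psi$, so the biconditional reduces to the idempotence of multiset \kl{juxtaposition}: \kl{forcing} $\Psi, \Phi$ is \kl{forcing} $\Psi$ together with $\Phi$, and $\Phi$ is \kl{forced} at $(w, e)$ by hypothesis.

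For the recursive step the \kl{hole} lies inside a \kl{flower}, either in its \kl{pistil} ($\cphi = \flower{\garden{\bx}{\cPhi}}{\Delta}$) or in one of its \kl{petals} ($\cphi = \flower{\gamma}{\garden{\bx}{\cPhi} \sep \Delta}$). In both cases I would unfold the Flower clause of Definition~\ref{def:forcing}, which passes to an arbitrary $w' \revaccess w$ and an extended evaluation $\update{e}{\bx}{e'}$, and then apply the induction hypothesis to $\cPhi$ at $(w', \update{e}{\bx}{e'})$. The only thing to check is that the availability premiss survives this descent: transitivity of \kl{accessibility} gives $w'' \revaccess w' \Rightarrow w'' \revaccess w$, and the \kl{Barendregt convention} guarantees $\bx \cap \fv(\Phi) = \emptyset$, so $\update{e}{\bx}{e'}$ still agrees with $e$ on $\fv(\Phi)$; hence $\Phi$ remains available at the new point and the induction hypothesis applies verbatim. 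The lemma then follows: from $\eforces{w}{\Phi, \cfill{\cXi}{\Phi}}{e}$ we extract $\eforces{w}{\Phi}{e}$, upgrade it to the availability premiss via Monotonicity together with the relevance of \kl{forcing} to \kl{free variables}, invoke the claim to obtain $\eforces{w}{\cfill{\cXi}{}}{e}$, and reattach $\Phi$; the converse is symmetric.

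The main obstacle is the evaluation bookkeeping in the recursive case rather than any deep idea: one must track that crossing the \kl{binders} $\bx$ of the traversed \kl{pistils} and \kl{petals} never disturbs the values assigned to $\fv(\Phi)$, which is exactly what the \kl{Barendregt convention} buys us. This also requires the auxiliary fact that $\eforces{w}{\Phi}{e}$ depends only on the restriction of $e$ to $\fv(\Phi)$; this coincidence property is routine, provable by the same induction on \kl{depth} used for the Mirroring lemma (Lemma~\ref{lem:mirroring}). Finally, note that handling the biconditional directly is what lets a single argument cover both \kl{positive} and \kl{negative} $\cXi$ uniformly, whereas a Functoriality-based approach (Lemma~\ref{lem:flowers-functoriality}) would only yield one inclusion per polarity and could not exploit the top-level copy of $\Phi$.
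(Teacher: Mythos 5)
Your proposal is correct and follows essentially the same route as the paper: an induction on $\cdepth{\cXi}$ establishing the forcing biconditional directly, with Monotonicity (Lemma~\ref{lem:access-mono}) and the Barendregt convention guaranteeing that the top-level copy of $\Phi$ remains forced as the induction descends through pistils and petals. The only difference is one of packaging --- you carry an explicit availability invariant (plus the routine free-variable coincidence fact), whereas the paper keeps the literal juxtaposed copy in the induction statement $\Phi, \cfill{\cXi}{\Phi} \kequiv{} \Phi, \cfill{\cXi}{}$ and re-inserts $\Phi$ locally via Weakening and Functoriality in the pistil case.
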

\begin{proof}
  Let $w$ a \kl{world} in some \kl{Kripke structure} $\mathcal{K}$, and $e$ a
  \evaluation{w}. We show that $\eforces{w}{\Phi, \cfill{\cXi}{\Phi}}{e}$ iff
  $\eforces{w}{\Phi, \cfill{\cXi}{}}{e}$ by induction on $\cdepth{\cXi}$.
  \begin{description}
    \item[Base case]
      Suppose $\cXi = \Xi', \hole$. Then we trivially have
      $\eforces{w}{\Phi, \Xi', \Phi}{e}$ iff $\eforces{w}{\Phi, \Xi'}{e}$ by
      Definition \ref{def:forcing}.
    \item[Recursive case]
      We distinguish two cases:
      \begin{description}
        \newcommand{\FillXi}[1]{\Xi', (\flower{\garden{\bx}{#1}}{\Delta})}
        \newcommand{\rFillXi}[1]{\flower{\garden{\bx}{#1}}{\Delta}}
        \newcommand{\fillXi}[1]{\FillXi{\cfill{\cXi_0}{#1}}}
        \newcommand{\rfillXi}[1]{\rFillXi{\cfill{\cXi_0}{#1}}}
        \newcommand{\ffillXi}[1]{\cfill{\cXi}{#1}}
        \item[Pistil]
          Suppose $\cXi = \FillXi{\cXi_0}$.
          \begin{enumerate}
            \item Suppose $\eforces{w}{\Phi, \ffillXi{\Phi}}{e}$. Then
            $\eforces{w}{\Phi}{e}$, $\eforces{w}{\Xi'}{e}$ and
            $\eforces{w}{\rfillXi{\Phi}}{e}$. Thus it remains to show that
            $\eforces{w}{\rfillXi{}}{e}$. Let $w' \revaccess w$ and $e'$ a
            \evaluation{w'} such that
            $\eforces{w'}{\cfill{\cXi_0}{}}{\update{e}{\bx}{e'}}$. By IH we have
            $\Phi, \cfill{\cXi_0}{} \kentails{} \Phi, \cfill{\cXi_0}{\Phi}$, and
            thus by Lemma \ref{lem:flowers-functoriality} $\rFillXi{\Phi,
            \cfill{\cXi_0}{\Phi}} \kentails{} \rFillXi{\Phi, \cfill{\cXi_0}{}}$.
            By Lemma \ref{lem:flowers-weakening} and Lemma
            \ref{lem:flowers-functoriality} we have $\eforces{w}{\rFillXi{\Phi,
            \cfill{\cXi_0}{\Phi}}}{e}$, and thus $\eforces{w}{\rFillXi{\Phi,
            \cfill{\cXi_0}{}}}{e}$. Then since
            $\eforces{w'}{\cfill{\cXi_0}{}}{\update{e}{\bx}{e'}}$, and since by
            Lemma \ref{lem:access-mono} (and the fact that $\bx \cap \fv(\Phi) =
            \emptyset$) we have $\eforces{w'}{\Phi}{\update{e}{\bx}{e'}}$, we
            can conclude that there are some $\garden{\by}{\Psi} \in \Delta$ and
            \evaluation{w'} $e''$ such that
            $\eforces{w'}{\Psi}{\update{e}{\bx}{\update{e'}{\by}{e''}}}$.

            \item $\Phi, \ffillXi{} \kentails{} \Phi, \ffillXi{\Phi}$
            holds by the same argument in the other direction.
          \end{enumerate}

        \renewcommand{\FillXi}[1]{\Xi', (\flower{\garden{\bx}{\Psi}}{\garden{\by}{#1}
        \sep \Delta})}
        \renewcommand{\rFillXi}[1]{\flower{\garden{\bx}{\Psi}}{\garden{\by}{#1}
        \sep \Delta}}
        \item[Petal]
          Suppose $\cXi = \FillXi{\cXi_0}$.
          \begin{enumerate}
            \item Suppose $\eforces{x}{\Phi, \ffillXi{\Phi}}{e}$. Then
            $\eforces{w}{\Phi}{e}$, $\eforces{w}{\Xi'}{e}$ and
            $\eforces{w}{\rfillXi{\Phi}}{e}$. Thus it remains to show that
            $\eforces{w}{\rfillXi{}}{e}$. Let $w' \revaccess w$ and $e'$
            a \evaluation{w'} such that
            $\eforces{w'}{\Psi}{\update{e}{\bx}{e'}}$. Then we can deduce
            that there exists a \evaluation{w'} $e''$ such that either:
            \begin{itemize}
              \item
              $\eforces{w'}{\Psi'}{\update{e}{\bx}{\update{e'}{\mathbf{y'}}{e''}}}$
              for some $\garden{\mathbf{y'}}{\Psi'} \in \Delta$, and we conclude
              immediately;
              \item
              or
              $\eforces{w'}{\cfill{\cXi_0}{\Phi}}{\update{e}{\bx}{\update{e'}{\by}{e''}}}$.
              By Lemma \ref{lem:access-mono} (and the fact that $\bx \cap
              \fv(\Phi) = \emptyset$ and $\by \cap \fv(\Phi) =
              \emptyset$) we have
              $\eforces{w'}{\Phi}{\update{e}{\bx}{\update{e'}{\by}{e''}}}$,
              and thus $\eforces{w'}{\Phi,
              \cfill{\cXi_0}{\Phi}}{\update{e}{\bx}{\update{e'}{\by}{e''}}}$.
              Then by IH we have $\eforces{w'}{\Phi,
              \cfill{\cXi_0}{}}{\update{e}{\bx}{\update{e'}{\by}{e''}}}$,
              and thus we can conclude in particular that $\eforces{w'}{
              \cfill{\cXi_0}{}}{\update{e}{\bx}{\update{e'}{\by}{e''}}}$.
            \end{itemize}

            \item $\Phi, \ffillXi{} \kentails{} \Phi,
            \ffillXi{\Phi}$ holds by the same argument in the other direction.
          \end{enumerate}
      \end{description}
  \end{description}
\end{proof}

\begin{lemma}[Pollination]\label{lem:flowers-pollination}
  
  If $\chyp{\Phi}{\cXi}$, then $\cfill{\cXi}{\Phi} \kequiv{}
  \cfill{\cXi}{}$.
\end{lemma}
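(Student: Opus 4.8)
The plan is to unfold the definition of $\chyp{\Phi}{\cXi}$ and reduce everything to the \kl{Cross-pollination} lemma (Lemma~\ref{lem:flowers-cross-pollination}) together with Functoriality (Lemma~\ref{lem:flowers-functoriality}). Since $\chyp{\Phi}{\cXi}$ means $\chyp{\phi}{\cXi}$ for each $\phi \in \Phi$, I first treat the single-\kl{flower} case, establishing that $\chyp{\phi}{\cXi}$ implies $\cfill{\cXi}{\phi} \kequiv{} \cfill{\cXi}{}$, and then lift it to \kl{bouquets} by induction on the size of $\Phi$. Writing $\Phi = \phi, \Phi'$, the justifying occurrence witnessing $\chyp{\phi}{\cXi}$ still lies in $\cXi$ after plugging $\Phi'$ next to the \kl{hole}, so $\chyp{\phi}{\cfill{\cXi}{\hole, \Phi'}}$ holds and the single-\kl{flower} case gives $\cfill{\cXi}{\phi, \Phi'} \kequiv{} \cfill{\cXi}{\Phi'}$; since $\Phi' \subseteq \Phi$ we still have $\chyp{\Phi'}{\cXi}$, so the induction hypothesis finishes the chain down to $\cfill{\cXi}{}$.

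For the single-\kl{flower} case I split on the two clauses of Definition~\ref{def:pollination}, writing the witnessing \kl{bouquet} as $\Sigma \ni \phi$ and the witnessing \kl{contexts} as $\cPsi$ (outer) and $\cPsi_0$ (inner). In the \kl{cross-pollination} case $\cXi = \cfill{\cPsi}{\Sigma, \cPsi_0}$, I write $\Sigma = \phi, \Sigma'$ and observe that $\cfill{\cXi}{\phi} = \cfill{\cPsi}{\Sigma', (\phi, \cfill{\cPsi_0}{\phi})}$. Lemma~\ref{lem:flowers-cross-pollination} gives $\phi, \cfill{\cPsi_0}{\phi} \kequiv{} \phi, \cfill{\cPsi_0}{}$, and applying Functoriality with the \kl{context} $\cfill{\cPsi}{\Sigma', \hole}$ — which preserves $\kequiv{}$ regardless of its polarity, since in a \kl{positive} \kl{context} both inner entailments lift directly and in a \kl{negative} one they lift with the directions swapped — turns this into $\cfill{\cXi}{\phi} \kequiv{} \cfill{\cXi}{}$, as desired.

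The genuinely new work is the \kl{self-pollination} case $\cXi = \cfill{\cPsi}{\flower{\garden{\bx}{\Sigma}}{\garden{\by}{\cPsi_0} \sep \Delta}}$ with $\phi \in \Sigma$. By Functoriality through $\cPsi$ it suffices to prove the local equivalence
\[
  \flower{\garden{\bx}{\Sigma}}{\garden{\by}{\cfill{\cPsi_0}{\phi}} \sep \Delta}
  \kequiv{}
  \flower{\garden{\bx}{\Sigma}}{\garden{\by}{\cfill{\cPsi_0}{}} \sep \Delta}.
\]
The key observation is a \emph{pointwise} reading of Lemma~\ref{lem:flowers-cross-pollination}: instantiating the valid equivalence $\phi, \cfill{\cPsi_0}{\phi} \kequiv{} \phi, \cfill{\cPsi_0}{}$ at a fixed \kl{world} $w'$ and \evaluation{w'} $f$ with $\eforces{w'}{\phi}{f}$ yields $\eforces{w'}{\cfill{\cPsi_0}{\phi}}{f}$ iff $\eforces{w'}{\cfill{\cPsi_0}{}}{f}$, since the left conjunct is then automatic. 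Now I unfold the \kl{Flower} clause of \kl{forcing} (Definition~\ref{def:forcing}): fixing $w' \revaccess w$ and an \evaluation{w'} $e'$ with $\eforces{w'}{\Sigma}{\update{e}{\bx}{e'}}$, the \kl{petals} in $\Delta$ are identical on both sides, so only the two \kl{petals} built from $\cPsi_0$ need comparing. For any \evaluation{w'} $e''$, at $f := \update{\update{e}{\bx}{e'}}{\by}{e''}$ we have $\eforces{w'}{\phi}{f}$: indeed $\phi \in \Sigma$ gives $\eforces{w'}{\phi}{\update{e}{\bx}{e'}}$, and since the \kl{binders} $\by$ are fresh for $\phi$ by the \kl{Barendregt convention}, the extra update on $\by$ leaves its \kl{forcing} unaffected. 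The pointwise observation then gives the petal bodies forced together at the \emph{same} witness $e''$, so one \kl{petal} is \kl{forced} exactly when the other is; hence the two \kl{flowers} are \kl{forced} under identical conditions and are \kl{semantically equivalent}.

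The main obstacle is precisely this \kl{self-pollination} case: one must track that the availability of $\phi$ inside the \kl{petal} is supplied by the \kl{pistil} assumption $\eforces{w'}{\Sigma}{\update{e}{\bx}{e'}}$, and that shifting the evaluation along the \kl{petal} \kl{binders} $\by$ preserves $\eforces{w'}{\phi}{f}$ — which is exactly where freshness from the \kl{Barendregt convention} is indispensable. Once this local flower equivalence is secured, the \kl{cross-pollination} clause and the \kl{bouquet} induction are routine applications of Lemmas~\ref{lem:flowers-cross-pollination} and~\ref{lem:flowers-functoriality}.
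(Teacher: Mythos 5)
Your proposal is correct and follows essentially the same route as the paper's proof: reduce to a single \kl{flower} and chain over the \kl{bouquet}, dispatch the \kl{cross-pollination} case via Lemma~\ref{lem:flowers-cross-pollination} and Functoriality, and handle \kl{self-pollination} by unfolding the \kl{Flower} clause of \kl{forcing}, using the \kl{pistil} assumption plus freshness of the \kl{petal} \kl{binders} to force $\phi$ at the extended evaluation and then invoking the cross-pollination equivalence pointwise. The only cosmetic difference is that you phrase that last step as a symmetric biconditional with a shared witness $e''$, where the paper argues each direction separately.
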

\begin{proof}
  We show that $\chyp{\phi}{\cXi}$ implies $\cfill{\cXi}{\phi} \kequiv{}
  \cfill{\cXi}{}$ for any \kl{flower} $\phi$ and \kl{context} $\cXi$: then
  assuming that $\Phi = \phi_1, \ldots, \phi_n$, we get
  $$\underbrace{\cfill{\cXi}{\phi_1, \ldots, \phi_n} \kequiv{} \cfill{\cXi}{\phi_2,
  \ldots, \phi_n} \kequiv{} \ldots \kequiv{} \cfill{\cXi}{}}_{\text{$n$ times}}$$
  and conclude by Observation \ref{obs:kentails-trans}.
  
  By Definition \ref{def:pollination}, there are a \kl{bouquet} $\Psi$ and two
  \kl{contexts} $\cXi', \cXi_0$ such that one of the two following cases holds:
  \begin{description}
    \item[\kl{Cross-pollination}] $\cXi = \cfill{\cXi'}{\Psi, \phi,
    \cXi_0}$. Then $\phi, \cfill{\cXi_0}{\phi} \kequiv{} \phi, \cfill{\cXi_0}{}$
    by Lemma \ref{lem:flowers-cross-pollination}, and we conclude by Lemma
    \ref{lem:flowers-functoriality}.
    \item[\kl{Self-pollination}] $\cXi =
    \cfill{\cXi'}{\flower{\garden{\bx}{\Psi, \phi}}{\garden{\by}{\cXi_0} \sep
    \Delta}}$ for some $\bx,\by,\Delta$. Let $w$ a \kl{world} in some \kl{Kripke
    structure} $\mathcal{K}$ and $e$ a \evaluation{w}. We show that
    $\eforces{w}{\flower{\garden{\bx}{\Psi,
    \phi}}{\garden{\by}{\cfill{\cXi_0}{\phi} \sep \Delta}}}{e}$ iff
    $\eforces{w}{\flower{\garden{\bx}{\Psi, \phi}}{\garden{\by}{\cfill{\cXi_0}{}
    \sep \Delta}}}{e}$, and conclude by Lemma \ref{lem:flowers-functoriality}.
    \begin{enumerate}
      \item Suppose that $\eforces{w}{\flower{\garden{\bx}{\Psi,
      \phi}}{\garden{\by}{\cfill{\cXi_0}{\phi} \sep \Delta}}}{e}$, and let
      $w' \revaccess w$ and $e'$ a \evaluation{w'} such that $\eforces{w'}{\Psi,
      \phi}{\update{e}{\bx}{e'}}$. Then we can deduce that there exists a
      \evaluation{w'} $e''$ such that either:
      \begin{itemize}
        \item
        $\eforces{w'}{\Psi'}{\update{e}{\bx}{\update{e'}{\mathbf{y'}}{e''}}}$
        for some $\garden{\mathbf{y'}}{\Psi'} \in \Delta$, and we conclude
        immediately;
        \item
        or
        $\eforces{w'}{\cfill{\cXi_0}{\phi}}{\update{e}{\bx}{\update{e'}{\by}{e''}}}$.
        Since $\fv(\phi) \cap \by = \emptyset$ we have
        $\eforces{w'}{\phi}{\update{e}{\bx}{\update{e'}{\by}{e''}}}$,
        and thus $\eforces{w'}{\phi,
        \cfill{\cXi_0}{\phi}}{\update{e}{\bx}{\update{e'}{\by}{e''}}}$.
        Then by Lemma \ref{lem:flowers-cross-pollination} we have $\eforces{w'}{\phi,
        \cfill{\cXi_0}{}}{\update{e}{\bx}{\update{e'}{\by}{e''}}}$,
        and thus we can conclude in particular that
        $\eforces{w'}{\cfill{\cXi_0}{}}{\update{e}{\bx}{\update{e'}{\by}{e''}}}$.
      \end{itemize}
    \item $\flower{\garden{\bx}{\Psi,
    \phi}}{\garden{\by}{\cfill{\cXi_0}{} \sep \Delta}}
    \kentails{} \flower{\garden{\bx}{\Psi,
    \phi}}{\garden{\by}{\cfill{\cXi_0}{\phi}} \sep \Delta}$ holds by the same
    argument in the other direction.
    \end{enumerate}
  \end{description}
\end{proof}

Proving the soundness of rules involving \kl{binders} (\kl{ipis}, \kl{ipet},
\kl{apis}, \kl{apet}) is also quite tedious, which can be understood as stemming
from the fact that \kl{substitutions} simulate the complex dynamics of the
\kl{LoIs} of \kl{EGs} in a \emph{global} rather than local way. In particular,
one needs to be careful about the scope of bound variables, which in \kl{EGs}
would be handled locally with (de)iteration rules on \kl{LoIs}.

\begin{lemma}[Universal instantiation]\label{lem:flowers-univ-inst}
  
  If $\sigma : \by$ is \kl{capture-avoiding} in $\flower{\Phi}{\Delta}$,
  then $\flower{\garden{\bx, \by}{\Phi}}{\Delta} \kentails{}
  \flower{\garden{\bx}{\sigma(\Phi)}}{\sigma(\Delta)}$.
\end{lemma}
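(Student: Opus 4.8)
The plan is to unfold both sides according to the \textbf{Flower} clause of Definition~\ref{def:forcing} and to shuttle between the substituted and unsubstituted bouquets by invoking the Mirroring lemma (Lemma~\ref{lem:mirroring}) twice. Write $\Delta = \fset{i}{n}{\garden{\bx_i}{\Phi_i}}$; since $\sigma : \by$ has support disjoint from every $\bx_i$ by the \kl{Barendregt convention}, we have $\sigma(\Delta) = \fset{i}{n}{\garden{\bx_i}{\sigma(\Phi_i)}}$. Note that \kl{capture-avoiding} in $\flower{\garden{{}}{\Phi}}{\Delta}$ unpacks into the facts $\sigma(\by) \cap \bv(\Phi) = \emptyset$, $\sigma(\by) \cap \bv(\Phi_i) = \emptyset$ and $\sigma(\by) \cap \bx_i = \emptyset$, all of which will be needed, while Barendregt supplies $\by \cap \bv(\Phi) = \emptyset$, $\by \cap \bv(\Phi_i) = \emptyset$ and $\by \cap \bx_i = \emptyset$.

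Fix a \kl{world} $w$ and a \evaluation{w} $e$ with $\eforces{w}{\flower{\garden{\bx,\by}{\Phi}}{\Delta}}{e}$; I must establish $\eforces{w}{\flower{\garden{\bx}{\sigma(\Phi)}}{\sigma(\Delta)}}{e}$. So let $w' \revaccess w$ and a \evaluation{w'} $e'$ be given with $\eforces{w'}{\sigma(\Phi)}{\update{e}{\bx}{e'}}$, and set $g = \update{e}{\bx}{e'}$. Applying Mirroring to the \kl{bouquet} $\Phi$ turns this into $\eforces{w'}{\Phi}{\update{g}{\by}{\interp{\sigma}_g}}$. Because $\bx \cap \by = \emptyset$, Observation~\ref{obs:update-assoc} shows $\update{g}{\by}{\interp{\sigma}_g}$ is precisely an update of $e$ on the combined binder $\bx \cup \by$, so it is a legitimate witness to feed to the hypothesis $\eforces{w'}{\flower{\garden{\bx,\by}{\Phi}}{\Delta}}{e}$ (after Monotonicity, Lemma~\ref{lem:access-mono}, to carry the hypothesis up to $w'$). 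This yields an index $i$ and a \evaluation{w'} $e''$ with $\eforces{w'}{\Phi_i}{\update{\update{g}{\by}{\interp{\sigma}_g}}{\bx_i}{e''}}$.

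It then remains to push the substitution back in, i.e.\ to deduce $\eforces{w'}{\sigma(\Phi_i)}{\update{g}{\bx_i}{e''}}$, which is exactly the premiss required for the $i$-th \kl{petal} of the right-hand \kl{flower}. Here I apply Mirroring a second time, now to $\Phi_i$: it suffices to check that the evaluation $\update{\update{g}{\by}{\interp{\sigma}_g}}{\bx_i}{e''}$ obtained above coincides with $\update{(\update{g}{\bx_i}{e''})}{\by}{\interp{\sigma}_{\update{g}{\bx_i}{e''}}}$. Using $\by \cap \bx_i = \emptyset$ I commute the two updates by Observation~\ref{obs:update-comm}, and using $\sigma(\by) \cap \bx_i = \emptyset$ (capture-avoidance) I may replace $\interp{\sigma}_g$ by $\interp{\sigma}_{\update{g}{\bx_i}{e''}}$ via Observation~\ref{obs:update-agree}, since $g$ and $\update{g}{\bx_i}{e''}$ differ only on $\bx_i$, which does not occur in any $\sigma(v)$ for $v \in \by$. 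Mirroring then delivers $\eforces{w'}{\sigma(\Phi_i)}{\update{g}{\bx_i}{e''}}$, and unfolding $g = \update{e}{\bx}{e'}$ gives the \textbf{Flower} clause for the conclusion at index $i$, completing the entailment (only this direction is asserted, as befits an instantiation rule).

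The main obstacle is entirely the bookkeeping of nested function updates in the petal case: at each invocation of Mirroring one must confirm that the disjointness and capture-avoidance side-conditions hold, and that the two ways of building the witnessing evaluation --- ``substitute then restrict to the petal binder $\bx_i$'' versus ``restrict then substitute'' --- agree on every variable that can actually occur. That agreement is exactly what capture-avoidance buys us, mediated by the commutativity and agreement observations for $\update{-}{-}{-}$.
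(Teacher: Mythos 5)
Your proof is correct and follows essentially the same route as the paper's: two applications of Mirroring (Lemma~\ref{lem:mirroring}) bridging $\sigma(\Phi)$/$\Phi$ and $\Phi_i$/$\sigma(\Phi_i)$, with Observations~\ref{obs:update-assoc}, \ref{obs:update-comm} and \ref{obs:update-agree} doing exactly the same update bookkeeping, and capture-avoidance supplying $\sigma(\by)\cap\bx_i=\emptyset$ at the same point. The only (harmless) superfluity is your appeal to Monotonicity to ``carry the hypothesis up to $w'$'': the \textbf{Flower} clause of Definition~\ref{def:forcing} already quantifies over all $w'\revaccess w$, so the hypothesis applies at $w'$ directly.
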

\begin{proof}
  Let $w$ a \kl{world} in some \kl{Kripke structure} $\mathcal{K}$, $w' \revaccess w$, $e$
  a \evaluation{w} and $e'$ a \evaluation{w'} such that
  $\eforces{w}{\flower{\garden{\bx,\by}{\Phi}}{\Delta}}{e}$ and
  $\eforces{w'}{\sigma(\Phi)}{\update{e}{\bx}{e'}}$. Therefore
  $\eforces{w'}{\Phi}{\update{e}{\bx}{\update{e'}{\by}{\interp{\sigma}_{\update{e}{\bx}{e'}}}}}$
  by Lemma \ref{lem:mirroring}, and thus $\eforces{w'}{\Phi}{\update{e}{\bx \cup
  \by}{(\update{e'}{\by}{\interp{\sigma}_{\update{e}{\bx}{e'}}})}}$ by
  Observation \ref{obs:update-assoc}. Then by hypothesis, there must be some
  $\garden{\bz}{\Psi} \in \Delta$ and \evaluation{w'} $e''$ such that
  $\eforces{w'}{\Psi}{e \upd{\bx \cup \by} (e' \upd{\by} \interp{\sigma}_{e
  \upd{\bx} e'}) \upd{\bz} e''}$, and thus $\eforces{w'}{\Psi}{e \upd{\bx} e'
  \upd{\by} \interp{\sigma}_{e \upd{\bx} e'} \upd{\bz} e''}$. Since $\sigma$ is
  \kl{capture-avoiding} in $\flower{\Phi}{\Delta}$, we know that for any $x \in \by$
  we have $\fv(\sigma(x)) \cap \bz = \emptyset$, and thus $\interp{\sigma(x)}_{e
  \upd{\bx} e' \upd{\bz} e''} = \interp{\sigma(x)}_{e \upd{\bx} e'}$. Hence by
  Observation \ref{obs:update-agree} and Observation \ref{obs:update-comm} we get $\eforces{w'}{\Psi}{e
  \upd{\bx} e' \upd{\bz} e'' \upd{\by} \interp{\sigma}_{e \upd{\bx} e' \upd{\bz}
  e''}}$, and by Lemma \ref{lem:mirroring} we conclude that
  $\eforces{w'}{\sigma(\Psi)}{e \upd{\bx} e' \upd{\bz} e''}$.
\end{proof}

\begin{lemma}[Existential instantiation]\label{lem:flowers-ex-inst}
  
  If $\sigma : \by$ is \kl{capture-avoiding} in $\Phi$, then
  $\flower{\gamma}{\garden{\bx}{\sigma(\Phi)} \sep \Delta} \kentails{}
  \flower{\gamma}{\garden{\bx,\by}{\Phi} \sep \Delta}$.
\end{lemma}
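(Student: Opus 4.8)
The plan is to mirror the structure of the preceding proof of Lemma~\ref{lem:flowers-univ-inst}, unfolding the \kl{forcing} clause for \kl{flowers} and reducing the claim to transporting a witness across the petal $\garden{\bx}{\sigma(\Phi)}$ on the left into the petal $\garden{\bx,\by}{\Phi}$ on the right. The key tool will be the Mirroring lemma (\ref{lem:mirroring}), used this time to \emph{absorb} the substitution $\sigma$ into the evaluation rather than to extract it. Write the (unchanged) pistil as $\gamma = \garden{\bz}{\Xi}$. First I would fix a \kl{world} $w$ in some \kl{Kripke structure} $\mathcal{K}$ and a \evaluation{w} $e$ with $\eforces{w}{\flower{\garden{\bz}{\Xi}}{\garden{\bx}{\sigma(\Phi)} \sep \Delta}}{e}$, then take any $w' \revaccess w$ and \evaluation{w'} $e'$ with $\eforces{w'}{\Xi}{\update{e}{\bz}{e'}}$; the goal becomes to exhibit a satisfied \kl{petal} of the right-hand \kl{flower}.

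By the assumption that the left-hand \kl{flower} is forced, the antecedent $\eforces{w'}{\Xi}{\update{e}{\bz}{e'}}$ already yields a satisfied \kl{petal}, and I would split into two cases exactly as in the proofs of Lemmas~\ref{lem:flowers-cross-pollination} and~\ref{lem:flowers-coweakening}. If the satisfied \kl{petal} is some $\garden{\mathbf{y'}}{\Psi'} \in \Delta$, it is shared between the two \kl{flowers} and we conclude immediately. Otherwise there is a \evaluation{w'} $e''$ with $\eforces{w'}{\sigma(\Phi)}{\update{\update{e}{\bz}{e'}}{\bx}{e''}}$; writing $f = \update{(\update{e}{\bz}{e'})}{\bx}{e''}$, Lemma~\ref{lem:mirroring} gives $\eforces{w'}{\Phi}{\update{f}{\by}{\interp{\sigma}_f}}$. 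By the associativity of update (Observation~\ref{obs:update-assoc}),
\[
\update{f}{\by}{\interp{\sigma}_f} = \update{(\update{e}{\bz}{e'})}{\bx \cup \by}{(\update{e''}{\by}{\interp{\sigma}_f})},
\]
so setting $e''' = \update{e''}{\by}{\interp{\sigma}_f}$ (a \evaluation{w'} since $\interp{\sigma}_f$ lands in $M_{w'}$) yields $\eforces{w'}{\Phi}{\update{\update{e}{\bz}{e'}}{\bx,\by}{e'''}}$, which is precisely the satisfaction of the \kl{petal} $\garden{\bx,\by}{\Phi}$. Hence the right-hand \kl{flower} is forced, as required.

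The hard part will not be any single deduction but the careful bookkeeping of function updates, together with discharging the side conditions of Mirroring: I must check that $\sigma : \by$ is \kl{capture-avoiding} in $\Phi$ (given by hypothesis) and that $\by \cap \bv(\Phi) = \emptyset$, the latter following from the \kl{Barendregt convention} since the \kl{binders} $\by$ of the \kl{garden} $\garden{\bx,\by}{\Phi}$ are distinct from the bound variables of $\Phi$. The essential insight, which makes the witness construction work, is that the single witness $e'''$ on the left factors as the original $\bx$-witness $e''$ packaged with $\interp{\sigma}_f$ on $\by$; this is the semantic shadow of the fact that the rule \kl{ipet} instantiates an existentially-forced \kl{sprinkler}, and it is exactly dual to the universal case treated in Lemma~\ref{lem:flowers-univ-inst}.
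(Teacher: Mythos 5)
Your proposal is correct and follows essentially the same route as the paper's proof: unfold the forcing clause, case-split on which petal of the left-hand flower is satisfied, and in the $\garden{\bx}{\sigma(\Phi)}$ case apply the Mirroring lemma (Lemma~\ref{lem:mirroring}) to absorb $\sigma$ into the evaluation, then repackage the witness on $\bx \cup \by$ using the algebra of updates. Your explicit discharge of the side conditions of Mirroring and your appeal to Observation~\ref{obs:update-assoc} for the final regrouping are, if anything, slightly more careful than the published argument.
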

\begin{proof}
  Let $\gamma = \garden{\bz}{\Xi}$, and $w$ a \kl{world} in some \kl{Kripke
  structure} $\mathcal{K}$, $w' \revaccess w$, $e$ a \evaluation{w} and $e'$ a
  \evaluation{w'} such that
  $\eforces{w}{\flower{\gamma}{\garden{\bx}{\sigma(\Phi)} \sep \Delta}}{e}$ and
  $\eforces{w'}{\Xi}{\update{e}{\bz}{e'}}$. Then by hypothesis, there must be
  some \evaluation{w'} $e''$ such that either:
  \begin{itemize}
    \item
    $\eforces{w'}{\Xi'}{\update{e}{\bz}{\update{e'}{\bz'}{e''}}}$ for some
    $\garden{\bz'}{\Xi'} \in \Delta$, and we conclude immediately;
    \item
    or $\eforces{w'}{\sigma(\Phi)}{\update{e}{\bz}{\update{e'}{\bx}{e''}}}$.
    Then by Lemma \ref{lem:mirroring} we have
    $\eforces{w'}{\Phi}{\update{e}{\bz}{\update{e'}{\bx}{e'' \upd{\by}
    \interp{\sigma}_{\update{e}{\bz}{\update{e'}{\bx}{e''}}}}}}$, and thus we
    can conclude with $\eforces{w'}{\Phi}{\update{e}{\bz}{\update{e'}{\bx \cup
    \by}{(e'' \upd{\by}
    \interp{\sigma}_{\update{e}{\bz}{\update{e'}{\bx}{e''}}})}}}$ by
    Observation \ref{obs:update-agree}.
  \end{itemize}
\end{proof}

We are now equipped with enough lemmas to prove the soundness of each rule,
starting with the \emph{\kl{shallow}} version of \kl{natural} rules. In fact we are
able to prove more: that every $\Nature$-rule is \emph{invertible}, i.e. its
conclusion entails its premiss.

\begin{lemma}[Shallow soundness]\label{lem:flowers-local-soundness} If $\Phi
  \lstep \Psi$, then $\Phi \kequiv{} \Psi$.
\end{lemma}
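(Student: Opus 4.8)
The plan is to proceed by a case analysis on the $\Nature$-rule instance witnessing the shallow step $\Phi \lstep \Psi$. Since the step is shallow it takes place in the empty \kl{context} $\hole$, so $\Phi$ and $\Psi$ are precisely the conclusion and the premiss read off Figure~\ref{fig:flower-calculus}, and for each rule I must establish both $\Phi \kentails{} \Psi$ (invertibility) and $\Psi \kentails{} \Phi$ (soundness proper), i.e.\ $\Phi \kequiv{} \Psi$. Most cases reduce directly to the lemmas already proved. The pollination rules \kl{poll{\da}} and \kl{poll{\ua}} are exactly Lemma~\ref{lem:flowers-pollination}, whose hypothesis is the side condition $\chyp{\Phi}{\cXi}$ of both rules. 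For \kl{ipis}, invertibility follows from Universal instantiation (Lemma~\ref{lem:flowers-univ-inst}) for the instantiated copy together with Reflexivity (Observation~\ref{obs:kentails-refl}) for the retained copy, while soundness is the trivial projection of a \kl{bouquet} onto one of its \kl{flowers}, immediate from the Bouquet clause of Definition~\ref{def:forcing}. Dually, for \kl{ipet}, invertibility is Co-weakening (Lemma~\ref{lem:flowers-coweakening}, which adds the extra \kl{petal}), and soundness follows from Existential instantiation (Lemma~\ref{lem:flowers-ex-inst}) absorbing the instantiated \kl{petal} $\garden{\bx}{\sigma(\Phi)}$ into $\garden{\bx, \by}{\Phi}$, after which the two resulting identical \kl{petals} are merged, since duplicating a \kl{petal} yields a forcing-equivalent \kl{flower} by Definition~\ref{def:forcing}.

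The two erasure rules unfold the \kl{forcing} relation directly. For \kl{epis} I show $\Phi \kequiv{} \flower{\garden{{}}{{}}}{\garden{{}}{\Phi}}$: all \kl{sprinklers} in the flower are empty, so the \evaluation{} updates are inert and, by Definition~\ref{def:forcing}, forcing the flower at $(w,e)$ amounts to $\eforces{w'}{\Phi}{e}$ for every $w' \revaccess w$; by Monotonicity (Lemma~\ref{lem:access-mono}) this is equivalent to $\eforces{w}{\Phi}{e}$. For \kl{epet} I show $\flower{\gamma}{\garden{{}}{{}} \sep \Delta} \kequiv{} \emptyset$: one direction is Weakening (Lemma~\ref{lem:flowers-weakening}), and the converse holds because the empty \kl{petal} $\garden{{}}{{}}$ is forced vacuously at every accessible \kl{world}, so the flower's forcing condition is met unconditionally.

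The genuinely interesting case is \kl{srep}, which I expect to be the main obstacle. Writing $\gamma_i = \garden{\bz_i}{\Xi_i}$ and $\Delta = \fset{k}{m}{\garden{\by_k}{\Psi_k}}$, unfolding Definition~\ref{def:forcing} shows that forcing the conclusion says ``whenever $\Phi$ and the disjunctive sub-\kl{flower} $f = \flower{\garden{{}}{{}}}{\fset{i}{n}{\gamma_i}}$ hold, some \kl{petal} of $\Delta$ holds'', while forcing the premiss says ``whenever $\Phi$ holds, then for each $i$ and each later \kl{world} at which $\Xi_i$ becomes forced, some \kl{petal} of $\Delta$ holds''. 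This equivalence is the semantic counterpart of the intuitionistic law $A \wedge (\bigvee_i \exists \bz_i\, B_i) \supset D \;\equiv\; A \supset \bigwedge_i \forall \bz_i\,(B_i \supset D)$, combining currying, distribution of implication over \kl{disjunction}, and $\exists$-elimination. For premiss $\Rightarrow$ conclusion I instantiate the premiss with the witness $(i, \bz_i \mapsto e^\dagger)$ extracted from the hypothesis that $f$ is forced; for conclusion $\Rightarrow$ premiss I apply the conclusion at the later \kl{world} $w''$ where $\Xi_i$ holds, noting that the very same witness makes $f$ forced there by Monotonicity (Lemma~\ref{lem:access-mono}). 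The delicate point is the bookkeeping of the existential \kl{binders} $\bz_i$: in the conclusion they are scoped inside the \kl{pistil} and hence disjoint from $\Delta$, whereas in the premiss they scope over $\Delta$. The \kl{Barendregt convention} guarantees $\bz_i \cap \fv(\Psi_k) = \emptyset$, so the spurious update $\upd{\bz_i} e^\dagger$ may be discarded from the forcing of $\Psi_k$ by Observations~\ref{obs:update-agree} and~\ref{obs:update-comm}, which is exactly what transports the \kl{petal} witness between the two readings.
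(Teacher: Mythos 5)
Your proposal is correct and follows essentially the same route as the paper: a rule-by-rule inspection that dispatches \kl{poll{\da}}/\kl{poll{\ua}} to the pollination lemma, \kl{ipis}/\kl{ipet} to the instantiation and co-weakening lemmas plus reflexivity, handles \kl{epis}/\kl{epet} by unfolding \kl{forcing} with monotonicity, and gives a direct semantic argument for \kl{srep}. The only (harmless) variations are that you phrase the \kl{ipet} soundness direction as an application of Lemma~\ref{lem:flowers-ex-inst} followed by merging duplicate \kl{petals} where the paper redoes the case analysis inline, and you make explicit the \kl{Barendregt}-convention bookkeeping for the \kl{srep} binders that the paper handles somewhat tersely via the update observations.
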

\begin{proof}
  Let $w$ a \kl{world} in some \kl{Kripke structure} $\mathcal{K}$, $w' \revaccess w$, $e$
  a \evaluation{w} and $e'$ a \evaluation{w'}. We proceed by inspection of every
  $\Nature$-rule.
  
  \begin{description}
    \item[\kl{poll{\da}}, \kl{poll{\ua}}]
      By Lemma \ref{lem:flowers-pollination}.
    
    \item[\kl{epis}]~\\\vspace{-1.5em}
    \begin{enumerate}
      \item Suppose that $\eforces{w}{\Phi}{e}$. Then by Lemma \ref{lem:access-mono}
      we have $\eforces{w'}{\Phi}{e}$, and thus we can conclude for instance with
    $\eforces{w'}{\Phi}{\update{e}{\emptyset}{\update{e'}{\emptyset}{e}}}$.
      \item Suppose that $\eforces{w}{\flower{{}}{\Phi}}{e}$. Then since we
      trivially have $w \revaccess w$ and
      $\eforces{w}{\emptyset}{\update{e}{\emptyset}{e}}$, we get that
      $\eforces{w}{\Phi}{\update{e}{\emptyset}{\update{e}{\emptyset}{e''}}}$
      for some \evaluation{w} $e''$, and thus $\eforces{w}{\Phi}{e}$.
    \end{enumerate}

    \item[\kl{epet}]
      Let $\gamma = \garden{\bx}{\Phi}$. We trivially have that
      $\eforces{w'}{\emptyset}{\update{e}{\bx}{\update{e'}{\emptyset}{e}}}$,
      and thus can conclude.

    \item[\kl{ipis}] We trivially have $\flower{\garden{\bx,\by}{\Phi}}{\Delta}
      \kentails{} \flower{\garden{\bx,\by}{\Phi}}{\Delta}$ by
      Observation \ref{obs:kentails-refl}, and thus we can conclude by
      Lemma \ref{lem:flowers-univ-inst}.

    \item[\kl{ipet}] The first direction is trivial by
      Lemma \ref{lem:flowers-coweakening}. In the other direction, let $\gamma =
      \garden{\bz}{\Xi}$, and suppose that
      $\eforces{w}{\flower{\gamma}{\garden{\bx}{\sigma(\Phi)} \sep
      \garden{\bx,\by}{\Phi} \sep \Delta}}{e}$ and $\eforces{w'}{\Xi}{e
      \upd{\bz} e'}$. Then there must be some \evaluation{w'} $e''$ such that
      either:
      \begin{itemize}
        \item $\eforces{w'}{\Xi'}{\update{e}{\bz}{\update{e'}{\bz'}{e''}}}$ for
        some $\garden{\bz'}{\Xi'} \in \Delta$, and we conclude immediately;
        \item $\eforces{w'}{\Phi}{e \upd{\bz} e' \upd{\bx \cup \by} e''}$, and
        we also conclude immediately;
        \item or $\eforces{w'}{\sigma(\Phi)}{e \upd{\bz} e' \upd{\bx} e''}$, and
        we conclude with the same argument as in the proof of
        Lemma \ref{lem:flowers-ex-inst}.
      \end{itemize}
      
    \item[\kl{srep}]
      Let $\gamma_i = \garden{\by_i}{\Psi_i}$ for $1 \leq i \leq n$.
      \begin{enumerate}
        \item Suppose that
        $\eforces{w}{\flower{\garden{\bx}{\Phi,(\flower{}{\fset{i}{n}{\gamma_i}})}}{\Delta}}{e}$
        and $\eforces{w'}{\Phi}{e \upd{\bx} e'}$. We show that
        $\eforces{w'}{\flower{\gamma_i}{\Delta}}{e \upd{\bx} e'}$ for all $1
        \leq i \leq n$, i.e. for every $w'' \revaccess w'$ and \evaluation{w''} $e''$,
        $\eforces{w''}{\Psi_i}{e \upd{\bx} e' \upd{\by_i} e''}$ implies that
        there is some $\garden{\bz}{\Xi} \in \Delta$ and \evaluation{w''} $e'''$
        such that $\eforces{w''}{\Xi}{e \upd{\bx} e' \upd{\by_i} e'' \upd{\bz}
        e'''}$. By assumption, Lemma \ref{lem:access-mono} and the fact that
        $\fv(\Phi) \cap \by_i = \emptyset$, we have $\eforces{w''}{\Phi}{e
        \upd{x} e' \upd{\by_i} e''}$. Also since $\eforces{w''}{\Psi_i}{e
        \upd{\bx} e' \upd{\by_i} e''}$ we immediately get
        $\eforces{w''}{\flower{}{\fset{i}{n}{\gamma_i}}}{e \upd{\bx} e'}$, and
        thus $\eforces{w''}{\flower{}{\fset{i}{n}{\gamma_i}}}{e \upd{\bx} e'
        \upd{\by_i} e''}$ since $\fv(\flower{}{\fset{i}{n}{\gamma_i}}) \cap
        \by_i = \emptyset$. Thus by Observation \ref{obs:update-assoc} we have
        $\eforces{w''}{\Phi,(\flower{}{\fset{i}{n}{\gamma_i}})}{e \upd{\bx \cup
        \by_i} (e' \upd{\by_i} e'')}$, and by hypothesis (and the fact that $w''
        \revaccess w$ by transitivity) we obtain that $\eforces{w''}{\Xi}{e \upd{\bx
        \cup \by_i} (e' \upd{\by_i} e'') \upd{\bz} e'''}$ for some
        $\garden{\bz}{\Xi} \in \Delta$ and \evaluation{w''} $e'''$. Then we
        conclude again by Observation \ref{obs:update-assoc}.

        \item Suppose that
        $\eforces{w}{\flower{\garden{\bx}{\Phi}}{{\fset{i}{n}{\flower{\gamma_i}{\Delta}}}}}{e}$
        and $\eforces{w'}{\Phi,(\flower{}{\fset{i}{n}{\gamma_i}})}{e \upd{\bx}
        e'}$. Then there must be some $1 \leq i \leq n$ and \evaluation{w'}
        $e''$ such that $\eforces{w'}{\Psi_i}{e \upd{\bx} e' \upd{\by_i} e''}$,
        and for all $1 \leq j \leq n$ we know that
        $\eforces{w'}{\flower{\gamma_j}{\Delta}}{e \upd{\bx} e'}$. Thus since
        $w' \leq w'$ by reflexivity, there must be some $\garden{\bz}{\Xi} \in
        \Delta$ and \evaluation{w'} $e'''$ such that $\eforces{w'}{\Xi}{e
        \upd{\bx} e' \upd{\by_i} e'' \upd{\bz} e'''}$, and we can conclude with
        $\eforces{w'}{\Xi}{e \upd{\bx} e' \upd{\by_i \cup \bz} (e'' \upd{\bz}
        e''')}$ by Observation \ref{obs:update-assoc}.
      \end{enumerate} 
  \end{description}
\end{proof}

Then the soundness of the contextual closure of \kl{natural} rules follows
immediately from functoriality:

\begin{lemma}[Natural soundness]\label{lem:flowers-natural-soundness}
  If $\Phi \step{\Nature} \Psi$ then $\Phi \kequiv{} \Psi$.
\end{lemma}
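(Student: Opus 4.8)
The plan is to reduce the statement to two results already in hand: Shallow soundness (Lemma~\ref{lem:flowers-local-soundness}) and Functoriality (Lemma~\ref{lem:flowers-functoriality}). First I would unfold the definition of $\step{\Nature}$ given in Definition~\ref{def:derivation}, where this relation is declared to be the closure of \kl{shallow} \kl{natural} steps $\lstep$ under arbitrary \kl{contexts}. From a step $\Phi \step{\Nature} \Psi$ I thus extract a \kl{context} $\cXi$ and \kl{bouquets} $\Phi_0, \Psi_0$ with $\Phi = \cfill{\cXi}{\Phi_0}$, $\Psi = \cfill{\cXi}{\Psi_0}$ and $\Phi_0 \lstep \Psi_0$.

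Next I would apply Shallow soundness to the inner \kl{shallow} step, obtaining $\Phi_0 \kequiv{} \Psi_0$, i.e.\ both $\Phi_0 \kentails{} \Psi_0$ and $\Psi_0 \kentails{} \Phi_0$. It then remains only to transport this semantic equivalence through $\cXi$, which is exactly the job of Functoriality.

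The single point requiring care is the polarity of $\cXi$, since Functoriality preserves entailment in \kl{positive} \kl{contexts} but reverses it in \kl{negative} ones. This reversal is harmless because we begin from a two-sided entailment: whether $\cXi$ is \kl{positive} or \kl{negative}, applying Functoriality to each of the two directions $\Phi_0 \kentails{} \Psi_0$ and $\Psi_0 \kentails{} \Phi_0$ yields the pair $\cfill{\cXi}{\Phi_0} \kentails{} \cfill{\cXi}{\Psi_0}$ and $\cfill{\cXi}{\Psi_0} \kentails{} \cfill{\cXi}{\Phi_0}$, the two cases differing only in which direction is produced from which hypothesis. Hence $\Phi = \cfill{\cXi}{\Phi_0} \kequiv{} \cfill{\cXi}{\Psi_0} = \Psi$. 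I expect no genuine obstacle here: the lemma is a direct corollary of the two preceding results, and the symmetry of $\kequiv{}$ is precisely what makes the polarity case analysis collapse.
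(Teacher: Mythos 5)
Your proposal is correct and is exactly the paper's argument: the paper's proof reads simply ``By Lemma~\ref{lem:flowers-local-soundness} and Lemma~\ref{lem:flowers-functoriality}'', and your write-up merely spells out the contextual-closure unfolding and the polarity case analysis that this one-liner leaves implicit. The observation that the two-sided equivalence makes the positive/negative distinction collapse is precisely the reason the paper can afford to be so terse.
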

\begin{proof}
  By Lemma \ref{lem:flowers-local-soundness} and Lemma \ref{lem:flowers-functoriality}.
\end{proof}

The soundness of \kl{cultural} rules is straightforward with the previous lemmas:

\begin{lemma}[Cultural soundness]\label{lem:flowers-cultural-soundness}
  If $\Phi \step{\Culture} \Psi$ then $\Psi \kentails{} \Phi$.
\end{lemma}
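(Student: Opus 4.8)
The plan is to reduce the statement to the shallow entailments already established and then lift them through the surrounding context by functoriality (Lemma~\ref{lem:flowers-functoriality}). Every $\Culture$-rule rewrites a conclusion $\Phi = \cfill{\cXi}{G}$ into a premiss $\Psi = \cfill{\cXi}{H}$, where $\cXi$ is a fixed positive context $\cXiP$ or a fixed negative context $\cXiN$. So it suffices, for each rule, to produce a shallow entailment relating $G$ and $H$, oriented so that one application of functoriality yields precisely $\Psi \kentails{} \Phi$. I therefore proceed by case analysis on the six $\Culture$-rules, pairing each with the appropriate shallow lemma.

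First I would handle \kl{grow} and \kl{crop}, whose shallow ingredient is Weakening (Lemma~\ref{lem:flowers-weakening}), $\Phi \kentails{} \emptyset$. For \kl{grow} the context is positive, and functoriality gives $\cfill{\cXiP}{\Phi} \kentails{} \cfill{\cXiP}{}$ directly; for \kl{crop} the context is negative, so functoriality reverses the orientation and yields $\cfill{\cXiN}{} \kentails{} \cfill{\cXiN}{\Phi}$ --- exactly the required direction. Next, \kl{pull} and \kl{glue} use Co-weakening (Lemma~\ref{lem:flowers-coweakening}), $\flower{\gamma}{\Delta} \kentails{} \flower{\gamma}{\Gamma \sep \Delta}$, lifted through a positive context for \kl{pull} and a negative one for \kl{glue}. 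Finally, \kl{apis} and \kl{apet} use respectively Universal instantiation (Lemma~\ref{lem:flowers-univ-inst}) and Existential instantiation (Lemma~\ref{lem:flowers-ex-inst}), whose capture-avoidance hypotheses coincide with the side conditions stated for the rules in Figure~\ref{fig:flower-calculus}; again functoriality is applied in a positive context for \kl{apis} and a negative one for \kl{apet}.

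The only point requiring care is the interplay between the direction of each shallow entailment and the polarity of the surrounding context: functoriality preserves the orientation of $\kentails{}$ in positive contexts but flips it in negative ones. This is precisely why the $\Culture$-rules come in dual pairs (\kl{grow}/\kl{crop}, \kl{pull}/\kl{glue}, \kl{apis}/\kl{apet}) that share a single shallow lemma while living in opposite polarities. I do not expect a genuine obstacle here, since all the semantic content has already been discharged in the preceding lemmas; the present argument is a short bookkeeping exercise over six cases, each a single invocation of functoriality, with the polarity tracking being the sole subtlety.
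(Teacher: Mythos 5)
Your proposal is correct and matches the paper's own proof exactly: each of the six $\Culture$-rules is discharged by the corresponding shallow lemma (Weakening for \kl{grow}/\kl{crop}, Co-weakening for \kl{pull}/\kl{glue}, Universal and Existential instantiation for \kl{apis}/\kl{apet}) followed by one application of functoriality, with the polarity of the context determining the orientation. The paper states this even more tersely, but the content and the orientation bookkeeping are identical.
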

\begin{proof}
  By inspection of every $\Culture$-rule.
  \begin{description}
    \item[\kl{grow}, \kl{crop}] By Lemma \ref{lem:flowers-weakening} and
    Lemma \ref{lem:flowers-functoriality}.
    
    \item[\kl{pull}, \kl{glue}] By Lemma \ref{lem:flowers-coweakening} and
    Lemma \ref{lem:flowers-functoriality}.
    
    \item[\kl{apis}] By Lemma \ref{lem:flowers-univ-inst} and Lemma \ref{lem:flowers-functoriality}.

    \item[\kl{apet}] By Lemma \ref{lem:flowers-ex-inst} and Lemma \ref{lem:flowers-functoriality}.
  \end{description}
\end{proof}

Then it follows that every \kl{derivation} in the \kl{flower calculus} is sound:

\begin{theorem}\label{thm:flowers-soundness}
  If $\Phi \steps{} \Psi$ then $\Psi \kentails{} \Phi$.
\end{theorem}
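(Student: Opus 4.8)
The plan is to lift the already-established single-step soundness results to arbitrary derivations by a straightforward induction on derivation length. Recall from Definition~\ref{def:derivation} that $\Phi \steps{} \Psi$ unfolds as a finite chain $\Phi = \Phi_0 \step{} \Phi_1 \step{} \cdots \step{} \Phi_n = \Psi$, where each step lies in $\Nature \cup \Culture$. The key point is that both kinds of step are already known to be sound in the appropriate direction: a $\Nature$-step $\Phi_i \step{\Nature} \Phi_{i+1}$ gives $\Phi_i \kequiv{} \Phi_{i+1}$ by Lemma~\ref{lem:flowers-natural-soundness}, and hence in particular $\Phi_{i+1} \kentails{} \Phi_i$; a $\Culture$-step $\Phi_i \step{\Culture} \Phi_{i+1}$ gives $\Phi_{i+1} \kentails{} \Phi_i$ directly by Lemma~\ref{lem:flowers-cultural-soundness}. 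So every single step $\Phi_i \step{} \Phi_{i+1}$ yields $\Phi_{i+1} \kentails{} \Phi_i$, i.e.\ the premiss semantically entails the conclusion, matching the backward reading of the rules.

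First I would treat the base case $n = 0$, where $\Phi = \Psi$ and $\Psi \kentails{} \Phi$ holds by reflexivity (Observation~\ref{obs:kentails-refl}). For the inductive step, I would split a derivation of length $n+1$ as $\Phi \step{} \Phi_1 \nsteps{n}{} \Psi$. Single-step soundness gives $\Phi_1 \kentails{} \Phi$, while the induction hypothesis applied to the length-$n$ tail gives $\Psi \kentails{} \Phi_1$. Chaining these by transitivity of semantic entailment (Observation~\ref{obs:kentails-trans}) yields $\Psi \kentails{} \Phi$, as required. Since the direction of entailment accumulates consistently from premiss to conclusion at every step, no alternation of directions can occur.

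There is essentially no substantive obstacle here: all the real work has already been absorbed into the soundness lemmas for the shallow natural rules (Lemma~\ref{lem:flowers-local-soundness}), their functorial closure (Lemma~\ref{lem:flowers-natural-soundness}), and the cultural rules (Lemma~\ref{lem:flowers-cultural-soundness}). The only things to watch are bookkeeping: keeping track of which direction of $\kentails{}$ each rule provides, and noting that the two-directional equivalence $\kequiv{}$ furnished by natural steps is more than enough, since it entails the single direction we need. The whole argument is therefore a routine induction combining the per-step lemmas with reflexivity and transitivity of $\kentails{}$.
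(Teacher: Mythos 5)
Your proposal is correct and takes exactly the same route as the paper: the paper's proof of Theorem~\ref{thm:flowers-soundness} is precisely the appeal to Lemma~\ref{lem:flowers-natural-soundness}, Lemma~\ref{lem:flowers-cultural-soundness}, and Observations~\ref{obs:kentails-refl} and~\ref{obs:kentails-trans}, which you have merely unfolded into an explicit induction on derivation length. The bookkeeping about the direction of $\kentails{}$ at each step is handled correctly.
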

\begin{proof}
  By Lemma \ref{lem:flowers-natural-soundness}, Lemma \ref{lem:flowers-cultural-soundness}
  and Observation \ref{obs:kentails-refl}, Observation \ref{obs:kentails-trans}.
\end{proof}

In particular $\entails{} \phi$ implies $\kentails{} \phi$, i.e. every provable
\kl{flower} is true.

\section{Detailed proofs}\label{app:Proofs}

\subsection{Deduction theorems}\label{app:Proofs-deduction}

\begin{lemma}[Positive closure]\label{lem:flowers-posclos}
  If $\Phi \step{} \Psi$, then $\cfill{\cXiP}{\Phi} \step{} \cfill{\cXiP}{\Psi}$.
\end{lemma}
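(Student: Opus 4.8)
The plan is to case on whether the single step $\Phi \step{} \Psi$ is a $\Nature$-step or a $\Culture$-step, and in each case exploit the way \kl{filling} interacts with polarity. Two auxiliary facts about \kl{contexts} do all the work, both provable by a routine induction on the structure of the outer context (equivalently, on $\cdepth{\cXiP}$): first, that \kl{filling} is associative, $\cfill{\cXi}{\cfill{\cXi'}{\Phi}} = \cfill{(\cfill{\cXi}{\cXi'})}{\Phi}$; and second, that inversions are additive, $\inv(\cfill{\cXi}{\cXi'}) = \inv(\cXi) + \inv(\cXi')$. The latter is the crux: since $\cXiP$ is \kl{positive}, $\inv(\cXiP)$ is even, so plugging a \kl{context} into $\cXiP$ leaves its parity --- and hence its polarity --- unchanged.

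For the \kl{natural} case nothing beyond the definitions is needed: by the closure of \kl{natural} \kl{derivations} under arbitrary \kl{contexts} (Definition~\ref{def:derivation}), $\Phi \step{\Nature} \Psi$ already implies $\cfill{\cXiP}{\Phi} \step{\Nature} \cfill{\cXiP}{\Psi}$, since $\Nature$-rules apply irrespective of polarity; this is in particular a $\step{}$ step.

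For the \kl{cultural} case, the step is an instance of some $\Culture$-rule, which by Figure~\ref{fig:flower-calculus} comes with its own \kl{context} $\cXi'$ of a fixed polarity: \kl{positive} for $\kl{grow}$, $\kl{pull}$, $\kl{apis}$, and \kl{negative} for $\kl{crop}$, $\kl{glue}$, $\kl{apet}$. Writing the redex and contractum of that rule as $R$ and $R'$, we have $\Phi = \cfill{\cXi'}{R}$ and $\Psi = \cfill{\cXi'}{R'}$. By associativity of \kl{filling}, $\cfill{\cXiP}{\Phi} = \cfill{(\cfill{\cXiP}{\cXi'})}{R}$ and likewise $\cfill{\cXiP}{\Psi} = \cfill{(\cfill{\cXiP}{\cXi'})}{R'}$. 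By additivity of inversions, $\inv(\cfill{\cXiP}{\cXi'}) = \inv(\cXiP) + \inv(\cXi')$ has the same parity as $\inv(\cXi')$, so $\cfill{\cXiP}{\cXi'}$ is \kl{positive} (resp.\ \kl{negative}) exactly when $\cXi'$ is. Hence the very same $\Culture$-rule applies in the composite \kl{context}, yielding $\cfill{\cXiP}{\Phi} \step{\Culture} \cfill{\cXiP}{\Psi}$, which is again a $\step{}$ step.

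The only genuine work lies in the two auxiliary facts, and of these the additivity of inversions is the one to get right: the induction must follow the \kl{context} grammar of Definition~\ref{def:contexts} clause by clause, checking that a \kl{pistil} descent contributes $+1$, while a \kl{petal} descent or a \kl{bouquet} prefix contributes $0$, exactly matching the defining equations of $\inv$. Everything else is bookkeeping, and the positivity of $\cXiP$ enters at precisely one point --- ensuring the even summand $\inv(\cXiP)$ cannot flip the polarity that a $\Culture$-rule relies on.
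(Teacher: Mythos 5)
Your proposal is correct and follows essentially the same route as the paper: split on \kl{natural} versus \kl{cultural} steps, dispatch the former by contextual closure, and handle the latter by observing that $\inv(\cfill{\cXiP}{\cXi'}) = \inv(\cXiP) + \inv(\cXi')$ preserves the parity of $\cXi'$ since $\inv(\cXiP)$ is even. The only difference is that you make explicit the two auxiliary facts (associativity of \kl{filling} and additivity of $\inv$) that the paper uses silently, which is a reasonable elaboration rather than a departure.
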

\begin{proof}
  In the case of a \kl{natural} step $\Phi \step{\Nature} \Psi$, this is immediate
  by contextual closure (Definition \ref{def:derivation}). Otherwise we have a
  \kl{cultural} step $\cfill{\cXi'}{\Phi_0} \step{\Culture} \cfill{\cXi'}{\Psi_0}$.
  Then either $\cXi'$ is \kl{positive}, and $\inv(\cfill{\cXiP}{\cXi'}) = \inv(\cXiP)
  + \inv(\cXi')$ is even since it is the sum of two even numbers; or $\cXi'$ is
  \kl{negative}, and $\inv(\cfill{\cXiP}{\cXi'})$ is odd since it is the sum of an
  even and an odd number. In both cases $\cfill{\cXiP}{\cXi'}$ has the same
  polarity as $\cXi'$, and thus the same rule can be applied.
\end{proof}

\subsubsection{Proof of Theorem \ref{thm:flowers-strong-deduction}}

\begin{proof}
  Suppose that $\Phi \steps{} \Psi$. Then we have:
  \[
  \begin{array}{rlll}
    \flower{\Psi}{\Phi}
    &\steps{} &\flower{\Psi}{\Psi} &\text{(Hypothesis + Lemma \ref{lem:flowers-posclos})}\\
    &\step{\kl{poll{\da}}} &\flower{\Psi}{\cdot} &\\
    &\step{\kl{epet}} &\emptyset &
  \end{array}
  \]
  
  In the other direction, suppose that $\flower{\Psi}{\Phi} \steps{} \emptyset$.
  Then we have:
  \[
  \begin{array}{rlll}
    \Phi
    &\step{\kl{epis}} &\flower{}{\Phi} &\\
    &\step{\kl{grow}} &(\flower{\Psi}{\Phi}),(\flower{}{\Phi}) &\\
    &\step{\kl{poll{\ua}}} &(\flower{\Psi}{\Phi}),(\flower{(\flower{\Psi}{\Phi})}{\Phi}) &\\
    &\steps{} &\flower{(\flower{\Psi}{\Phi})}{\Phi} &\text{(Hypothesis + Lemma \ref{lem:flowers-posclos})}\\
    &\step{\kl{grow}} &\Psi,(\flower{(\flower{\Psi}{\Phi})}{\Phi}) &\\
    &\step{\kl{poll{\da}}} &\Psi,(\flower{(\flower{}{\Phi})}{\Phi}) &\\
    &\step{\kl{srep}} &\Psi,(\flower{}{(\flower{\Phi}{\Phi})}) &\\
    &\step{\kl{poll{\da}}} &\Psi,(\flower{}{(\flower{\Phi}{\cdot})}) &\\
    &\step{\kl{epet}} &\Psi,(\flower{}{\cdot}) &\\
    &\step{\kl{epet}} &\Psi
  \end{array}
  \]
\end{proof}

\subsubsection{Proof of Theorem \ref{thm:flowers-deduction}}

\begin{proof}
  Let $\cXi$ be some \kl{context}. If $\Psi \entails{\Nature} \Phi$, then in
  particular $\cfill{\cXi'}{\Phi} \steps{\Nature} \cfill{\cXi'}{}$ for $\cXi'
  \deq \cfill{\cXi}{\flower{\Psi}{\hole}}$. Thus we have:
  \[
  \cfill{\cXi}{\flower{\Psi}{\Phi}} \steps{\Nature}
  \cfill{\cXi}{\flower{\Psi}{\garden{{}}{{}}}} \step{\kl{epet}}
  \cfill{\cXi}{}
  \]
  In the other direction, let $\cXi$ be some \kl{context} such that
  $\chyp{\Psi}{\cXi}$. If $\entails{\Nature}
  \flower{\Psi}{\Phi}$, then in particular
  $\cfill{\cXi}{\flower{\Psi}{\Phi}} \steps{\Nature}
  \cfill{\cXi}{}$. Thus we have:
  \[
  \cfill{\cXi}{\Phi} \step{\kl{epis}}
  \cfill{\cXi}{\flower{{}}{\Phi}} \step{\kl{poll{\ua}}}
  \cfill{\cXi}{\flower{\Psi}{\Phi}} \steps{\Nature}
  \cfill{\cXi}{}
  \]
\end{proof}

\subsection{Completeness}\label{app:Proofs-completeness}

\begin{lemma}[Reflexivity]\label{lem:reflexivity}
  For any \kl{bouquet} $\Phi$, $\Phi \entails{\Nature} \Phi$.
\end{lemma}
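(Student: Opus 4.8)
The plan is to unfold the definition of \kl{hypothetical provability} and match it against the side condition of the downward pollination rule. By definition, $\Phi \entails{\Nature} \Phi$ asserts that $\cfill{\cXi}{\Phi} \steps{\Nature} \cfill{\cXi}{}$ holds for every \kl{context} $\cXi$ such that $\chyp{\Phi}{\cXi}$. So I would fix an arbitrary such context $\cXi$ and exhibit the required $\Nature$-\kl{derivation}.

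The key observation is that the hypothesis $\chyp{\Phi}{\cXi}$ is exactly the side condition needed to fire $\kl{poll{\da}}$. Reading that rule as a rewrite step (conclusion on the bottom, premiss on top), an instance whose side condition $\chyp{\Phi}{\cXi}$ is discharged by our hypothesis yields the single \kl{natural} step $\cfill{\cXi}{\Phi} \step{\kl{poll{\da}}} \cfill{\cXi}{}$, and hence $\cfill{\cXi}{\Phi} \steps{\Nature} \cfill{\cXi}{}$. Since $\cXi$ was an arbitrary context pollinating $\Phi$, this establishes $\Phi \entails{\Nature} \Phi$.

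There is essentially no obstacle here: the statement is immediate once the definitions are lined up. The only point worth noting is that $\kl{poll{\da}}$ is formulated for an entire \kl{bouquet} $\Phi$ (its side condition $\chyp{\Phi}{\cXi}$ unfolding, per Definition~\ref{def:pollination}, to $\chyp{\phi}{\cXi}$ for every $\phi \in \Phi$), so all of $\Phi$ is erased at the hole in one step rather than \kl{flower}-by-\kl{flower}. Reflexivity thus holds with a \kl{derivation} of length one.
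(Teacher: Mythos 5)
Your proof is correct and takes exactly the route of the paper, whose entire argument is ``Trivial by application of the \kl{poll{\da}} rule'': the hypothesis $\chyp{\Phi}{\cXi}$ from the definition of $\entails{\Nature}$ is precisely the side condition of \kl{poll{\da}}, giving the one-step \kl{derivation} $\cfill{\cXi}{\Phi} \step{\kl{poll{\da}}} \cfill{\cXi}{}$. Your additional remark that the rule erases the whole \kl{bouquet} at once is accurate and consistent with the textual formulation of \kl{poll{\da}} in Figure~\ref{fig:flower-calculus}.
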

\begin{proof}
  Trivial by application of the \kl{poll{\da}} rule.
\end{proof}

\begin{lemma}[Weakening]\label{lem:weakening}
  If $\tT \subseteq \tT'$ and $\tT \entails{} \phi$, then $\tT' \entails{} \phi$.
\end{lemma}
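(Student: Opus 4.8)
The plan is to unfold the definition of provability from a theory (Definition~\ref{def:theory}) and observe that the existential witness transfers directly along the inclusion $\tT \subseteq \tT'$. By that definition, $\tT \entails{} \phi$ means precisely that there exists a \kl{bouquet} $\Psi \subseteq \tT$ such that $\Psi \entails{} \phi$ holds, where $\Psi \entails{} \phi$ is hypothetical provability in the full \kl{flower calculus}.

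First I would fix such a witness $\Psi$. Since $\Psi \subseteq \tT$ by assumption and $\tT \subseteq \tT'$ by hypothesis, transitivity of set inclusion yields $\Psi \subseteq \tT'$. Here one should be mildly careful, since $\Psi$ is a \kl{bouquet}, i.e. a multiset, whereas $\tT$ and $\tT'$ are \kl{theories}, i.e. sets; but following the convention of Definition~\ref{def:theory} the relation $\Psi \subseteq \tT$ is read as inclusion of the underlying set of $\Psi$, so transitivity applies verbatim.

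Finally, this same $\Psi$ now witnesses $\tT' \entails{} \phi$: we have both $\Psi \subseteq \tT'$ and $\Psi \entails{} \phi$, which is exactly the defining condition for $\tT' \entails{} \phi$. Note that the inner judgement $\Psi \entails{} \phi$ is reused untouched, so none of the contextual \kl{derivations} guaranteed by \kl{hypothetical provability} need to be reconstructed, and no rule of the \kl{flower calculus} is invoked. Accordingly there is no genuine obstacle in this lemma: it is pure monotonicity of the existential quantifier over sub-\kl{bouquets}, and the only point worth flagging is the multiset/set coercion noted above.
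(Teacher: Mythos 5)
Your proof is correct and is exactly the argument the paper intends: the paper's own proof is a one-line appeal to Definition~\ref{def:theory}, and you have simply spelled out the witness-transfer along the inclusion $\tT \subseteq \tT'$. Nothing further is needed.
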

\begin{proof}
  This follows immediately from Definition \ref{def:theory}.
\end{proof}

In the following, we suppose some enumeration $(\phi_n)_{n \in \nats}$ of
$\flowers$, which should be definable constructively given the inductive nature
of \kl{flowers}. Let $\psi \in \flowers$, and $\tT$ a \consistent{\psi} \kl{theory}. We
now define the \emph{completion procedure}, which constructs an extension
$\completion{\tT} \supseteq \tT$ with the property that $\completion{\tT}$ is
\consistent{\psi} and \complete{\psi}.


\begin{definition}[$n$-completion]
\AP
  The \emph{$n$-completion} $\intro*\ncompletion{\tT}{n}$ of $\tT$ is defined
  recursively by:
  \begin{mathpar}
    \ncompletion{\tT}{0} = \tT \and
    \ncompletion{\tT}{n+1} =
    \begin{cases}
      \ncompletion{\tT}{n} \cup \phi_n &\text{if $\ncompletion{\tT}{n} \cup \phi_n$ is \consistent{\psi}} \\
      \ncompletion{\tT}{n} &\text{otherwise}
    \end{cases}
  \end{mathpar}
\end{definition}

\begin{definition}[Completion]
\AP
  The \emph{completion} $\intro*\completion{\tT}$ of $\tT$ is the denumerable
  union of all $n$-completions:
  $$\completion{\tT} = \bigcup_{n \in \nats}{\ncompletion{\tT}{n}}$$
\end{definition}

\begin{lemma}\label{lem:completion-consistent-complete}
  For every $\psi \in \flowers$, $\completion{\tT}$ is \consistent{\psi} and
  \complete{\psi}.
\end{lemma}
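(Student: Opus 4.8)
The plan is to run the standard Lindenbaum-style argument, proving the two halves separately and relying on the fact that provability from a \kl{theory} only ever uses a \emph{finite} subset of its hypotheses (Definition~\ref{def:theory}).

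First I would establish, by induction on $n$, that every stage $\ncompletion{\tT}{n}$ is \consistent{\psi}. The base case $\ncompletion{\tT}{0} = \tT$ holds by assumption, and the inductive step is immediate from the definition of $\ncompletion{\tT}{n+1}$: the element $\phi_n$ is adjoined only when the result stays \consistent{\psi}, and otherwise the stage is left unchanged, so in either case \kl[\consistent]{$\psi$-consistency} is preserved.

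For the consistency of $\completion{\tT}$ itself, I would argue by contradiction. Suppose $\completion{\tT} \entails{\Nature} \psi$. By Definition~\ref{def:theory} this is witnessed by a finite \kl{bouquet} $\Psi \subseteq \completion{\tT}$ with $\Psi \entails{\Nature} \psi$. Since $\completion{\tT} = \bigcup_{n \in \nats} \ncompletion{\tT}{n}$ is the union of the increasing chain $(\ncompletion{\tT}{n})_n$ and $\Psi$ is finite, there is some $N$ with $\Psi \subseteq \ncompletion{\tT}{N}$. Weakening (Lemma~\ref{lem:weakening}, whose proof applies verbatim to $\entails{\Nature}$) then gives $\ncompletion{\tT}{N} \entails{\Nature} \psi$, contradicting that $\ncompletion{\tT}{N}$ is \consistent{\psi}. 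This compactness step---pulling an entailment from the infinite union back to a finite stage---is the only genuinely load-bearing part of the argument, and it rests entirely on provability from a \kl{theory} being defined through finite subsets.

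For the completeness half, I would fix an arbitrary $\phi \in \flowers$ and locate it in the enumeration as $\phi = \phi_n$. Then I inspect the transition from $\ncompletion{\tT}{n}$ to $\ncompletion{\tT}{n+1}$. If $\ncompletion{\tT}{n} \cup \phi_n$ is \consistent{\psi}, the construction adjoins $\phi_n$, so $\phi = \phi_n \in \ncompletion{\tT}{n+1} \subseteq \completion{\tT}$. Otherwise $\ncompletion{\tT}{n} \cup \phi_n \entails{\Nature} \psi$, that is $\ncompletion{\tT}{n}, \phi \entails{\Nature} \psi$, and since $\ncompletion{\tT}{n} \subseteq \completion{\tT}$, weakening yields $\completion{\tT}, \phi \entails{\Nature} \psi$. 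In either case the defining disjunction of \kl[\complete]{$\psi$-completeness} is satisfied, so $\completion{\tT}$ is \complete{\psi}, which completes the proof.
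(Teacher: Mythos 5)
Your proof is correct and, for the consistency half, follows exactly the paper's argument: a stage-wise induction showing each $\ncompletion{\tT}{n}$ is $\psi$-consistent, followed by the finiteness step that pulls a hypothetical derivation of $\psi$ from $\completion{\tT}$ back into a single finite stage $\ncompletion{\tT}{m}$ and contradicts its consistency via weakening. The only difference is that you also spell out the $\psi$-completeness half (locating $\phi$ as $\phi_n$ in the enumeration and inspecting the $n$-th step of the construction), which the paper's proof silently omits as immediate; your treatment of that case is correct.
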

\begin{proof}
  For \consistency{\psi}, it is immediate by induction on $n$ that
  $\ncompletion{\tT}{n}$ is \consistent{\psi}. Then suppose that
  $\completion{\tT} \entails{\Nature} \psi$, that is there is some \kl{bouquet}
  $\Phi \subseteq \completion{\tT}$ such that $\Phi \entails{\Nature} \psi$. For
  each $\phi \in \Phi$, there is some rank $n$ such that $\phi \in
  \ncompletion{\tT}{n}$. Let $m$ be the greatest such rank. Then $\Phi \subseteq
  \ncompletion{\tT}{m}$, and thus by weakening (Lemma \ref{lem:weakening}) $\Phi
  \nentails{\Nature} \psi$. Contradiction.
  
  For \completeness{\psi}, suppose that there is some $\phi$ such that
  $\completion{\tT}, \phi \nentails{\Nature} \psi$ and $\phi \not\in
  \completion{\tT}$, and let $\phi = \phi_n$. Then $\completion{\tT} \cup
  \phi_n$ is \consistent{\psi}, and thus by weakening (Lemma
  \ref{lem:weakening}) so is $\ncompletion{\tT}{n} \cup \phi_n$. This entails
  that $\phi_n \in \ncompletion{\tT}{n+1} \subseteq \completion{\tT}$.
  Contradiction.
\end{proof}

\subsubsection{Proof of Lemma \ref{lem:adequacy}}

\begin{proof}
  By induction on $\fdepth{\phi}$.
  \begin{itemize}
    \item Suppose $\phi = p(\tvec{x})$.
    \begin{enumerate}
      \item By definition of \kl{forcing} (Definition \ref{def:forcing}) and $\rewolF{\psi}$
      (Definition \ref{def:kcanon}), $\eforces{\tT}{p(\tvec{x})}{\sigma}$ precisely when
      $\sigma(p(\tvec{x})) \in \tT$.
      \item Suppose that $\eforces{\tT}{\phi}{\sigma}$, that is $\sigma(\phi)
      \in \tT$. Then by weakening (Lemma \ref{lem:weakening}), we get $\sigma(\phi)
      \nentails{\Nature} \sigma(\phi)$. But this is impossible by reflexivity of
      ${\entails{}}$ (Lemma \ref{lem:reflexivity}).
    \end{enumerate}
    \item Suppose $\phi =
    \flower{\garden{\bx}{\Phi}}{\fset{i}{n}{\garden{\bx_i}{\Phi_i}}}$.
    \begin{enumerate}
      \item Let $\tU \supseteq \tT$ be a \consistent{\psi} and \complete{\psi}
      \kl{theory}. Obviously $\sigma(\phi) =
      \flower{\garden{\bx}{\restr{\sigma}{\bx}(\Phi)}}{\fset{i}{n}{\garden{\bx_i}{\restr{\sigma}{\bx
      \cup \bx_i}(\Phi_i)}}} \in \tU$, and thus by Proposition \ref{prop:inv-elem-flower},
      for every \kl{substitution} $\tau$, either $ \update{}{\bx}{\tau} \circ
      \restr{\sigma}{\bx \cup \bx_i}(\Phi_i) =
      \update{\update{\sigma}{\bx}{\tau}}{\bx_i}{}(\Phi_i) \subseteq \tU$ for
      some $1 \leq i \leq n$, or $\tU \nentails{\Nature} \update{}{\bx}{\tau} \circ
      \restr{\sigma}{\bx}(\Phi) = \update{\sigma}{\bx}{\tau}(\Phi)$. In the
      first case, we get
      $\eforces{\tU}{\Phi_i}{\update{\update{\sigma}{\bx}{\tau}}{\bx_i}{}}$ by
      IH. In the second case, we get
      $\neforces{\tU}{\Phi}{\restr{\sigma}{\bx}\tau}$ by IH. In other words,
      $\eforces{\tU}{\Phi}{\restr{\sigma}{\bx}\tau}$ implies
      $\eforces{\tU}{\Phi_i}{\update{\update{\sigma}{\bx}{\tau}}{\bx_i}{}}$,
      that is $\eforces{\tT}{\phi}{\sigma}$.
      \item By Proposition \ref{prop:inv-deriv-flower}, for every $1 \leq i \leq
      n$ and \kl{substitution} $\tau$, there is some $\phi_i \in \Phi_i$ such
      that $\tT, \restr{\sigma}{\bx}(\Phi) \nentails{\Nature}
      \update{}{\bx_i}{\tau} \circ \restr{\sigma}{\bx \cup \bx_i}(\phi_i) =
      \update{\update{\sigma}{\bx}{}}{\bx_i}{\tau}(\phi_i)$. By the completion
      procedure, we get a \kl{theory} $\tU = \completion{\tT \cup
      \restr{\sigma}{\bx}(\Phi)} \supseteq \tT \cup \restr{\sigma}{\bx}(\Phi)$
      which is both
      \consistent{\update{\update{\sigma}{\bx}{}}{\bx_i}{\tau}(\phi_i)} and
      \complete{\update{\update{\sigma}{\bx}{}}{\bx_i}{\tau}(\phi_i)} (Lemma
      \ref{lem:completion-consistent-complete}). Then by IH,
      $\eforces{\tU}{\Phi}{\update{\sigma}{\bx}{}}$ since
      $\update{\sigma}{\bx}{}(\Phi) \subseteq \tU$, and
      $\neforces{\tU}{\phi_i}{\update{\update{\sigma}{\bx}{}}{\bx_i}{\tau}}$
      since $\tU$ is
      \consistent{\update{\update{\sigma}{\bx}{}}{\bx_i}{\tau}(\phi_i)}, that is
      $\neforces{\tT}{\phi}{\sigma}$.
    \end{enumerate}
  \end{itemize}
\end{proof}

\subsubsection{Proof of Theorem \ref{thm:flowers-completeness}}

\begin{proof}
  Let $\tT$ be a \consistent{\psi} \kl{theory}. We prove that $\tT \notkentails{} \psi$ by
  showing in particular that $\tT \notkentails{\rewolF{\psi}} \psi$, and more
  specifically that $\eforces{\completion{\tT}}{\tT}{\idsubst}$ but
  $\neforces{\completion{\tT}}{\psi}{\idsubst}$. Then it follows by (classical)
  contraposition that $\tT \kentails{} \psi$ implies $\tT \entails{\Nature} \psi$ for
  any $\psi$ and any $\tT$, and thus we can conclude.
  \begin{itemize}
    \item Let $\phi \in \tT$. Then $\idsubst(\phi) = \phi \in \completion{\tT}$,
    thus by \kl[\consistent]{$\psi$-consistency} and \kl[\complete]{$\psi$-completeness} of the
    completion (Lemma \ref{lem:completion-consistent-complete}), one can apply
    adequacy (Lemma \ref{lem:adequacy}) to get
    $\eforces{\completion{\tT}}{\phi}{\idsubst}$.
    \item Similarly, we can apply adequacy (Lemma \ref{lem:adequacy}) to get
    $\neforces{\completion{\tT}}{\psi}{\idsubst}$.
  \end{itemize}
\end{proof}

\end{document}